\newcommand{\customdash}[1]{\rule[0.5ex]{#1}{0.4pt}}
\newcommand{\rank}{\textsf{rank}}
\newcommand{\select}{\textsf{select}}
\newcommand{\findopen}{\textsf{findopen}}
\newcommand{\findclose}{\textsf{findclose}}
\newcommand{\MinC}{\textsf{MinC}}
\newcommand{\MaxC}{\textsf{MaxC}}
\newcommand{\lr}{\textsf{lr}}
\newcommand{\lp}{\textsf{lp}}
\newcommand{\rp}{\textsf{rp}}
\newcommand{\lrp}{\textsf{lrp}}
\newcommand{\dabove}{\textsf{above}}
\newcommand{\dbelow}{\textsf{below}}
\newcommand{\dleft}{\textsf{left}}
\newcommand{\dright}{\textsf{right}}
\newcommand{\daboveset}{\textsf{above\_set}}
\newcommand{\dbelowset}{\textsf{below\_set}}
\newcommand{\dleftset}{\textsf{left\_set}}
\newcommand{\drightset}{\textsf{right\_set}}
\newcommand{\psv}{\textsf{PSV}}
\newcommand{\plv}{\textsf{PLV}}
\newcommand{\nsv}{\textsf{NSV}}
\newcommand{\nlv}{\textsf{NLV}}
\newcommand{\rminq}{\textsf{RMin}}
\newcommand{\rmaxq}{\textsf{RMax}}
\newcommand{\pl}{\textsf{parent\_label}}
\newcommand{\rcl}{\textsf{right\_child\_label}}
\newcommand{\lcl}{\textsf{left\_child\_label}}
\newcommand{\nxt}{\textsf{next}}
\newcommand{\RNum}[1]{\uppercase\expandafter{\romannumeral #1\relax}}
\newtheorem{theorem}{Theorem}
\newtheorem{corollary}{Corollary}
\newtheorem{lemma}{Lemma}
\newtheorem{proposition}{Proposition}
\theoremstyle{definition}
\newtheorem{example}{Example}
\title{Succinct Data Structures for Baxter Permutation and Related Families}
\author{
    Sankardeep Chakraborty \\
    The University of Tokyo, Japan \\
    sankardeep.chakraborty@gmail.com
    \and
    Seungbum Jo \\
    Chungnam National University, South Korea \\
    sbjo@cnu.ac.kr
    \and
    Geunho Kim \\
    Pohang University of Science and Technology \\
    gnhokim@postech.ac.kr
    \and
    Kunihiko Sadakane \\
    The University of Tokyo, Japan \\
    sada@mist.i.u-tokyo.ac.jp
}
\date{}
\begin{document}
\maketitle

\begin{abstract}
A permutation $\pi: [n] \rightarrow [n]$ is a Baxter permutation if and only if it does not contain either of the patterns $2\customdash{0.4em}41\customdash{0.4em}3$ and $3\customdash{0.4em}14\customdash{0.4em}2$. Baxter permutations are one of the most widely studied subclasses of general permutation due to their connections with various combinatorial objects such as plane bipolar orientations and mosaic floorplans, etc. In this paper, we introduce a novel succinct representation (i.e., using $o(n)$ additional bits from their information-theoretical lower bounds) for Baxter permutations of size $n$ that supports $\pi(i)$ and $\pi^{-1}(j)$ queries for any $i \in [n]$ in $O(f_1(n))$ and $O(f_2(n))$ time, respectively. Here, $f_1(n)$ and $f_2(n)$ are arbitrary increasing functions that satisfy the conditions $\omega(\log n)$ and $\omega(\log^2 n)$, respectively. This stands out as the first succinct representation with sub-linear worst-case query times for Baxter permutations. 
The main idea is to traverse the Cartesian tree on the permutation using a simple yet elegant \textit{two-stack algorithm} which traverses the nodes in ascending order of their corresponding labels and stores the necessary information throughout the algorithm. 

Additionally, we consider a subclass of Baxter permutations called \textit{separable permutations}, which do not contain either of the patterns $2\customdash{0.4em}4\customdash{0.4em}1\customdash{0.4em}3$ and $3\customdash{0.4em}1\customdash{0.4em}4\customdash{0.4em}2$. 
In this paper, we provide the first succinct representation of the separable permutation $\rho: [n] \rightarrow [n]$ of size $n$ that supports both $\rho(i)$ and $\rho^{-1}(j)$ queries in $O(1)$ time. In particular, this result circumvents Golynski's [SODA 2009] lower bound result for trade-offs between redundancy and $\rho(i)$ and $\rho^{-1}(j)$ queries. 
 
Moreover, as applications of these permutations with the queries, we also introduce the first succinct representations for mosaic/slicing floorplans, and plane bipolar orientations, which can further support specific navigational queries on them efficiently.
\end{abstract}

\section{Introduction}
A permutation $\pi: [n] \rightarrow [n]$  is a Baxter permutation if and only if there are no three indices $i<j<k$ that satisfy $\pi(j+1) < \pi(i) < \pi(k) < \pi(j)$ or $\pi(j) < \pi(k) < \pi(i) < \pi(j+1)$ (that is, $\pi$ does not have pattern $2\customdash{0.4em}41\customdash{0.4em}3$ or $3\customdash{0.4em}14\customdash{0.4em}2$)~\cite{Baxter:original}. For example, $3~5~2~1~4$ is not a Baxter permutation because the pattern $2\customdash{0.4em}41\customdash{0.4em}3$ appears ($\pi(2+1)=2 < \pi(1)=3 < \pi(5)=4 < \pi(2)=5$ holds). A Baxter permutation $\pi$ is \textit{alternating} if the elements in $\pi$ rise and descend alternately. One can also consider \textit{separable permutations}, which are defined as the permutations without two patterns $2\customdash{0.4em}4\customdash{0.4em}1\customdash{0.4em}3$ and $3\customdash{0.4em}1\customdash{0.4em}4\customdash{0.4em}2$~\cite{DBLP:journals/ipl/BoseBL98}.
From the definitions, any separable permutation
is also a Baxter permutation, but the converse does not hold. For example, $2~5~6~3~1~4~8~7$ is a Baxter permutation but not a separable permutation because of the appearance of the pattern $2\customdash{0.4em}4\customdash{0.4em}1\customdash{0.4em}3$ ($2~5~1~4$). 

In this paper, we focus on the design of a succinct data structure for a Baxter permutation $\pi$ of size $n$, 
i.e., the data structure that uses up to $o(n)$ extra bits in addition to the information-theoretical lower bound along with supporting relevant queries efficiently. Mainly, we consider the following two fundamental queries on $\pi$: (1) $\pi(i)$ returns the $i$-th value of $\pi$, and (2) $\pi^{-1}(j)$ returns the index $i$ of $\pi(i) = j$. We also consider the design of a succinct data structure for a separable permutation $\rho$ of size $n$ that supports $\rho(i)$ and $\rho^{-1}(j)$ queries. 
In the rest of this paper, $\log$ denotes the logarithm to the base $2$, and we assume a word-RAM model with  $\Theta(\log n)$-bit word size, where $n$ is the size of the input. Also, we ignore all ceiling and floor operations that do not impact the final results.

\subsection{Previous Results}
For general permutations, there exist upper and lower bound results for succinct data structures supporting both $\pi(i)$ and $\pi^{-1}(j)$ queries in sub-linear time~\cite{munro2012succinct, golynski2009cell}. However, to the best of our knowledge, there does not exist any data structures for efficiently supporting these queries on any subclass of general permutations. One can consider suffix arrays~\cite{DBLP:journals/siamcomp/GrossiV05} as a subclass of general permutations, but their space consumption majorly depends on the entropy of input strings. This implies that for certain input strings, $\Omega(n \log n)$ bits (asymptotically the same space needed for storing general permutations) are necessary for storing the suffix arrays on them.

Baxter permutation is one of the most widely studied classes of permutations~\cite{bona} because diverse combinatorial objects, for example, plane bipolar orientations, mosaic floorplans, twin pairs of binary trees, etc. have a bijection with Baxter permutations~\cite{DBLP:journals/dam/AckermanBP06, DBLP:journals/jct/FelsnerFNO11}. Note that some of these objects are used in many applied areas. For example, mosaic floorplans are used in large-scale chip design~\cite{lengauer2012combinatorial}, plane bipolar orientations are used to draw graphs in various flavors (visibility~\cite{TamassiaT86}, straight-line drawing~\cite{Fusy06}), and floorplan partitioning is used to design a model for stochastic processes~\cite{nakano2020baxter}.
The number of distinct Baxter permutations of size $n$ is $\Theta(8^n/n^4)$~\cite{shen2003bounds}, which implies that at least $3n-o(n)$ bits are necessary to store a Baxter permutation of size $n$. Furthermore, the number of distinct alternating Baxter permutations of size $2n$ (resp. $2n+1$) is $(c_n)^2$ (resp. $c_n c_{n+1}$) where $c_n = \frac{(2n)!}{(n+1)!n!}$ is the $n$-th Catalan number~\cite{cori1986shuffle}. Therefore, at least $2n-o(n)$ bits are necessary to store an alternating Baxter permutation of size $n$. 
Dulucq and Guibert~\cite{dulucq1996stack} established a bijection between Baxter permutations $\pi$ of size $n$ and a pair of unlabeled binary trees, called \textit{twin binary trees}, which are essentially equivalent to the pair of unlabeled minimum and maximum Cartesian trees~\cite{vuillemin1980unifying} for $\pi$. They provided methods for constructing $\pi$ from the structure of twin binary trees and vice versa, both of which require $O(n)$ time. Furthermore, they presented a representation scheme that requires at most $8n$ bits for Baxter permutations of size $n$ and $4n$ bits for alternating Baxter permutations of size $n$. Gawrychowski and Nicholson proposed a $3n$-bit representation that stores the tree structures of alternating representations of both minimum and maximum Cartesian trees~\cite{gawrychowski2015optimal}.
Based on the bijection established in~\cite{dulucq1996stack}, the representation in~\cite{gawrychowski2015optimal} gives a succinct representation of a Baxter permutation of size $n$. Moreover, this representation can efficiently support a wide range of tree navigational queries on these trees in $O(1)$ time using only $o(n)$ additional bits. 
However, surprisingly, all of these previous representations of $\pi$ crucially fail to address both, perhaps the most natural, $\pi(i)$ and $\pi^{-1}(j)$ queries efficiently as these queries have a worst-case time complexity of $\Theta(n)$.

Separable permutation was introduced by Bose et al.\cite{DBLP:journals/ipl/BoseBL98} as a specific case of patterns for the permutation matching problem. It is known that the number of separable permutations of size $n$ equals the \textit{large Schr{\"o}der number} $A_n$, which is $\Theta\left(\frac{(3+2\sqrt{2})^n}{n^{1.5}}\right)$\cite{YaoCCG03}. Consequently, to store a separable permutation $\rho$ of size $n$, at least $n \log (3+2\sqrt{2}) - O(\log n) \simeq 2.54n - O(\log n)$ bits are necessary. Bose et al.~\cite{DBLP:journals/ipl/BoseBL98} also showed that $\rho$ can be encoded as a \textit{separable tree}, which is a labeled tree with at most $2n-1$ nodes. Thus, by storing the separable tree using $O(n \log n)$ bits, one can support both $\rho(i)$ and $\rho^{-1}(j)$ queries in $O(1)$ time using standard tree navigation queries. Yao et al.~\cite{YaoCCG03} showed a bijection between all canonical forms of separable trees with $n$ leaves and the separable permutations of size $n$. To the best of our knowledge, there exists no $o(n \log n)$-bit representation for storing either separable permutations or their corresponding separable trees that can be constructed in polynomial time while supporting $\rho$ queries in sub-linear time.

A mosaic floorplan is a collection of rectangular objects that partition a single rectangular region. Due to its broad range of applications, there is a long history of results (see~\cite{he2014simple,YaoCCG03} and the references therein) concerning the representation of mosaic floorplans of size $n$ in small space~\cite{hong2000corner, DBLP:journals/dam/AckermanBP06, he2014simple}. 
Ackerman et al.~\cite{DBLP:journals/dam/AckermanBP06} presented a linear-time algorithm to construct a mosaic floorplan of size $n$ from its corresponding Baxter permutation of size $n$ and vice versa. Building on this construction algorithm, He~\cite{he2014simple} proposed the current state-of-the-art, a succinct representation of a mosaic floorplan of size $n$ using $3n-3$ bits. Again, all of these previous representations primarily focus on constructing a complete mosaic floorplan structure and do not consider supporting navigational queries, e.g., return a rectangular object immediately adjacent to the query object in terms of being left, right, above, or below it, without constructing it completely. Note that these queries have strong applications like the placement of blocks on the chip~\cite{DBLP:journals/dam/AckermanBP06, floorplanchip}. 
There also exists a subclass of mosaic floorplans known as \textit{slicing floorplans}, which are mosaic floorplans whose rectangular objects are generated by recursively dividing a single rectangle region either horizontally or vertically. The simplicity of a slicing floorplan makes it an efficient solution for optimization problems, as stated in~\cite{1346395}.
Yao et al.~\cite{YaoCCG03} showed there exists a bijection between separable permutations of size $n$ and slicing floorplans with $n$ rectangular objects. They also showed that separable trees can be used to represent the positions of rectangular objects in the corresponding slicing floorplans. However, to the best of our knowledge, there exists no representation of a slicing floorplan using $o(n \log n)$ bits that supports the above queries without reconstructing it.

\subsection{Our Results and Main Idea}
In this paper, we first introduce a $(3n+o(n))$-bit representation of a Baxter permutation $\pi$ of size $n$ that can support $\pi(i)$ and $\pi^{-1}(j)$ queries in $O(f_1(n))$ and $O(f_2(n))$ time respectively. Here, $f_1(n)$ and $f_2(n)$ are any increasing functions that satisfy $\omega(\log n)$ and $\omega(\log^2 n)$, respectively. We also show that the same representation provides a $(2n+o(n))$-bit representation of an alternating Baxter permutation of size $n$ with the same query times. These are the first succinct representations of Baxter and alternating Baxter permutations that can support the queries in sub-linear time in the worst case.

Our main idea of the representation is as follows. To represent $\pi$, it suffices to store the minimum or maximum Cartesian tree defined on $\pi$ along with their labels. Here the main challenging part is to decode the label of any node in either of the trees in sub-linear time, using $o(n)$-bit auxiliary structures. Note that all the previous representations either require linear time for the decoding or explicitly store the labels using $O(n \log n)$ bits.
To address this issue, we first introduce an algorithm that labels the nodes in the minimum Cartesian tree 
in ascending order of their labels. This algorithm employs two stacks and only requires information on whether each node with label $i$ is a left or right child of its parent, as well as whether it has left and/or right children. Note that unlike the algorithm of~\cite{dulucq1996stack}, our algorithm does not use the structure of the maximum Cartesian tree. We then proceed to construct a representation using at most $3n+o(n)$ bits, which stores the information used throughout our labeling algorithm. We show that this representation can decode the minimum Cartesian tree, including the labels on its nodes. This approach was not considered in previous succinct representations that focused on storing the tree structures of both minimum and maximum Cartesian trees, or their variants.
To support the queries efficiently, we show that given any label of a node in the minimum or maximum Cartesian tree, our representation can decode the labels of its parent, left child, and right child in $O(1)$ time with $o(n)$-bit auxiliary structures. Consequently, we can decode any $O(\log n)$-size substring of the balanced parentheses of both  minimum and maximum Cartesian trees with dummy nodes to locate nodes according to their inorder traversal (see Section~\ref{sec:inorder} for a detailed definition of the inorder traversal) on $\pi$ in $O(f_1(n))$ time. 
This decoding step plays a key role in our query algorithms, which can be achieved from non-trivial properties of our representation, and minimum and maximum Cartesian trees on Baxter permutations.
As a result, our representation not only supports $\pi(i)$ and $\pi^{-1}(j)$ queries, but also supports range minimum/maximum and previous/next larger/smaller value queries efficiently.

Next, we give a succinct representation of separable permutation $\rho$ of size $n$, which supports all the operations above in $O(1)$ time. 
Our result implies the Golynski's lower bound result~\cite{golynski2009cell} for trade-offs between redundancy and $\rho(i)$ and $\rho^{-1}(j)$ queries does not hold in separable permutations.
The main idea of the representation is to store the separable tree of $\rho$ using the \textit{tree covering algorithm}~\cite{Farzan2014}, where each micro-tree is stored as its corresponding separable permutation to achieve succinct space. Note that a similar approach has been employed for succinct representations on some graph classes~\cite{DBLP:conf/cpm/BlellochF10, DBLP:conf/dcc/ChakrabortyJSS21}. However, due to the different structure of the separable tree compared to the Cartesian tree, the utilization of non-trivial auxiliary structures is crucial for achieving $O(1)$ query time on the representation.

Finally, as applications of our succinct representations of Baxter and separable permutations, 
we present succinct data structures of mosaic and slicing floorplans and plane bipolar orientations that support various navigational queries on them efficiently. 
While construction algorithms for these structures already exist from their corresponding Baxter or separable permutations~\cite{DBLP:journals/dam/AckermanBP06, BonichonBF08},
we show that the navigational queries can be answered using a constant number of $\pi(i)$ (or $\rho(i)$), range minimum/maximum, and previous/next smaller/larger value queries on their respective permutations, which also require some nontrivial observations from the construction algorithms.
This implies that our succinct representations allow for the first time succinct representations of these structures that support various navigation queries on them in sub-linear time. 
For example, we consider two queries on mosaic and slicing floorplans as (1) checking whether two rectangular objects are adjacent, and (2) reporting all rectangular objects adjacent to the given rectangular object. 
Note that the query of (2) was previously addressed in~\cite{DBLP:journals/dam/AckermanBP06}, as the \textit{direct relation set} (DRS) query, which was computed in $O(n)$ time, and important for the actual placement of the blocks on the chip. 

The paper is organized as follows. After introducing some preliminaries in Section~\ref{pre:BP}, we introduce the representation of a Baxter permutation $\pi$ of size $n$ in Section~\ref{sec:rep}. 
In Section~\ref{sec:query}, we explain how to support $\pi(i)$ and $\pi^{-1}(i)$ queries on $\pi$, in addition to tree navigational queries on both the minimum and maximum Cartesian trees. In Section~\ref{sec:separable_represent}, we present a succinct representation of separable permutation $\rho$ that can support $\rho(i)$ and $\rho^{-1}(j)$ in $O(1)$ time.
Finally, in Section~\ref{sec:application}, we show how our representations of $\pi$ and $\rho$ can be applied to construct succinct representations for mosaic/slicing floorplans and plane bipolar orientations, while efficiently supporting certain navigational queries.

\section{Preliminaries}\label{pre:BP}
In this section, we introduce some preliminaries that will be used in the rest of the paper.
\\\\
\noindent\textbf{Cartesian trees.} 
Given a sequence $S = (s_1, s_2, \dots, s_n)$ of size $n$ from a total order, a \textit{minimum Cartesian tree} of $S$, denoted as $\MinC{}(S)$ is %sada a binary tree 
a binary tree constructed as follows~\cite{vuillemin1980unifying}:
(a) the root of the $\MinC{}(S)$ is labeled as the minimum element in $S$ (b) if the label of the root is $s_i$, the left and right subtree of $S$ are $\MinC{}(S_1)$ and $\MinC{}(S_2)$, respectively where $S_1 = (s_1, s_2, \dots, s_{i-1})$ and $S_2 = (s_{i+1}, s_{i+2}, \dots, s_n)$. 
One can also define a \text{maximum Cartesian tree} of $S$ (denoted as $\MaxC{}(S)$) analogously.
From the definition, in both $\MinC{}(S)$ and $\MaxC{}(S)$, any node with inorder $i$ is labeled with $s_i$.
\\\\
\noindent\textbf{Balanced parentheses.} \label{pre:BP2}
Given an ordered tree $T$ of $n$ nodes, the BP of $T$ (denoted as $BP(T)$) is defined as a sequence of open and closed parentheses constructed as follows~\cite{munro2001succinct}. 
One traverses $T$ from the root node in depth-first search (DFS) order. During the traversal, for each node $p \in T$, we append `(' when we visit the node $p$ for the first time, and append `)' when all the nodes on the subtree rooted at $p$ are visited, and we leave the node $p$. From the construction, it is clear that the size of $BP(T)$ is $2n$ bits, and always balanced. 
Munro and Raman~\cite{munro2001succinct} showed that both (a) $\findopen{}(i)$: returns the position of matching open parenthesis of the close parenthesis at $i$, and (b) $\findclose{}(i)$: returns the position of matching close parenthesis of the open parenthesis at $i$, queries can be supported on $BP(T)$ in $O(t(n))$ time with $o(n)$-bit auxiliary structures, when any $O(\log n)$-bit substring of the $BP(T)$ can be decoded in $t(n)$ time. 
Furthermore, it is known that the wide range of tree navigational queries on $T$ also can be answered in $O(t(n))$ time using $BP(T)$ with $o(n)$-bit auxiliary structures~\cite{navarro2014fully}: Here, each node is given and returned as the position of the open parenthesis that appended when the node is first visited during the construction of $BP(T)$ (for the full list of the queries, please refer to Table \RNum{1} in \cite{navarro2014fully}).
\\\\
\noindent\textbf{Rank and Select queries.}
Given a sequence $S = (s_1, s_2, \dots, s_n) \in \{0, \dots, \sigma-1\}^n$ of size $n$ over an alphabet of size $\sigma$, (a) $\rank{}_{S}(a, i)$ returns the number of occurrence
of $a \in \{0, \dots, \sigma-1\}$ in $(s_1, s_2, \dots, s_i)$, and (b) $\select{}_{S}(a, j)$ returns the first position of the $j$-th occurrence of $a \in \{0, \dots, \sigma-1\}$ in $S$ (in the rest of this paper, we omit $S$ if it is clear from the context). The following data structures are known, which can support both $\rank{}$ and $\select{}$ queries efficiently using succinct space~\cite{raman2007succinct,belazzougui2015optimal}:
(1) suppose $\sigma = 2$, and $S$ has $m$ $1$s. Then there exists a $(\log {n \choose m} + o(n))$-bit data structure that supports both $\rank{}$ and $\select{}$ queries in $O(1)$ time. The data structure can also decode any $O(\log n)$ consecutive bits of $S$ in $O(1)$ time, (2) there exists an $(n \log \sigma +o(n))$-bit data structure that can support both $\rank{}$ and $\select{}$ queries in $O(1)$ time, and (3) if $\sigma = O(1)$ and one can access any $O(\log n)$-length sequence of $S$ in $t(n)$ time, one can support both $\rank{}$ and $\select{}$ queries in $O(t(n))$ time using $o(n)$-bit auxiliary structures.
\\\\
\noindent\textbf{Range minimum and previous/next smaller value queries. }
Given a sequence $S = (s_1, s_2, \dots, s_n)$ of size $n$ from a total order with two positions $i$ and $j$ with $i \le j$, the \textit{range minimum query} $\rminq{}(i, j)$ on $S$ returns the position of the smallest element within the range $s_i, \dots, s_j$. Similarly, a \textit{range maximum query} $\rmaxq{}(i, j)$ on $S$ is defined to find the position of the largest element within the same range.

In addition, one can define \textit{previous (resp. next) smaller value queries} at the position $i$ on $S$, denoted as $\psv(i)$ (resp. $\nsv(i)$), which returns the nearest position from $i$ to the left (resp. right) whose value is smaller than $s_i$. If there is no such elements, the query returns $0$ (resp. $n+1$). 
One can also define \textit{previous (resp. next) larger value queries}, denoted as $\plv(i)$ (resp. $\nlv(i)$) analogously.

It is known that if $S$ is a permutation, $\rminq$, $\psv$, and $\nsv$ queries on $S$ can be answered in $O(1)$ time, given a BP of $\MinC(S)$ with $o(n)$ bit auxiliary structures~\cite{navarro2014fully, DBLP:journals/tcs/Fischer11}.
\\

\noindent\textbf{Tree Covering.}
Here, we briefly outline Farzan and Munro's~\cite{Farzan2014} tree covering representation and its application in constructing a succinct tree data structure. The core idea involves decomposing the input tree into {\it mini-trees} and further breaking them down into smaller units called {\it micro-trees}. These micro-trees can be efficiently stored in a compact precomputed table. The shared roots among mini-trees enable the representation of the entire tree by focusing only on connections and links between these subtrees. We summarize the main result of Farzan and Munro's algorithm in the following theorem.

\begin{theorem}[\cite{Farzan2014}]\label{TC}
For a rooted ordered tree with $n$ nodes and a positive integer $1 \leq \ell_1 \le n$,
one can decompose the trees into subtrees satisfying the following conditions:
(1) each subtree contains at most $2\ell_1$ nodes,
(2) the number of subtrees is $O(n/\ell_1)$,
(3) each subtree has at most one outgoing edge, apart from those from the root of the subtree.
\end{theorem}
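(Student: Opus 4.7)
The plan is to exhibit a concrete greedy decomposition driven by precomputed subtree sizes and to verify the three conditions by a careful amortized counting argument. The guiding intuition is that if all children of a node $v$ have small subtrees, many of them can be packed together with $v$ into a single component, whereas a child whose own subtree is already large should be peeled off and processed recursively as its own piece.

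First, I would perform a linear-time DFS to compute $s(v)$, the size of the subtree rooted at each node $v$. Then, I would run a recursive procedure on the root that maintains a currently open component $C$ anchored at the input node $v$, initialized to $\{v\}$. Iterating over the children $c_1, c_2, \ldots$ of $v$ in order, I would treat each child as follows: if $s(c_i) \geq \ell_1$, recursively decompose the subtree at $c_i$, leaving the edge $(v, c_i)$ as an outgoing edge from the root of $C$, which condition~(3) permits; if instead $s(c_i) < \ell_1$ and absorbing the entire subtree at $c_i$ would keep $|C| \leq 2\ell_1$, add the subtree to $C$; otherwise, close $C$, open a fresh component also rooted at $v$ (root sharing across sibling components is explicitly allowed), and place the subtree at $c_i$ in the new component. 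When all children of $v$ have been processed, the remaining open component is closed.

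Conditions~(1) and~(3) should then fall out of the construction with minor bookkeeping. For~(1), a component is only closed early when adding the next subtree (of size below $\ell_1$) would push it above $2\ell_1$, and the fresh component opened in response starts from the shared root and receives one subtree of size below $\ell_1$, so the total is still at most $2\ell_1$. For~(3), within a single component the only potential out-edges go either to heavy children of $v$ (all incident to the component root) or to the child subtree whose inclusion would have overflowed and therefore triggered the closure; this happens at most once per component, giving at most one non-root outgoing edge as required.

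The main obstacle I expect is verifying~(2): bounding the total number of components by $O(n/\ell_1)$. Overflow-closed components have size at least $\ell_1$ and hence contribute at most $n/\ell_1$. The delicate case is short components, namely those closed at the end of a recursive call or because the remaining children of the current root are all heavy. Each such short component must be charged either to a heavy subtree that interrupted the packing or to a distinct recursive call, and I would argue that each heavy subtree absorbs only a constant number of charges, using the fact that the number of heavy subtrees is at most $n/\ell_1$ (each has size at least $\ell_1$ and they are disjoint). Pinning down this charging scheme so that every short component finds a distinct witness is the technical heart of the proof.
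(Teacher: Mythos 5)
This statement is quoted from Farzan and Munro~\cite{Farzan2014}; the paper gives no proof of it, so your attempt can only be judged on its own merits. As written, it has a genuine gap that breaks condition~(2). In your scheme a component anchored at $v$ absorbs only \emph{light} children (those with $s(c_i)<\ell_1$) wholesale, and every \emph{heavy} child spawns a separate recursive call that opens at least one new component. Take the tree to be a path on $n$ nodes: every node except the bottom $\ell_1$ of them is heavy, each recursive call contributes a component containing essentially just its anchor, and you end up with $\Theta(n)$ singleton components instead of $O(n/\ell_1)$. The charging argument you sketch cannot repair this, because its key premise --- ``the number of heavy subtrees is at most $n/\ell_1$ (each has size at least $\ell_1$ and they are disjoint)'' --- is false: heavy subtrees are nested along root-to-leaf paths, not disjoint, so there can be $\Theta(n)$ heavy nodes and hence $\Theta(n)$ short components with no distinct witnesses to charge them to.

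The tell is that your construction never actually uses the allowance in condition~(3): every non-root node of one of your components sits inside a complete absorbed subtree, so components have zero non-root outgoing edges. That single permitted non-root boundary edge exists precisely to handle the case your algorithm mishandles: a component must be allowed to contain heavy nodes in its \emph{interior} and to be cut in the middle of a descending chain, with the cut edge serving as its unique non-root outgoing edge. The standard proof (Farzan--Munro, following Geary--Raman--Raman) works bottom-up: each node receives from its children some ``open'' (unfinished) components, greedily merges consecutive ones together with itself until a group reaches size $\ell_1$ (at which point it is closed, guaranteeing closed components have size in $[\ell_1,2\ell_1]$ and hence number $O(n/\ell_1)$), and passes at most one still-open component up to its parent; the case analysis on whether a node has zero, one, or several large child components is exactly what enforces the single non-root boundary edge. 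You would need to restructure your recursion along these lines rather than cutting at every heavy child.
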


See Figure~\ref{fig:separable} for an example. 
After decomposing the subtree as above, any node with an outgoing edge to a child outside the subtree is termed a \textit{boundary node}. The corresponding edge is referred to as the \textit{non-root boundary edge}. Each subtree has at most one boundary node and a non-root boundary edge. Additionally, the subtree may have outgoing edges from its root node, designated as \textit{root boundary edges}. For example, to achieve a tree covering representation for an arbitrary tree with $n$ nodes, Theorem~\ref{TC} is initially applied with $\ell_1 = \log^2 n$, yielding $O(n/\log^2 n)$ mini-trees. The resulting tree, formed by contracting each mini-tree into a vertex, is denoted as the \textit{tree over mini-trees}. This tree, with $O(n/\log^2 n)$ nodes, can be represented in $O(n/\log n) = o(n)$ bits through a pointer-based representation. Subsequently, Theorem~\ref{TC} is applied again to each mini-tree with $\ell_2 = \frac{1}{6} \log n$, resulting in a total of $O(n/\log n)$ micro-trees. The \textit{mini-tree over micro-trees}, formed by contracting each micro-tree into a node and adding dummy nodes for micro-trees sharing a common root, has $O(\log n)$ vertices and is represented with $O(\log\log n)$-bit pointers. Encoding the non-root/root boundary edge involves specifying the originating vertex and its rank among all children. The succinct tree representation, such as balanced parentheses (BP)~\cite{MunroR01}, is utilized to encode the position of the boundary edge within the micro-tree, requiring $O(\log \ell_2)$ bits. The overall space for all mini-trees over micro-trees is $O(n \log\log n/\log n) = o(n)$ bits. Finally, the micro-trees are stored with two-level pointers in a precomputed table containing representations of all possible micro-trees, demonstrating a total space of $2n+o(n)$ bits. By utilizing this representation, along with supplementary auxiliary structures that require only $o(n)$ bits of space, it is possible to perform fundamental tree navigation operations, such as accessing the parent, the $i$-th child, the lowest common ancestor, among many others, in $O(1)$ time~\cite{Farzan2014}.

\section{Succinct Representation of Baxter Permutation}\label{sec:rep}
In this section, we present a $(3n+o(n))$-bit representation for a Baxter permutation $\pi = (\pi(1), \dots, \pi(n))$ of size $n$. We begin by providing a brief overview of our representation. It is clear that the tree structure of $\MinC{}(\pi)$, along with the associated node labels can decode $\pi$ completely. 
However, the straightforward storage of node labels uses $\Theta(n \log n)$ bits, posing an efficiency challenge. 
To address this issue, 
we first show that when $\pi$ is a Baxter permutation, a two-stack based algorithm can be devised to traverse the nodes of $\MinC{}(\pi)$ according to the increasing order of their labels.
After that, we present a $(3n+o(n))$-bit representation that stores the information used throughout the algorithm, and show that the representation can decode $\MinC{}(\pi)$ with the labels of the nodes. 

\begin{algorithm}
\DontPrintSemicolon
\caption{Two-stack based algorithm}\label{alg:two-stack}
Initialize two empty stacks $L$ and $R$.\;
Visit $\phi(1)$ (i.e., the root of $\MinC(\pi)$).\;
\While{$i=2\dots n$}{
    \tcp*[h]{The last visited node is $\phi(i-1)$ by Lemma~\ref{lem:two-stack}.}\;
    \eIf{$\phi(i)$ is a left child of its parent}{
        \eIf{$\phi(i-1)$ has a left child}{
            Visit the left child of $\phi(i-1)$.\;
        }{
            Pop a node from stack $L$, and visit the left child of the node.\;
        }
        \If{$\phi(i-1)$ has a right child that has not yet been visited}{
            Push $\phi(i-1)$ to the stack $R$.\;
        }
    }(\tcp*[h]{$\phi(i)$ is a right child of its parent}){
        \eIf{$\phi(i-1)$ has a right child}{
            Visit the right child of $\phi(i-1)$.\;
        }{
            Pop a node from stack $R$, and visit the right child of the node.\;
        }
        \If{$\phi(i-1)$ has a left child that has not yet been visited}{
            Push $\phi(i-1)$ to the stack $L$.\;
        }
    }
}
\end{algorithm}

Note that our encoding employs a distinct approach compared to prior representations, as seen in references~\cite{dulucq1998baxter, dulucq1996stack, gawrychowski2015optimal, DBLP:conf/latin/JoK22}. These earlier representations store the tree structures of $\MinC{}(\pi)$ and $\MaxC{}(\pi)$ (or their variants) together, based on the observation that there always exists a bijection 
%sada %between the pair of $\MinC{}(\pi)$ and $\MaxC{}(\pi)$ and $\pi$ 
between $\pi$ and the pair of $\MinC{}(\pi)$ and $\MaxC{}(\pi)$  
if $\pi$ is a Baxter permutation~\cite{dulucq1996stack}.
We show that for any node in $\MinC{}(\pi)$, our representation allows to decode the labels of its parent, left child, and right child in $O(1)$ time using $o(n)$-bit auxiliary data structures. Using the previous representations that only store tree structures of $\MinC{}(\pi)$ and $\MaxC{}(\pi)$, these operations can take up to $\Theta(n)$ time in the worst-case scenario, even though tree navigation queries can be supported in constant time.

Now we introduce a two-stack based algorithm to traverse the nodes in $\MinC{}(\pi)$ according to the increasing order of their labels. Let $\phi(i)$ denote the node of $\MinC(\pi)$ with the label $i$. The algorithm assumes that we know whether $\phi(i)$ is left or right child of its parent for all $i \in \{2, \dots, n\}$.
\begin{figure}[t]
\centering
\includegraphics[scale=0.32]{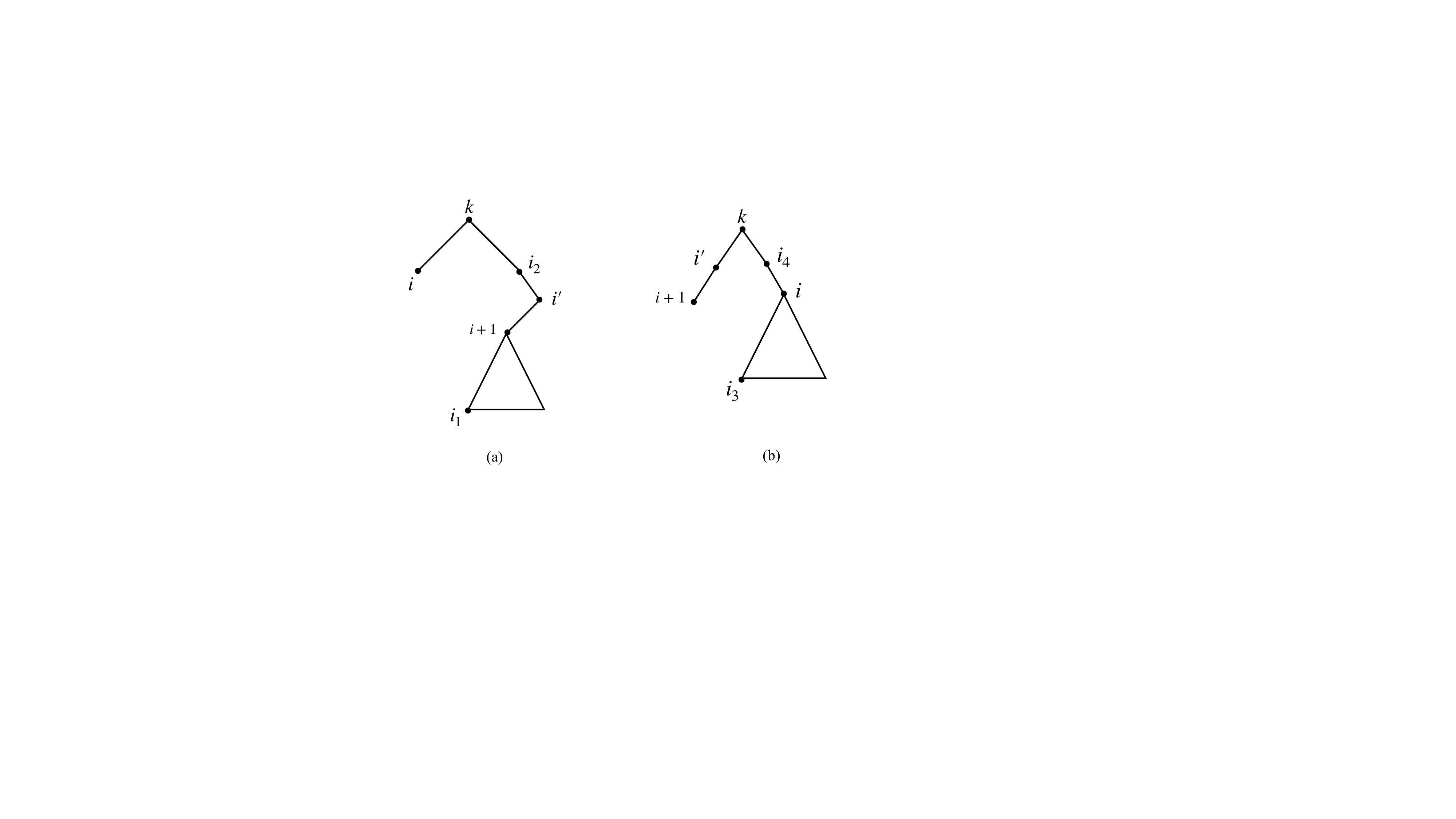}
\caption{(a) the case when $\phi(i)$ is in the left subtree of $\phi(k)$, and (b) the case when $\phi(i)$ is in the right subtree of $\phi(k)$.}
\label{fig:lemma1}
\end{figure}

The following lemma shows that if $\pi$ is a Baxter permutation, the two-stack based algorithm works correctly.
\begin{lemma}\label{lem:two-stack}
If $\pi$ is a Baxter permutation, the two-stack based algorithm on $\MinC{}(\pi)$ traverses the nodes according to the increasing order of their labels.
\end{lemma}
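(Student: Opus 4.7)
The plan is to proceed by strong induction on $i$, maintaining a structural invariant on the contents of stacks $L$ and $R$ throughout the execution. The base case $i=1$ is immediate since $\phi(1)$ is the root of $\MinC(\pi)$. For the inductive step, I would assume that after iteration $i-1$ the algorithm has correctly visited $\phi(1),\dots,\phi(i-1)$ in order, and that the stacks satisfy the invariant that $L$ (bottom to top) lists $\phi(j_1),\dots,\phi(j_s)$ with $j_1<\cdots<j_s<i$, each with an unvisited left child and with the top equal to the parent of the next label that will be visited via an $L$-pop (analogously for $R$). I then have to show that iteration $i$ visits $\phi(i)$ and re-establishes this invariant.

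Correctness of iteration $i$ reduces, modulo the symmetric right-side versions, to two structural claims: (A) if $\phi(i-1)$ has a left child and $\phi(i)$ is a left child of its parent, then the left child of $\phi(i-1)$ equals $\phi(i)$; and (B) if $\phi(i-1)$ has no left child while $\phi(i)$ is a left child of its parent, then the parent of $\phi(i)$ equals the top of $L$.

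For (A), I would argue by contradiction. Assume $\phi(i-1)$ (at position $a$) has a left child $\phi(k)$ with $k>i$, so the left subtree of $\phi(i-1)$ spans $[\alpha,a-1]$ with all values strictly larger than $i$. Then $\phi(i)$, at some position $c$, lies outside $[\alpha,a-1]$, and a short argument rules out $\phi(i)$ lying in $\phi(i-1)$'s right subtree (otherwise it would be the minimum there, hence the right child of $\phi(i-1)$, contradicting being a left child). In the case $c<\alpha$, I would walk up the ancestor chain of $\phi(i-1)$ until the first right-child step; that step's parent $\phi(\psi)$ sits exactly at position $\alpha-1$ and carries a label $\psi<i-1$ (the alternative $\alpha=1$ is forbidden since we require $c<\alpha$). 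The four positions $c<\alpha-1<\alpha<a$ then carry values $i,\psi,v,i-1$ with $v\ge k$, forming a forbidden $3\customdash{0.4em}14\customdash{0.4em}2$ pattern. The case $c>a$ is handled symmetrically: by the mirror argument, $\phi(i-1)$ must have no right child, whereupon the first left-child step up the ancestor chain produces an ancestor $\phi(u)$ at position $a+1$ with $u<i-1$, and one then exhibits a forbidden pattern using positions $a$, $a+1$, a position in $\phi(u)$'s right subtree carrying a value exceeding $i-1$, and the parent $\phi(p)$ of $\phi(i)$.

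Once (A) is established, (B) and the preservation of the invariant follow by a LIFO analysis: every push to $L$ happening between the moment $\phi(p)$ was pushed and iteration $i$ corresponds to an unvisited left child of a descendant of $\phi(p)$, and (A) (applied to those intervening visits) forces each such push to be matched by a pop before control returns to $\phi(p)$'s side of the tree. The main obstacle is the case $c>a$ of Claim~(A): the required pattern is not visible from the four ``obvious'' positions $a,a+1,c,d$ alone, so one must descend along the rightmost ancestor chain of $\phi(u)$ to locate a position whose value exceeds $i-1$; verifying that such a position always exists, and that the resulting four-tuple of positions actually realizes a $3\customdash{0.4em}14\customdash{0.4em}2$ or $2\customdash{0.4em}41\customdash{0.4em}3$ pattern, is where the Baxter property enters in an essential way.
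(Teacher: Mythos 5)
Your overall architecture (induction with a stack invariant, reduced to the two claims (A) and (B)) matches the paper's proof, which likewise splits into the case where $\phi(i-1)$ has a left child and the case where the next node must come from a pop; and your argument for Claim (A) in the sub-case $c<\alpha$ is correct and complete. But there are two genuine gaps. First, the sub-case $c>a$ of Claim (A) is not merely ``the main obstacle'': the witness positions you propose do not in general realize a forbidden pattern. Writing $d$ for the position of the parent $\phi(p)$ of $\phi(i)$, the natural $3\customdash{0.4em}14\customdash{0.4em}2$ built from $a$, $a+1$, a large value in $\phi(u)$'s right subtree, and $d$ needs $u<p$; but if the lowest common ancestor of $\phi(i-1)$ and $\phi(p)$ is a proper ancestor of $\phi(u)$, then $u$ and $p$ are incomparable and the pattern collapses. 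The paper closes this case by a different device: take the LCA $\phi(k)$ of $\phi(i-1)$ and $\phi(p)$, let $\phi(i_1)$ be the leftmost node of the subtree rooted at $\phi(p)$ (so $i_1\ge i$, since $\phi(p)$'s left child is $\phi(i)$) and $\phi(i_2)$ its inorder predecessor, which is an ancestor of $\phi(p)$ on the path from $\phi(k)$, so $k\le i_2<p$; the positions of $\phi(i-1),\phi(i_2),\phi(i_1),\phi(p)$ then carry values $i-1,\,i_2,\,i_1,\,p$ with $i_2<p<i-1<i_1$ and the middle two positions adjacent, which is exactly a $3\customdash{0.4em}14\customdash{0.4em}2$.

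Second, Claim (B) does not follow from Claim (A) by a LIFO analysis. Unwinding your argument, (B) reduces to the statement that no $j$ with $p<j<i-1$ has a left child labelled greater than $i$ (such a $\phi(j)$ would still sit in $L$ above $\phi(p)$ at step $i$). This non-crossing property of pending left children is a genuine consequence of Baxter-ness that needs its own pattern argument --- the paper handles it by running essentially the same LCA argument with the top-of-stack node in place of $\phi(i-1)$, its pending left child supplying the large value --- and it is not implied by the consecutive-label statement (A). Your auxiliary assertion that every intermediate push corresponds to a descendant of $\phi(p)$ is also unjustified: a node $\phi(j)$ with $p<j<i$ need not lie in the subtree of $\phi(p)$ at all.
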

\begin{proof}
%figure necessary?
%We prove the lemma by induction on the labels of nodes in $\MinC{}(\pi)$. 
From Algorithm~\ref{alg:two-stack}, it is clear that we first visit the root node, which is $\phi(1)$. Then we claim that for any $i$, the two-stack based algorithm traverses the node $\phi(i+1)$ immediately after traversing $\phi(i)$, thereby proving the theorem.

Suppose not. Then we can consider the cases as (a) the left child of $\phi(i)$ exists, but $\phi(i+1)$ is not a left child of $\phi(i)$, or 
(b) the left child of $\phi(i)$ does not exist, but $\phi(i+1)$ is not a left child of the node at the top of $L$.
%\ul{
For the case (a) (the case (b) can be handled similarly), suppose
$\phi(i+1)$ is a left child of the node $\phi(i')$. Then 
thus $i' < i$ by the definition of $\MinC{}(\pi)$ and the condition of (a).
%}
Now, let $\phi(k)$ be the lowest common ancestor of $\phi(i)$ and $\phi(i')$.
If $\phi(i)$ is in the left subtree of $\phi(k)$ (see Figure~\ref{fig:lemma1}(a) for an example), $k$ cannot be $i'$ from the definition of $\MinC(\pi)$. 
Then consider two nodes, $\phi(i_1)$ and $\phi(i_2)$, which are the leftmost node of the subtree rooted at node $\phi(i')$ and the node whose inorder is immediately before $\phi(i_1)$, respectively. 
Since $\phi(i_2)$ lies on the path from $\phi(k)$ to $\phi(i')$, we have $i+1 \le i_1$ and $k \le i_2 < i'$. 
Therefore, there exists a pattern $3\customdash{0.4em}14\customdash{0.4em}2$ induced by $i-i_2, i_1 - i'$, which contradicts the fact that $\pi$ is a Baxter permutation.

If $\phi(i)$ is in the right subtree of $\phi(k)$ (see~\ref{fig:lemma1}(b) for an example), $k$ cannot be $i$ from the definition of $\MinC(\pi)$. 
Consider two nodes, $\phi(i_3)$ and $\phi(i_4)$, which are the leftmost node of the subtree rooted at node $\phi(i)$ and the node whose inorder is immediately before $\phi(i_3)$, respectively. 
Since $\phi(i_4)$ lies on the path from $\phi(k)$ to $\phi(i)$, we have $i+1 < i_3$ ($i_3$ is greater than $i$ and cannot be $i+1$) 
and $k \le i_4 < i$. 
Therefore, there exists a pattern $3\customdash{0.4em}14\customdash{0.4em}2$ induced by $(i+1)-i_4, i_3 - i$, which contradicts the fact that $\pi$ is Baxter. 

The case when $\phi(i+1)$ is a right child of its parent can be proven using the same argument by showing that if the algorithm fails to navigate $\phi(i+1)$ correctly, the pattern $2\customdash{0.4em}41\customdash{0.4em}3$ exists in $\pi$.
\end{proof}

%encoding + example
The representation of $\pi$ encodes the two-stack based algorithm as follows. 
First, to indicate whether each non-root node is whether a left or right child of its parent, we store a binary string $\lr[1, \dots n-1] \in \{l, r\}^{n-1}$ of size $n-1$ where $\lr[i] = l$ (resp. $\lr[i] = r$) 
if the node $\phi(i+1)$ is a left (resp. right) child of its parent. 
Next, to decode the information on the stack $L$ during the algorithm, we define an imaginary string of balanced parentheses $\lp[1 \dots n-1]$ as follows: After the algorithm traverses $\phi(i)$, $\lp[i]$ is 
(1) $\mbox{`('}$ if the algorithm pushes $\phi(i)$ to the stack $L$, (2) $\mbox{`)'}$
if the algorithm pops a node from the stack $L$, and (3) undefined otherwise.
We also define an imaginary string of balanced parentheses $\rp[1, \dots n]$ in the same way to decode the information on the stack $R$ during the algorithm. We use $`\{'$ and $`\}'$ to denote the parentheses in $\rp$.
Then from the correctness of the two-stack algorithm (Lemma~\ref{lem:two-stack}), and the definitions of $\lr$, $\lp$, and $\rp$, we can directly derive the following lemma:

\begin{lemma}\label{lem:lrseq}
For any $i \in \{1, \dots, n-1\}$, the following holds:
\begin{itemize}[itemsep=3pt]
    \item Suppose the node $\phi(i)$ is a leaf node. Then either $\lp[i]$ or $\rp[i]$ is defined. Also, $\lr[i]$ is $l$ (resp. $r$) if and only if $\lp[i]$ (resp. $\rp[i]$) is a closed parenthesis.
    \item Suppose the node $\phi(i)$ only has a left child. In this case, $\lr[i]$ is $l$ if and only if both $\lp[i]$ and $\rp[i]$ are undefined. Also, $\lr[i]$ is $r$ if and only if $\lp[i] = \mbox{`('} $ and $\rp[i] = \mbox{`\}'}$.
    \item Suppose the node $\phi(i)$ only has a right child. In this case, $\lr[i]$ is $l$ if and only if $\lp[i] = \mbox{`)'} $ and $\rp[i] = \mbox{`\{'}$. Also, $\lr[i]$ is $r$ if and only if both $\lp[i]$ and $\rp[i]$ are undefined.
    \item Suppose the node $\phi(i)$ has both left and right child. In this case, $\lr[i]$ is $l$ if and only if $\lp[i]$ is undefined and $\rp[i] = \mbox{`\{'}$. Also, $\lr[i]$ is $r$ if and only if $\lp[i] = \mbox{`('}$ and $\rp[i]$ is undefined.
\end{itemize}
\end{lemma}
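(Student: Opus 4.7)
The plan is to verify each of the four bullets by reading off, for every possible configuration of children of $\phi(i)$, exactly which writes to $\lp[i]$ and $\rp[i]$ occur during iteration $i+1$ of Algorithm~\ref{alg:two-stack}. By Lemma~\ref{lem:two-stack}, at the start of iteration $i+1$ the most recently visited node is exactly $\phi(i)$, so the only actions that can set the symbol at index $i$ are the push and pop operations performed in that iteration; which ones fire is determined by the branch taken (controlled by $\lr[i]$) together with whether $\phi(i)$ has a left and/or right child.

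A small preliminary observation will make the case analysis cleaner: every child of $\phi(i)$ has label strictly greater than $i$ by the defining property of $\MinC(\pi)$, so at the moment iteration $i+1$ begins, no child of $\phi(i)$ has been visited yet. Hence the guards ``has a left/right child that has not yet been visited'' in Algorithm~\ref{alg:two-stack} collapse to ``has a left/right child'' throughout our analysis. This removes the only subtlety in the algorithm's conditionals.

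With that in hand, for each of the four configurations I would trace the two branches $\lr[i] = l$ and $\lr[i] = r$ through iteration $i+1$. In the leaf case, the matching-child branch is empty and we are forced to pop (closing parenthesis in $\lp$ or in $\rp$ respectively), with no subsequent push since the other child does not exist. In the only-left-child case, the $l$-branch visits the left child and pushes nothing, while the $r$-branch pops from $R$ and then pushes $\phi(i)$ onto $L$ because its left child is still unvisited. The only-right-child case is symmetric, and in the both-children case no pop occurs in either branch but the follow-up push always fires. In every sub-case the resulting pattern of defined-ness and parenthesis type on $(\lp[i],\rp[i])$ is unique to that branch, which immediately yields the ``if and only if'' characterizations claimed.

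There is no genuine obstacle once Lemma~\ref{lem:two-stack} and the ``not yet visited'' simplification are available; the proof reduces to an eight-way inspection of Algorithm~\ref{alg:two-stack}. The only thing to be careful with is indexing discipline: the symbols at index $i$ in $\lp$ and $\rp$ are written during iteration $i+1$, not during iteration $i$ (in which $\phi(i)$ is merely visited), and one must consistently attribute each push or pop to the index of the node that \emph{caused} it, namely the node sitting one step behind in the traversal.
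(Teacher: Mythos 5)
Your proposal is correct and matches the paper's intent: the paper gives no explicit proof, merely asserting that the lemma follows directly from Lemma~\ref{lem:two-stack} and the definitions of $\lr$, $\lp$, and $\rp$, and your eight-way inspection of Algorithm~\ref{alg:two-stack} is exactly that derivation spelled out. Your preliminary observation that children of $\phi(i)$ have strictly larger labels (so the ``not yet visited'' guards collapse) correctly handles the one subtlety, and the indexing convention you adopt agrees with the paper's definition of $\lp[i]$ and $\rp[i]$ as the actions taken after traversing $\phi(i)$.
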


\begin{figure}[bt]
\centering
\includegraphics[width=0.75\textwidth]{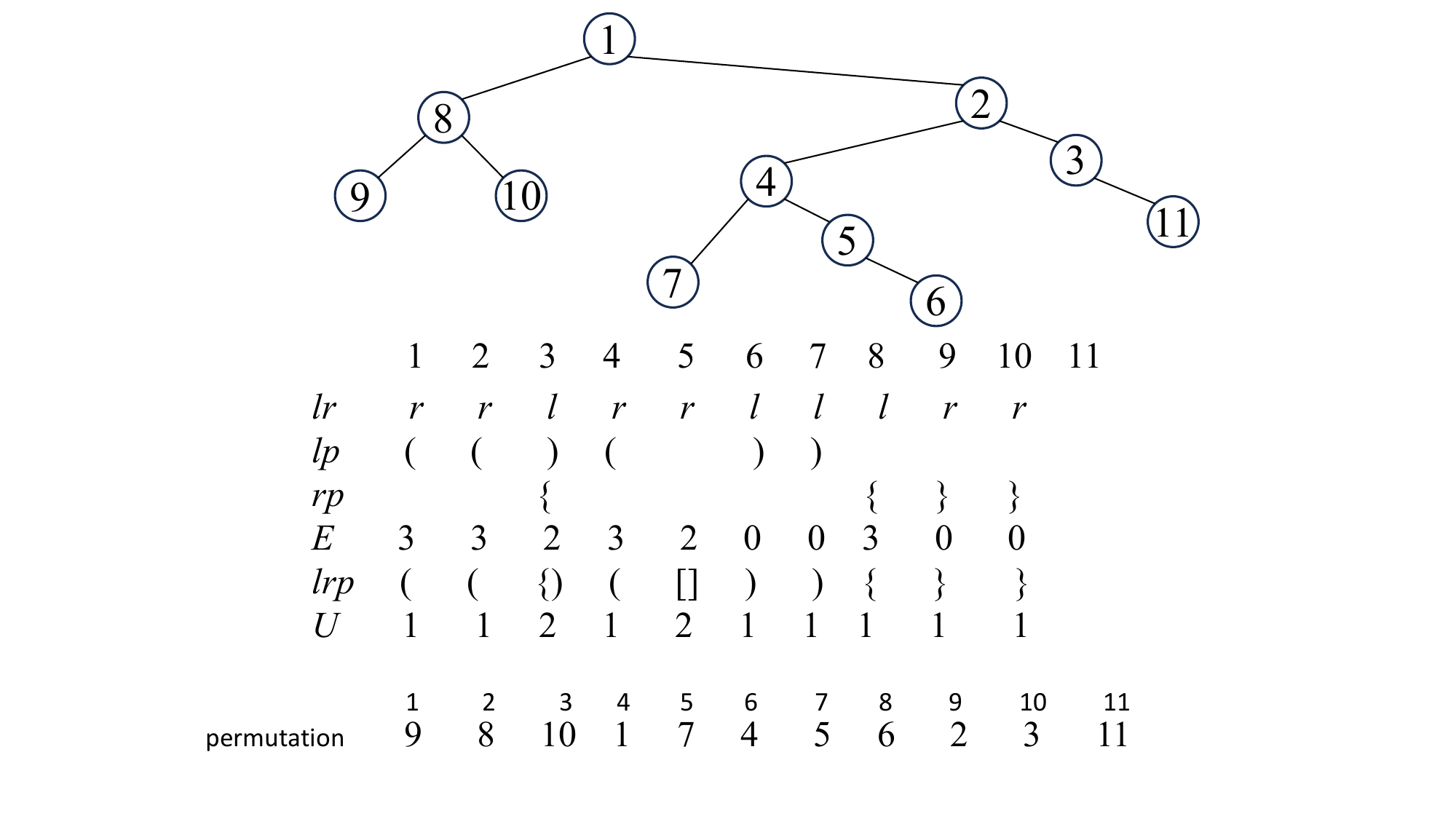}
\caption{An example of the representation of the Baxter permutation $\pi = (9, 8, 10, 1, 7, 4, 5, 6, 2, 3, 11)$. Note that the data structure maintains only $E$ and $lr$ along with $o(n)$-bit auxiliary structures. }
\label{fig:datastructure}
\end{figure}

To indicate whether each node $\phi(i)$ has a left and/or right child
we store a string $E \in \{0, 1, 2, 3\}^{n-1}$ of size $n-1$ where (a) $E[i] = 0$ if $\phi(i)$ is a leaf node (b) $E[i] = 1$ if $\phi(i)$ has only a left child, (c) $E[i] = 2$ if $\phi(i)$ has only a right child, and (d) $E[i] = 3$ if $\phi(i)$ has both left and right children. We store $E$ using $2n+o(n)$ bits, which allows support both $\rank$ and $\select$ operations in $O(1)$ time~\cite{belazzougui2015optimal}. Thus, the overall space required for our representation is at most $3n+o(n)$ bits ($2n$ bits for $E$, $n$ bits for $\lr$ along with $o(n)$-bit auxiliary structures).
From Lemma~\ref{lem:lrseq}, our representation can access $\lp[i]$ and $\rp[i]$ in $O(1)$ time by referring $\lr[i]$ and $E[i]$. 
Next, we show both $\findopen$ and $\findclose$ on $\lp$ and $\rp$ in $O(1)$ time using the representation.
We define an imaginary string $\lrp$ of length at most $2(n-1)$ over an alphabet of size $6$ 
that consists of three different types of parentheses $()$, $\{\}$, $[]$ constructed as follows. 
We first initialize $\lrp$ as an empty string and scan $\lr$ and $E$ from the leftmost position. 
Then based on Lemma~\ref{lem:lrseq}, whenever we scan $\lr[i]$ and $E[i]$, we append the parentheses to $\lrp$ as follows:
\begin{equation*}
\begin{cases}
  ( &\text{if $\lr[i] = r$ and $E[i] = 3$} \\
   ) &\text{if $\lr[i] = l$ and $E[i] = 0$}\\
   \{ &\text{if $\lr[i] = l$ and $E[i] = 3$}\\
  \} &\text{if $\lr[i] = r$ and $E[i] = 0$}\\
  (\} &\text{if $\lr[i] = r$ and $E[i] = 1$} \\
\{) &\text{if $\lr[i] = l$ and $E[i] = 2$}\\
  [] &\text{if (1) $\lr[i] = l$ and $E[i] = 1$, or (2) $\lr[i] = r$ and $E[i] = 2$}
\end{cases}
\end{equation*}

We store a precomputed table that has all possible pairs of $\lr$ and $E$ of size $(\log n)/4$ as indices. For each index of the table, it returns $\lrp$ constructed from the corresponding pair of $\lr$ and $E$. Thus, the size of the precomputed table is $O(2^{\frac{3}{4}\log n} \log n)= o(n)$ bits.

Additionally, we define an imaginary binary sequence $U \in \{1, 2\}^{n-1}$ of size $n-1$, where $U[i]$ denotes the number of symbols appended to $\lrp$ during its construction by scanning $\lr[i]$ and $E[i]$. Then by Lemma~\ref{lem:lrseq}, we can decode any $O(\log n)$-sized substring of $U$ starting from position $U[i]$ by storing another precomputed table of size $o(n)$ bits, indexed by all possible pairs of $\lr$ and $E$ of size $(\log n) /4$. Consequently, we can support both $\rank$ and $\select$ queries on $U$ by storing $o(n)$-bit auxiliary structures, without storing $U$ explicitly~\cite{belazzougui2015optimal}.

To decode any $O(\log n)$-sized substring of $\lrp$ starting from position $\lrp[i]$, we first decode a $O(\log n)$-sized substring of $E$ and $\lr$ from the position $i' = i - \rank_{U}(2, i)$ and decode the substring of $\lrp$ by accessing the precomputed table a constant number of times (bounded conditions can be easily verified using $\rank_{U}(2, i-1)$).
Thus, without maintaining $\lrp$, we can support $\rank$, $\select$, $\findopen$, and $\findclose$ 
queries on $\lrp$ in $O(1)$ time 
by storing $o(n)$-bit auxiliary structures~\cite{belazzougui2015optimal, chuang1998compact}. 
With the information provided by $\lrp$ and $U$, we can compute $\findopen(i)$ and $\findclose(i)$ operations on $\lp$ in $O(1)$ time as follows: To compute $\findopen(i)$, we compute $i_1 - \rank_{U}(2, i_1-1)$, where $i_1$ is the position of the matching `(' corresponding to $\lrp[i+\rank_{U}(2, i)]$.
For computing $\findclose(i)$, we similarly compute $i_2 - \rank_{U}(2, i_2)$, where $i_2$ corresponds to the position of the `)' corresponding to $\lrp[i+\rank_{U}(2, i-1)]$. Likewise, we can compute $\findopen(i)$ and $\findclose(i)$ operations on $\rp$ by locating the matching '\{' or '\}' in $\lrp$. In summary, our representation enables $\findopen$ and $\findclose$ operations on both $\lp$ and $\rp$ to be supported in $O(1)$ time without storing them explicitly.

Now we show that our representation is valid, i.e., we can decode $\pi$ from the representation. 
\begin{theorem}\label{thm:representation}
    The strings $\lr$ and $E$ give a $(3n+o(n))$-bit representation for the Baxter permutation $\pi = (\pi(1), \dots, \pi(n))$ of size $n$.
\end{theorem}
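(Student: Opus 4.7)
The space bound of $3n + o(n)$ bits is already established in the discussion preceding the statement (two bits per position for $E$, one for $\lr$, plus the $o(n)$-bit precomputed tables and rank/select support), so the only remaining content of the theorem is that $\lr$ and $E$ together determine $\pi$. My plan is to show that these two strings contain enough information to simulate Algorithm~\ref{alg:two-stack}, thereby reconstructing $\MinC(\pi)$ together with the label carried by every node; once this is done, $\pi$ is recovered immediately by reading off the labels in an inorder traversal of the reconstructed tree.

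To simulate the algorithm at iteration $i$ (for $i = 2, \dots, n$), I need two pieces of information: (a) whether $\phi(i)$ is a left or right child of its parent, and (b) whether $\phi(i-1)$ has a left and/or right child. Item (a) is exactly $\lr[i-1]$, and item (b) is exactly $E[i-1]$. The one subtlety is the conditional ``$\phi(i-1)$ has a left/right child \emph{that has not yet been visited}'' appearing in the push steps of Algorithm~\ref{alg:two-stack}. Here I would invoke the defining property of $\MinC(\pi)$: every child of $\phi(i-1)$ carries a label strictly greater than $i-1$, so by the ascending-order guarantee of Lemma~\ref{lem:two-stack} the only child of $\phi(i-1)$ that could already have been visited at the start of iteration $i$ is $\phi(i)$ itself, and only in the case that $\phi(i)$ is in fact a child of $\phi(i-1)$. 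Consequently the ``not-yet-visited'' conditional reduces to a pure existence check that is answered in $O(1)$ time from $E[i-1]$ and $\lr[i-1]$.

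With these observations in hand, the remainder is bookkeeping. I would initialize two empty stacks, start from the imagined root $\phi(1)$, and for each $i$ from $2$ to $n$ execute Algorithm~\ref{alg:two-stack} using $\lr$ and $E$ as described, recording both the parent assigned to $\phi(i)$ and the side (left/right) on which $\phi(i)$ is attached. Lemma~\ref{lem:two-stack} guarantees that this simulated execution produces exactly the same sequence of parent assignments as the algorithm run on the actual $\MinC(\pi)$, so at termination I have recovered the tree structure together with the label of every node. Reading these labels in inorder then yields $\pi(i)$ for every $i$. I do not expect any substantial obstacle beyond the ``not-yet-visited'' simplification above; the main conceptual work has already been done in Lemma~\ref{lem:two-stack}, and once that is available the decoding reduces to a straightforward simulation that consults only a constant number of bits of $\lr$ and $E$ per iteration.
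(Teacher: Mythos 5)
Your proposal is correct, and it rests on the same foundation as the paper's proof (Lemma~\ref{lem:two-stack} plus the observation that $\lr$ and $E$ supply exactly the decisions the two-stack algorithm needs), but the mechanics differ. You reconstruct the whole of $\MinC(\pi)$ by a single linear-time simulation of Algorithm~\ref{alg:two-stack}, maintaining the stacks explicitly; the paper instead gives a \emph{local} decoding rule for each node's parent label -- either $\phi(i-1)$, or $\findopen(i-1)$ on the imaginary string $\lp$ (respectively $\rp$), where $\lp$ and $\rp$ are themselves recovered pointwise from $\lr$ and $E$ via Lemma~\ref{lem:lrseq}. For the bare claim of Theorem~\ref{thm:representation} (that the strings determine $\pi$) your global simulation is entirely adequate and arguably more elementary; what the paper's formulation buys is that the same argument immediately yields the $O(1)$-time $\pl(i)$ operation that Lemma~\ref{lem:dfsmin} cites verbatim, which your whole-tree reconstruction does not provide. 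Your handling of the ``not-yet-visited'' conditional is sound, and in fact slightly understates the situation: the push step always tests the child on the \emph{opposite} side from the one recorded in $\lr[i-1]$, so that child (having label greater than $i-1$ and distinct from $i$) can never have been visited, and the conditional is exactly the existence test read off from $E[i-1]$.
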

\begin{proof}
    It is enough to show that the representation can decode $\MinC{}(\pi)$ along with the associated labels. 
    For each non-root node $\phi(i)$, we can check $\phi(i)$ is either a left or right child of its parent by referring $\lr[i-1]$. Thus, it is enough to show that the representation can decode the label of the parent of $\phi(i)$. Without loss of generality, suppose $\phi(i)$ is a left child of its parent (the case that $\phi(i)$ is a right child of its parent is analogous). Utilizing the two-stack based algorithm and referring to Lemma~\ref{lem:lrseq}, we can proceed as follows: 
    If no element is removed from the $L$ stack after traversing $\phi(i-1)$ (this can be checked by referring $\lr[i]$ and $E[i]$), we can conclude that the parent node of $\phi(i)$ is indeed $\phi(i-1)$. Otherwise, the parent of $\phi(i)$ is the node labeled with $\findopen{}(i-1)$ on $\lp$ from the two-stack based algorithm.
\end{proof}
%Example
\begin{example}
Figure~\ref{fig:datastructure} shows the representation of the Baxter permutation $\pi = (9, 8, 10, 1, 7, 4, 5, 6, 2, 3, 11)$. Using the representation, we can access $\lp[3] = `)'$ by referring $\lr[3] = l$ and $E[3] = 2$ by Lemma~\ref{lem:lrseq}. Also, $\findopen(6)$ on $\lp$ computed by (1) computing the position of the matching `(' of the parenthesis of $\lrp$ at the position $i' = 6+ \rank_{U}(2, 6) = 8$, which is $5$, and (2) returning $5 - \rank_{U}(2, 5-1) = 4$. Note that $\lrp$ is not explicitly stored.
Finally, we can decode the label of the parent of $\phi(4)$ using $\findopen(3)$ on $\lp$ ($\phi(4)$ is the left child of its parent since $\lr[3] = l$), resulting in the value $2$. Thus, $\phi(2)$ is the parent of $\phi(4)$. 
\end{example}

\textbf{\noindent{Representation of alternating Baxter Permutation}}.
Assuming $\pi$ is an alternating permutation of size $n$, one can ensure that $\MinC(\pi)$ always forms a full binary tree by introducing, at most, two dummy elements $n+1$ and $n+2$, and adding them to the leftmost and rightmost positions of $\pi$, respectively~\cite{dulucq1996stack, cori1986shuffle}.
Specifically, we add the node $\phi(i+1)$ as the leftmost leaf of $\MinC(\pi)$ if $\pi(1) < \pi(2)$, Similarly, we add the node $\phi(i+2)$ as the rightmost leaf of $\MinC(\pi)$ if $\pi(n-1) > \pi(n)$.

Since no node in $\MinC(\pi)$ has exactly one child in this case, we can optimize the string $E$ in the representation of Theorem~\ref{thm:representation} into a binary sequence of size at most $n-1$, where $E[i]$ indicates whether the node $\phi(i)$ is a leaf node or not. Thus, we can store $\pi$ using at most $2n + o(n)$ bits. We summarize the result in the following corollary.

\begin{corollary}\label{cor:alter_represent}
    The strings $\lr$ and $E$ give a $(2n+o(n))$-bit representation for the alternating Baxter permutation $\pi = (\pi(1), \dots, \pi(n))$ of size $n$.
\end{corollary}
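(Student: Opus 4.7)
The plan is to observe that for an alternating permutation the string $E$ of Theorem~\ref{thm:representation} becomes binary-valued rather than taking four values, so storing $E$ as one bit per position (leaf vs.\ internal) gives the $2n+o(n)$-bit bound while leaving the rest of the representation and the decoding procedure intact.

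First, I would establish (as observed in \cite{dulucq1996stack,cori1986shuffle}) the key structural fact that after appending the dummies with values $n+1$ and $n+2$ as described, $\MinC(\pi)$ is a \emph{full} binary tree (every node has either $0$ or $2$ children). For an internal position $i$ with $1<i<n$, the alternating property forces $i$ to be either a local maximum, in which case $\pi(i-1),\pi(i+1)<\pi(i)$ and the subtree of $i$ in $\MinC(\pi)$ reduces to $\{i\}$ alone (a leaf), or a local minimum, in which case both neighbors belong to $i$'s subtree on opposite sides (so $i$ has two children). Thus the only positions that can be unary in $\MinC(\pi)$ are the two endpoints; an elementary check shows that position $1$ is unary precisely when $\pi(1)<\pi(2)$, and symmetrically for position $n$, which are exactly the cases in which a dummy is inserted. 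I would also verify that appending the dummies preserves the Baxter property: since the dummy values are the two largest entries and sit at the extremes, they can only play the role of the ``$4$'' in either forbidden pattern $2\customdash{0.4em}41\customdash{0.4em}3$ or $3\customdash{0.4em}14\customdash{0.4em}2$, but then the positions required beyond them do not exist, so Lemma~\ref{lem:two-stack} and the machinery of Theorem~\ref{thm:representation} apply verbatim to the augmented permutation.

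Second, with full binarity in hand, $E[i]\in\{1,2\}$ never occurs, so I would redefine $E$ as a binary string with $E[i]=0$ if $\phi(i)$ is a leaf and $E[i]=1$ otherwise. Lemma~\ref{lem:lrseq} restricts to just the ``leaf'' and ``two children'' subcases, and the rules for appending to $\lrp$ from $(\lr[i], E[i])$ collapse accordingly; the precomputed tables, rank/select structures, and $\findopen/\findclose$ machinery continue to fit in $o(n)$ bits because they depend only on $O(\log n)$-length windows of $\lr$ and $E$. The total space becomes at most $n-1$ bits for $\lr$, at most $n-1$ bits for $E$, $O(1)$ bits to record which dummies were inserted, plus $o(n)$ auxiliary bits, giving $2n+o(n)$ overall. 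Decoding follows Theorem~\ref{thm:representation} verbatim to recover the augmented $\MinC$ with its labels, after which $\pi$ is obtained by stripping off the dummy leaves labeled $n+1$ and $n+2$.

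The main, and really only, obstacle is the structural lemma on full binarity, whose proof consists of the short case analysis sketched in the first paragraph; once that and the Baxter-preservation check are dispatched, the remainder is a direct simplification of Theorem~\ref{thm:representation}.
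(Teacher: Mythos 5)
Your proposal is correct and follows essentially the same route as the paper: append the (at most two) dummy elements $n+1$ and $n+2$ so that $\MinC(\pi)$ becomes a full binary tree, then collapse $E$ to a binary string indicating leaf versus internal node, keeping $\lr$ and the $o(n)$ auxiliary machinery unchanged. The only difference is that the paper simply cites~\cite{dulucq1996stack, cori1986shuffle} for the full-binarity fact, whereas you supply the short local-max/local-min case analysis and the check that the dummies preserve the Baxter property — both of which are correct and make the argument self-contained.
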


\section{Computing the BP sequence of Cartesian trees}\label{sec:query}

Let $\pi$ be a Baxter permutation of size $n$. In this section, we describe how to 
to compute $\pi(i)$ and $\pi^{-1}(j)$ for $i, j \in \{1,2,\ldots,n\}$ using the representation of Theorem~\ref{thm:representation}. 
First in Section~\ref{sec:inorder} we modify Cartesian trees so that inorders are assigned to all the nodes. 
Then we show in Section~\ref{sec:BP} we can obtain the BP sequence of $\MinC{}(\pi)$ from our representation.
By storing the auxiliary data structure of~\cite{navarro2014fully}, we can support tree navigational operations in Section~\ref{pre:BP2}. Finally, in Section~\ref{sec:maximum}, we show that our data structure can also support the tree navigational queries on $\MaxC{}(\pi)$ efficiently, which used in the results in the succinct representations of mosaic floorplans and plane bipolar orientations.

To begin discussing how to support $\pi(i)$ and $\pi^{-1}(j)$ queries, we will first show that the representation of Theorem~\ref{thm:representation} can efficiently perform a depth-first traversal on $\MinC{}(\pi)$ using its labels. We will establish this by proving the following lemma, which shows that three key operations, namely (1) $\lcl{}(i)$: returns the label of the left child of $\phi(i)$, (2) $\rcl{}(i)$: returns the label of the right child of $\phi(i)$, and (3) $\pl{}(i)$: returns the label of the parent of $\phi(i)$ on $\MinC{}(\pi)$, can be supported in $O(1)$ time.

\begin{lemma}\label{lem:dfsmin}
The representation of Theorem~\ref{thm:representation} can support $\lcl{}(i)$, $\rcl{}(i)$, and $\pl{}(i)$ in $O(1)$ on $\MinC{}(\pi)$ in $O(1)$ time.
\end{lemma}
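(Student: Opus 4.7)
The plan is to trace through Algorithm 1 and determine, for each node label $i$, precisely at which step the parent, left child, and right child of $\phi(i)$ are traversed. Each of the three relationships will then reduce to a constant-time computation on $E[i]$, $\lr[i-1]$ or $\lr[i]$, combined with at most one invocation of $\findopen$ or $\findclose$ on $\lp$ or $\rp$, all of which are supported in $O(1)$ time by the machinery established in Section~\ref{sec:rep}.

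For $\pl(i)$ with $i \ge 2$, I would reuse the argument already sketched in the proof of Theorem~\ref{thm:representation}. The value $\lr[i-1]$ tells us whether $\phi(i)$ is a left or right child. Assume $\lr[i-1] = l$; the algorithm enters the left branch of its main loop at iteration $i$, and inspecting $E[i-1]$ we see two sub-cases: either $\phi(i-1)$ has a left child, in which case $\pl(i) = i-1$ and $\lp[i-1]$ is undefined, or the algorithm pops a node from $L$ and visits $\phi(i)$ as the left child of the popped node, in which case $\lp[i-1] = \mbox{`)'}$ by Lemma~\ref{lem:lrseq} and the LIFO discipline forces $\pl(i) = \findopen_{\lp}(i-1)$. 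The case $\lr[i-1] = r$ is fully symmetric, using $R$ and $\rp$.

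For $\lcl(i)$, first use $E[i]$ to discard the cases $E[i] \in \{0, 2\}$ in which no left child exists. Otherwise, branch on $\lr[i]$. If $\lr[i] = l$, then at iteration $i+1$ the algorithm visits a left child, and since $E[i] \in \{1, 3\}$ guarantees $\phi(i)$ has an unvisited left child slot, the algorithm takes this slot directly without consulting $L$, yielding $\lcl(i) = i+1$. If $\lr[i] = r$, Lemma~\ref{lem:lrseq} tells us $\lp[i] = \mbox{`('}$, i.e., $\phi(i)$ is pushed onto $L$ immediately after being visited; the left child of $\phi(i)$ is then traversed at the step when this push is matched by a pop, giving $\lcl(i) = \findclose_{\lp}(i) + 1$. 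The formula for $\rcl(i)$ is obtained by the symmetric argument, exchanging $L \leftrightarrow R$, $\lp \leftrightarrow \rp$, and $l \leftrightarrow r$.

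The main subtlety I expect is justifying that $\findopen$ and $\findclose$ on the imaginary strings $\lp$ and $\rp$ correctly identify matching stack operations. Since $L$ and $R$ are genuine LIFO stacks and Lemma~\ref{lem:lrseq} encodes each push as an open parenthesis and each pop as a close parenthesis, the balanced-parenthesis matching on $\lp$ exactly pairs each push of some $\phi(j)$ with the iteration in which it is popped, and analogously for $\rp$; hence the formulas above return the correct labels. Combined with the $O(1)$ access to $\lr$, $E$, and to $\findopen$/$\findclose$ on $\lp$ and $\rp$ given in Section~\ref{sec:rep}, each of $\pl$, $\lcl$, and $\rcl$ runs in $O(1)$ time, completing the proof.
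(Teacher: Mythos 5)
Your proof is correct and follows essentially the same route as the paper: $\pl(i)$ is inherited from the proof of Theorem~\ref{thm:representation}, and $\lcl(i)$/$\rcl(i)$ are handled by the same case split on $\lr[i]$ and $E[i]$ with a single $\findclose$ on $\lp$ or $\rp$. Your explicit $\lcl(i)=\findclose_{\lp}(i)+1$ (the child is the node visited at the iteration \emph{after} the matching pop) is in fact the precise form of what the paper states slightly loosely as ``returning $\findclose(i)$ on $\lp$,'' and is consistent with the paper's worked example.
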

\begin{proof}
The proof of Theorem~\ref{thm:representation} shows how to support $\pl{}(i)$ in $O(1)$ time.
Next, to compute $\lcl{}(i)$, it is enough to consider the following two cases according to Lemma~\ref{lem:lrseq}: (1) If $\lr[i] = l$ and $\lp[i]$ is undefined, $\lcl{}(i)$ is $i+1$, and (2) if $\lr[i] = r$ and $\lp[i] = \mbox{`('}$, we can compute $\lcl{}(i)$ in $O(1)$ time by returning $\findclose(i)$ on $\lp$.
Similarly, $\rcl{}(i)$ can be computed in $O(1)$ time using $\lr$ and $\rp$ analogously.
\end{proof}

Now we can compute $\phi(i+1)$ from $\phi(i)$ without using the two stacks in $O(1)$ time.
We denote this operation by $\nxt{}(i)$.
\begin{enumerate}[itemsep=3pt]
    \item If $\lr[i] = l$ and the left child of $\phi(i)$ exists, $\nxt{}(i)$ is the left child of $\phi(i)$.
    \item If $\lr[i] = r$ and the right child of $\phi(i)$ exists, $\nxt{}(i)$ is the right child of $v$.
    \item If $\lr[i] = l$ and the left child of $\phi(i)$ does not exist, 
    $\nxt{}(i)$ is the left child of $\phi(j)$ where $j = \findclose(i)$ on $\lp$.
    \item If $\lr[i] = r$ and the right child of $\phi(i)$ does not exist, 
    $\nxt{}(i)$ is the left child of $\phi(j)$ where $j = \findclose(i)$ on $\rp$.
\end{enumerate}

\subsection{Computing inorders}\label{sec:inorder}

First, we define the inorder of a node in a binary tree.
Inorders of nodes are defined recursively as follows.  We first traverse the left subtree of the root node and give inorders to the nodes in it,
then give the inorder to the root, and finally traverse the right subtree of the root node and give inorders.
In~\cite{navarro2014fully}, inorders are defined for only nodes with two or more children.
To apply their data structures to our problem, we modify a binary tree as follows.
For each leaf, we add two dummy children.
If a node has only right child,  
we add a dummy left child.
If a node has only left child, 
we add a dummy right child.  Then in the BP sequence $B$ of the modified tree, 
$i$-th occurrence of `)(' corresponds to the node with inorder $i$.
Therefore we can compute rank and select on `)(' in constant time using the data structure of~\cite{navarro2014fully}
if we store the BP sequence $B$ of the modified tree explicitly.
However, if we do so, we cannot achieve a succinct representation of a Baxter permutation.
We implicitly store $B$.  The details are explained next.

\subsection{Implicitly storing BP sequences}\label{sec:BP}

We first construct $B$ for $\MinC{}(\pi)$
and auxiliary data structures of~\cite{navarro2014fully} for tree navigational operations.
In their data structures, $B$ is partitioned into blocks of length $\ell$ for some parameter $\ell$,
and search trees called \textit{range min-max trees} are constructed on them.
In the original data structure, blocks are stored explicitly, whereas in our data structure,
they are not explicitly stored and temporarily computed from our representation.
If we change the original search algorithm so that an access to an explicitly stored block
is replaced with decoding the block from our representation, we can use the range min-max
trees as a black box, and any tree navigational operation
works using their data structure.  Because the original algorithms have constant query time,
they do a constant number of accesses to blocks.  If we can decode a block in $t$ time,
A tree navigational operation is done in $O(t)$ time.
Therefore what remains is, given a position of $B$, to extract a block of $\ell$ bits.

Given the inorder of a node, we can compute its label as follows.
For each block, we store the following.  For the first bit of the block, there are four cases:
(1) it belongs to a node in the Cartesian tree.
(2) it belongs to two dummy children for a leaf in the Cartesian tree.
(3) it belongs to the dummy left child of a node.
(4) it belongs to the dummy right child of a node.
We store two bits to distinguish these cases.
For case (1), we store the label and the inorder of the node using $\log n$ bits, and the information that
the parenthesis is either open or close using $1$ bit.
For case (2), we store the label and the inorder of the parent of the two dummy children,
and the offset in the pattern `(()())' of the first bit in the block.
For cases (3) and (4), we store the label and the inorder of the parent of the dummy node
and the offset in the pattern `()'.

To extract a block, we first obtain the label of the first non-dummy node in the block.
Then from that node, we do a depth-first traversal using $\lcl{}(i)$,
$\rcl{}(i)$, and $\pl{}(i)$, and compute a sub-sequence of $B$ for the block.
During the traversal, we also recover other dummy nodes.  Because the sub-sequence is of length $\ell$,
there are $O(\ell)$ nodes and it takes $O(\ell)$ time to recover the block.
To compute an inorder rank and select, we use a constant number of blocks.
Therefore it takes $O(\ell)$ time.
The space complexity for additional data structure is $O(n \log n/\ell)$ bits.
If we choose $\ell = \omega(\log n)$, the space is $o(n)$.

To support other tree operations including $\rminq$, $\nsv$, and $\psv$ queries on $\pi$, we use the original auxiliary data structures of~\cite{navarro2014fully}.
The space complexity is also $O(n \log n/\ell)$ bits. 

\subsection{Converting labels and inorders}\label{sec:labelinorder}

For the minimum Cartesian tree $\MinC{}(\pi)$ of Baxter permutation $\pi$, the label of the node with inorder $i$ is $\pi(i)$.
The inorder of the node with label $j$ is denoted by $\pi^{-1}(j)$.

We showed how to compute the label of the node with given inorder $i$ above.
This corresponds to computing $\pi(i)$.
Next we consider given label $j$, to compute the inorder $i = \pi^{-1}(j)$ of the node with label $j$.
Note that $\pi(i) = j$ and $\pi^{-1}(j) = i$ hold.

We use $\nxt{}(\cdot)$ to compute the inorder of the node with label $j$.
Assume $i\ell+1 \le j < (i+1)\ell$.
We start from the node $\phi(i\ell+1)$ with label $i\ell+1$ and iteratively compute $\nxt{}(\cdot)$ until we reach the node with label $j$.
Therefore for $i= 0, 1, \ldots, n/\ell$, we store the positions in the modified BP sequence for nodes $\phi(i\ell+1)$ using $O(n \log n/\ell)$ bits.
If $\nxt{}(i\ell+k)$ is a child of $\phi(i\ell+k)$, we can compute its position in the modified BP sequence
using the data structure of~\cite{navarro2014fully}.
If $\nxt{}(i\ell+k)$ is not a child of $\phi(i\ell+k)$, we first compute $p = \findclose(i\ell+k)$ on $\lp$ or $\rp$.
A problem is how to compute the node $\phi(p)$ and its inorder. 
To compute the inorder of $\phi(p)$, we use \emph{pioneers} of the BP sequence~\cite{GEARY2006231}.
A pioneer is an open or close parenthesis whose matching parenthesis belongs to a different block.
If there are multiple pioneers between two blocks, only the outermost one is a pioneer.
The number of pioneers is $O(n/\ell)$ where $\ell$ is the block size.
For each pioneer, we store its position in the BP sequence.
Therefore the additional space is $O(n \log n/\ell)$ bits.
Consider the case we obtained $p = \findclose(v)$.  If $v$ is a pioneer, the inorder of $\phi(p)$ is stored.
If $v$ is not a pioneer, we go to the pioneer that tightly encloses $v$ and $\phi(p)$,
obtain its position in the BP sequence, and climb the tree to $\phi(p)$.
Because $\phi(p)$ and the pioneer belong to the same block, this takes $O(\ell)$ time.
Computing a child also takes $O(\ell)$ time.
We repeat this $O(\ell)$ times until we reach $\phi(j)$.
Therefore the time complexity for converting the label of a node to its inorder takes $O(\ell^2)$ time.
The results are summarized as follows.
\begin{theorem}\label{thm:permutation_query}
    For a Baxter permutation $\pi$ of size $n$, $\pi(i)$ and $\pi^{-1}(j)$ can be computed
    in $O(\ell)$ time and $O(\ell^2)$ time, respectively, using a $3n+O(n \log n/\ell)$ bit data structure.
    This is a succinct representation of a Baxter permutation if $\ell = \omega(\log n)$. The data structure also can support the tree navigational queries in Section~\ref{pre:BP2} on $\MinC{}(\pi)$, $\rminq$, $\psv$, and $\nsv$ queries in $O(\ell)$ time.
\end{theorem}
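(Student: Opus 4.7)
The plan is to assemble the three components developed in Sections~\ref{sec:inorder}--\ref{sec:labelinorder} into the bounds claimed, since almost all the machinery is already in place; what remains is to account for the space and the query times carefully and to identify the single step that drives the $O(\ell^2)$ bound.

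First I would do the accounting. The underlying encoding from Theorem~\ref{thm:representation} contributes $3n + o(n)$ bits. On top of it I add the range min-max tree of~\cite{navarro2014fully} over the implicit modified BP sequence $B$ of $\MinC{}(\pi)$, the per-block headers used to bootstrap block decoding (label and inorder of the first non-dummy node of each length-$\ell$ block, plus a two-bit case tag and an offset), the pioneer positions of $B$ following~\cite{GEARY2006231}, and the table storing the BP-position of $\phi(i\ell+1)$ for $i=0,1,\ldots,n/\ell$. Each of these has $O(n/\ell)$ entries of $O(\log n)$ bits, totalling $O(n\log n/\ell)$ bits; taking $\ell = \omega(\log n)$ drives the redundancy to $o(n)$, giving succinctness.

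Next I would handle the $\pi(i)$ query. Given inorder $i$, the node I need corresponds to the $i$-th occurrence of the pattern `)(' in $B$; the $\rank$/$\select$ infrastructure of~\cite{navarro2014fully} locates the block of length $\ell$ containing it. The block itself is not stored, so I decode it by reading its header to obtain the label of the first non-dummy node, then performing a depth-first traversal using the constant-time $\lcl{}$, $\rcl{}$, $\pl{}$ operations supplied by Lemma~\ref{lem:dfsmin}. The walk touches $O(\ell)$ nodes, reconstructs the block in $O(\ell)$ time, and the header encodes the label of the inorder-$i$ node, yielding $\pi(i)$ in $O(\ell)$ time overall.

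The $\pi^{-1}(j)$ direction is the main obstacle and dictates the $O(\ell^2)$ bound. I would locate the unique $i$ with $i\ell+1\le j<(i+1)\ell$, start at the stored node $\phi(i\ell+1)$ (whose $B$-position is tabulated), and iterate $\nxt{}(\cdot)$ at most $\ell$ times until the label $j$ is reached. The subtle step is that $\nxt{}(\cdot)$ may invoke $\findclose$ on $\lp$ or $\rp$, which returns a label but not a position in $B$; without a $B$-position I cannot continue the traversal or eventually report the inorder. The resolution is the pioneer trick of~\cite{GEARY2006231}: if the current parenthesis is a pioneer, its $B$-position is already stored; otherwise I move to the tightly enclosing pioneer, fetch its position, and climb back down inside the enclosing block, an intra-block walk of $O(\ell)$ time using $\lcl{}$, $\rcl{}$, $\pl{}$. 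Each of the $O(\ell)$ iterations thus costs $O(\ell)$, totalling $O(\ell^2)$, and at the end the BP-position of $\phi(j)$ is known so its inorder $\pi^{-1}(j)$ is read off directly.

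Finally, the tree navigational queries on $\MinC{}(\pi)$ and the $\rminq$, $\psv$, $\nsv$ queries follow by applying the original algorithms of~\cite{navarro2014fully,DBLP:journals/tcs/Fischer11} as a black box, with the single modification that whenever they access an explicit block of $B$ I instead decode the block on the fly by the $O(\ell)$-time procedure above. Because the original constant-time algorithms access only $O(1)$ blocks, each of these queries is answered in $O(\ell)$ time, completing all the bounds stated in the theorem.
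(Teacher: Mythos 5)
Your proposal is correct and follows essentially the same route as the paper: the implicit block-decoded BP of the dummy-augmented $\MinC{}(\pi)$ with per-block headers and DFS reconstruction via Lemma~\ref{lem:dfsmin} for $\pi(i)$ and the black-box tree operations, the stored positions of $\phi(i\ell+1)$ together with iterated $\nxt{}(\cdot)$ and the pioneer argument of~\cite{GEARY2006231} for the $O(\ell^2)$ bound on $\pi^{-1}(j)$, and the $O(n\log n/\ell)$-bit accounting. No gaps to report.
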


Note that Theorem~\ref{thm:permutation_query} also implies that we can obtain the $(2n+o(n))$-bit succinct data structure of an alternating Baxter permutation of size $n$ that support $\pi(i)$ and $\pi^{-1}(j)$ can be computed in $O(\ell)$ time and $O(\ell^2)$ time, respectively, for any $\ell = \omega(\log n)$. 

\subsection{Navigation queries on Maximum Cartesian trees}\label{sec:maximum}
In this section, we show the representation of Theorem~\ref{thm:representation} can also support the tree navigational queries on $\MaxC{}(\pi)$ in the same time as queries on $\MinC{}(\pi)$, which will be used in the succinct representations of mosaic floorplans and plane bipolar orientations.

Note that we can traverse the nodes in $\MaxC{}(\pi)$ according to the decreasing order of their labels, using the same two-stack based algorithm as described in Section~\ref{sec:rep}.
Now, let $\phi'(i)$ represent the node in $\MaxC{}(\pi)$ labeled with $i$. We then define sequences $\lr$ and $E$ on $\MaxC{}(\pi)$ in a manner analogous to the previous definition (we denote them as $\lr'$, and $E'$, respectively). The only difference is that the value of $i$-th position of these sequences corresponds to the node $\phi'(n-i+1)$ instead of $\phi(i)$, since we are traversing from the node with the largest label while traversing $\MaxC{}(\pi)$. Then by Theorem~\ref{thm:representation} and \ref{thm:permutation_query}, it is enough to show how to decode any $O(\log n)$-size substring of $\lr'$ and $E'$ from $\lr$ and $E$, respectively. 

We begin by demonstrating that for any $i \in [1, \dots, n-1]$, the value of $\lr'[i]$ is $l$ if and only if $\lr[n-i]$ is $r$. As a result, our representation can decode any $O(\log n)$-sized substring of $\lr'$ in constant $O(1)$ time. Consider the case where $\lr[i]$ is $l$ (the case when $\lr[i] = r$ is handled similarly). In this case, according to the two-stack based algorithm, $\phi(i+1)$ is the left child of $\phi(i_1)$, where $i_1 \le i$. Now, we claim that $\phi'(i)$ is the right child of its parent. Suppose, for the sake of contradiction, that $\phi'(i)$ is a left child of $\phi'(i_2)$. Then $\phi(i+1)$ cannot be an ancestor of $\phi(i)$, as there are no labels between $i+1$ and $i$. Thus, $i_2 > i+1$, and there must exist a lowest common ancestor of $\phi'(i)$ and $\phi'(i+1)$ (denoted as $\phi'(k)$). At this point, $\phi'(i+1)$ and $\phi'(i_2)$ reside in the left and right subtrees rooted at $\phi'(k)$, respectively.  Now $i_3 \le i$ be a leftmost leaf of the subtree rooted at $\phi'(i)$. 
Then there exists a pattern $2\customdash{0.4em}41\customdash{0.4em}\customdash{0.4em}3$ induced by $(i+1)-k$ and $i_3-i_2$, which contradicts the fact that $\pi$ is a Baxter permutation.  

Next, we show that the following lemma implies that the representation can also decode any $O(\log n)$-size substring of $E'$ in $O(1)$ time from $E$ along with $\pi(1)$ and $\pi(n)$.

\begin{lemma}\label{lem:maxcartesian}
Given a permutation $\pi$, $\phi(i)$ has a left child if and only if $\pi^{-1}(i) > 1$ and 
$\pi(\pi^{-1}(i)-1) > i$. Similarly, $\phi(i)$ has a right child if and only if $\pi^{-1}(i) < n$ and 
$\pi(\pi^{-1}(i)+1) > i$.
\end{lemma}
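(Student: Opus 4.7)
The plan is to translate the condition on children of $\phi(i)$ into a purely structural statement about the subtree it roots in $\MinC(\pi)$. Set $k := \pi^{-1}(i)$, so $\pi(k) = i$. By the recursive definition of $\MinC(\pi)$, every node is the root of the minimum Cartesian tree on some contiguous interval of positions; let $[a,b]$ with $a \le k \le b$ denote the inorder range covered by the subtree rooted at $\phi(i)$. Then the left and right subtrees of $\phi(i)$ are supported on $[a,k-1]$ and $[k+1,b]$ respectively, and $i = \min\{\pi(j) : a \le j \le b\}$ since $\phi(i)$ roots that subtree.

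The forward direction now drops out immediately. If $\phi(i)$ has a left child, then $[a,k-1]$ is nonempty, forcing $k > 1$ and $k-1 \in [a,b]$; since $i$ is the strict minimum on $[a,b]$, we get $\pi(k-1) > i$. The right-child case is the mirror image of this argument applied to $[k+1,b]$.

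For the converse I will invoke the classical LCA-minimum property of minimum Cartesian trees: for inorder positions $p < q$, the lowest common ancestor of the corresponding nodes carries the label $\min\{\pi(p),\pi(p+1),\ldots,\pi(q)\}$. Assuming $k > 1$ and $\pi(k-1) > i$, applying this property to $p = k-1$ and $q = k$ yields that the LCA equals $\phi(\min(\pi(k-1),i)) = \phi(i)$, so $\phi(\pi(k-1))$ lies in the subtree of $\phi(i)$; because its inorder $k-1$ is strictly smaller than $k$, it must lie in the left subtree, so $\phi(i)$ has a left child. The right-child converse is identical on positions $k$ and $k+1$.

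The only ingredient that really needs justification is the LCA-minimum property, and I regard this as the main thing to spell out; however it is well known and admits a short inductive proof on subtree size, since at the root of any subtree either the two inorder positions straddle the overall minimum position (the LCA is then the root and the claim is immediate) or they lie strictly on the same side, and we recurse into that side. I do not anticipate any further obstacles.
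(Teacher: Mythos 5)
Your proof is correct and takes essentially the same route as the paper's: the forward direction follows from the root label being the minimum of $\pi$ over its subtree's inorder interval, and your converse via the LCA-minimum property on the consecutive positions $k-1$ and $k$ is just a cleaner phrasing of the paper's argument that otherwise some element would have to sit at a position strictly between two adjacent positions. (As a side note, the paper's own proof restates the hypothesis as $\pi(\pi^{-1}(i)-1) < i$, which is a typo for $> i$; your version matches the lemma as stated.)
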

\begin{proof}
We only prove that $\phi(i)$ has a left child if and only if $\pi^{-1}(i) > 1$ and 
$\pi(\pi^{-1}(i)-1) < i$ (the other statement can be proved using the same argument).
Let $i_1$ be $\pi(\pi^{-1}(i)-1)$. From the definition of the minimum Cartesian tree, if $\phi(i_1)$ is at the left subtree of $\phi(i)$, it is clear that $i_1 > i$. 
Now, suppose $i_1 > i$, but $\phi(i)$ does not have a left child. In this case, $\phi(i)$ cannot be an ancestor of $\phi(i_1)$. Thus, there must exist an element in $\pi$ positioned between $i_1$ and $i$, which contradicts the fact that they are consecutive elements.
\end{proof}

As a conclusion, the data structure of Theorem~\ref{thm:permutation_query} can support the tree navigational queries in Section~\ref{pre:BP2} on $\MaxC{}(\pi)$, and $\rmaxq$, $\psv$, and $\nsv$ queries in $\omega (\log n)$ time using $o(n)$-bit auxiliary structures from the results in Section~\ref{sec:BP}. We summarize the results in the following theorem.
\begin{theorem}\label{thm:max_cartesian}
    For a Baxter permutation $\pi$ of size $n$, The succinct data structure of Theorem~\ref{thm:permutation_query} on $\pi$ can support the tree navigational queries in Section~\ref{pre:BP2} on $\MaxC{}(\pi)$, $\rmaxq$, $\plv$, and $\nlv$ queries in $O(f_1(n))$ time for any $f_1(n) = \omega(\log n)$.
\end{theorem}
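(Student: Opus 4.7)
The plan is to lift the full machinery of Sections~\ref{sec:BP} and \ref{sec:labelinorder} --- originally developed for $\MinC{}(\pi)$ --- verbatim to $\MaxC{}(\pi)$, once we verify that any $O(\log n)$-sized window of the auxiliary sequences $\lr'$ and $E'$ for $\MaxC{}(\pi)$ can be decoded in $O(1)$ time from the stored data $\lr, E$ together with a constant amount of boundary information, namely $\pi(1)$ and $\pi(n)$.

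For $\lr'$, the excerpt already shows $\lr'[i] = l \iff \lr[n-i] = r$, so any $\Theta(\log n)$-length window of $\lr'$ is obtained by reading the reversed window of $\lr$ and flipping $l\leftrightarrow r$ via a precomputed table in $O(1)$ time. For $E'$, I would use the dual of Lemma~\ref{lem:maxcartesian} (proved by the same two-line argument with the min/max roles swapped): for $j$ with $\pi^{-1}(j)\notin\{1,n\}$, the node $\phi'(j)$ in $\MaxC{}(\pi)$ has a left (resp.\ right) child iff $\pi(\pi^{-1}(j)-1)<j$ (resp.\ $\pi(\pi^{-1}(j)+1)<j$). Comparing with Lemma~\ref{lem:maxcartesian} itself, these conditions are exactly the negations of the ones characterising the left/right children of $\phi(j)$ in $\MinC{}(\pi)$; since a permutation value is either strictly greater or strictly less than $j$, no ties arise. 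Consequently, for every $j\notin\{\pi(1),\pi(n)\}$, the bit of $E'$ reporting a left (right) child of $\phi'(j)$ is the complement of the corresponding bit of $E[j]$, and the two exceptional labels $\pi(1),\pi(n)$ are stored explicitly in $O(\log n)$ extra bits. A second precomputed table (indexed by the corresponding $(\log n)/4$-sized windows of $E$, together with a flag for whether $\pi(1)$ or $\pi(n)$ falls inside) then decodes any $O(\log n)$-sized window of $E'$ in $O(1)$ time.

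With $\lr'$ and $E'$ decodable in $O(1)$ time, every ingredient of Theorem~\ref{thm:permutation_query} transfers mechanically to $\MaxC{}(\pi)$: the analogs of Lemma~\ref{lem:dfsmin} supply $\lcl$, $\rcl$, and $\pl$ in $O(1)$ time; the analog of $\nxt{}$ enumerates the nodes in decreasing label order; and the block-wise decoding of the (modified) BP sequence of $\MaxC{}(\pi)$, together with range min-max trees of~\cite{navarro2014fully} and pioneers~\cite{GEARY2006231}, yields all the tree navigational queries from Section~\ref{pre:BP2} on $\MaxC{}(\pi)$ in $O(\ell)$ time, using $O(n\log n/\ell)=o(n)$ additional bits for any block length $\ell=\omega(\log n)$.

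Finally, $\rmaxq$, $\plv$, and $\nlv$ on $\pi$ are the standard $\MaxC$-analogs of the $\rminq$, $\psv$, and $\nsv$ queries, which Section~\ref{pre:BP} recalls can be answered in constant time from tree-navigational primitives on the relevant Cartesian tree; applying these primitives on our implicit representation of $\MaxC{}(\pi)$ answers each of the three queries in $O(f_1(n))$ time. The delicate step --- and really the only one that requires care --- is the decoding of $E'$ described above: keeping straight the reversed indexing convention ($E'[i]$ refers to $\phi'(n-i+1)$), and correctly handling the two boundary labels $\pi(1),\pi(n)$ where the min/max complementation fails. Once this dualisation is in place, no new algorithmic ingredient beyond Theorem~\ref{thm:permutation_query} is needed.
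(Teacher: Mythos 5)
Your proposal follows the paper's proof essentially verbatim: decode windows of $\lr'$ via the reversal-and-complement relation, decode windows of $E'$ via the dual of Lemma~\ref{lem:maxcartesian} (complementation away from the labels $\pi(1)$ and $\pi(n)$, which are stored explicitly), and then rerun the machinery of Sections~\ref{sec:BP} and~\ref{sec:labelinorder} on $\MaxC{}(\pi)$. The one caveat is that you take the relation $\lr'[i]=l \iff \lr[n-i]=r$ as already established, whereas the paper must (and does) prove it --- via a lowest-common-ancestor argument producing a forbidden $2\customdash{0.4em}41\customdash{0.4em}3$ pattern, the only step where the Baxter hypothesis is actually used --- so if that relation is not part of what you may cite, it is the one piece of your argument still owed a proof.
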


\section{Succinct Data Structure of Separable Permutation}\label{sec:separable_represent}
In this section, we present a succinct data structure for a separable permutation $\rho = (\rho(1), \dots, \rho(n))$ of size $n$
that supports all the queries in Theorem~\ref{thm:permutation_query} and \ref{thm:max_cartesian} in $O(1)$ time. 
The main idea of the data structure is as follows. 
It is known that for the separable permutation $\rho$, there exists a unique \textit{separable tree ($v-h$ tree)} $T_{\rho}$ of $n$ leaves~\cite{DBLP:journals/ipl/BoseBL98, DBLP:conf/dac/SzepieniecO80}, which will be defined later. Since $T_{\rho}$ is a labeled tree with at most $2n-1$ nodes, $O(n \log n)$ bits are necessary to store $T_{\rho}$ explicitly. Instead, we store it using a tree covering where each micro-tree of $T_{\rho}$ is stored as an index of the precomputed table that maintains all separable permutations whose separable trees have at most $\ell_2$ nodes, 
where $\ell_2$ is a parameter of the size of the micro-tree of $T_{\rho}$, which will be decided later.
After that, we show how to support the queries in Theorem~\ref{thm:permutation_query} and \ref{thm:max_cartesian} in $O(1)$ time using the representation, along with $o(n)$-bit auxiliary structures.

\subsection{Succinct Representation}
Given a separable permutation $\rho$ of size $n$, the separable tree $T_{\rho}$ of $\rho$ is an ordered tree with $n$ leaves defined as follows~\cite{DBLP:conf/dac/SzepieniecO80}:
\begin{itemize}[itemsep=1pt]
    \item Each non-leaf node of $T_{\rho}$ is labeled either $\oplus$ or $\ominus$. We call a $\oplus$ node as an internal node labeled with $\oplus$, and similarly, a $\ominus$ node as an internal node labeled with $\ominus$.
    \item The leaf node of $T_{\rho}$ whose leaf rank 
    (i.e., the number of leaves to the left) $i$ has a label $\rho(i)$. 
    In the rest of this section, we refer to it as the leaf $\rho(i)$.
    \item Any non-leaf child of $\oplus$ node is a $\ominus$ node. Similarly, any non-leaf child of $\ominus$ node is a $\oplus$ node.
    \item For any internal node $p \in T_{\rho}$, let $\rho_p$ be a sequence of the labels of $p$'s children from left to right, by replacing the label of non-leaf child of $p$ to the label of the leftmost leaf node in the rooted subtree at the node. Then if $p$ is a $\oplus$ (resp. $\ominus$ node), $\rho_p$ is an increasing (resp. decreasing) subsequence of $\rho$.
\end{itemize}

See Figure~\ref{fig:separable} for an example. Szepienic and Otten~\cite{DBLP:conf/dac/SzepieniecO80} showed that for any separable permutation of size $n$, there exists a unique separable tree of it with $n$ leaves. 

We maintain $\rho$ through the tree covering algorithm applied to $T_{\rho}$, with the parameters for the sizes of mini-trees and micro-trees as $\ell_1 = \log^2 n$ and $\ell_2 = \frac{\log n}{6}$, respectively. Here, the precomputed table maintains all possible separable permutations whose corresponding separable trees have at most $\ell_2$ nodes. Additionally, two special cases are considered: when the micro-tree is a singleton $\oplus$ or $\ominus$ node. Since any separable permutation stored in the precomputed table has a size at most $\ell_2$, there exist $o(n)$ indices in the precomputed table.

The micro-trees of $T_{\rho}$ are stored as their corresponding indices in the precomputed table, using $n \log (3+2\sqrt{2}) + o(n) \simeq 2.54n + o(n)$ bits in total. Furthermore, for each mini-tree (or micro-tree), we store $O(\log n)$-bit (or $O(\log \log n)$-bit) additional information to answer $\rho(i)$ and $\rho^{-1}(j)$ queries, which will be described in the next section.
In Section~\ref{sec:querysep}, we consider how to support $\rho(i)$ and $\rho^{-1}(j)$, as well as $\rminq$, $\rmaxq$, $\psv$, $\plv$, $\nsv$, and $\nlv$ queries on $\rho$ in $O(1)$ time, using the representation along with $o(n)$-bit additional auxiliary structures.

\begin{figure}[bt]
\centering
\includegraphics[width=0.6\textwidth]{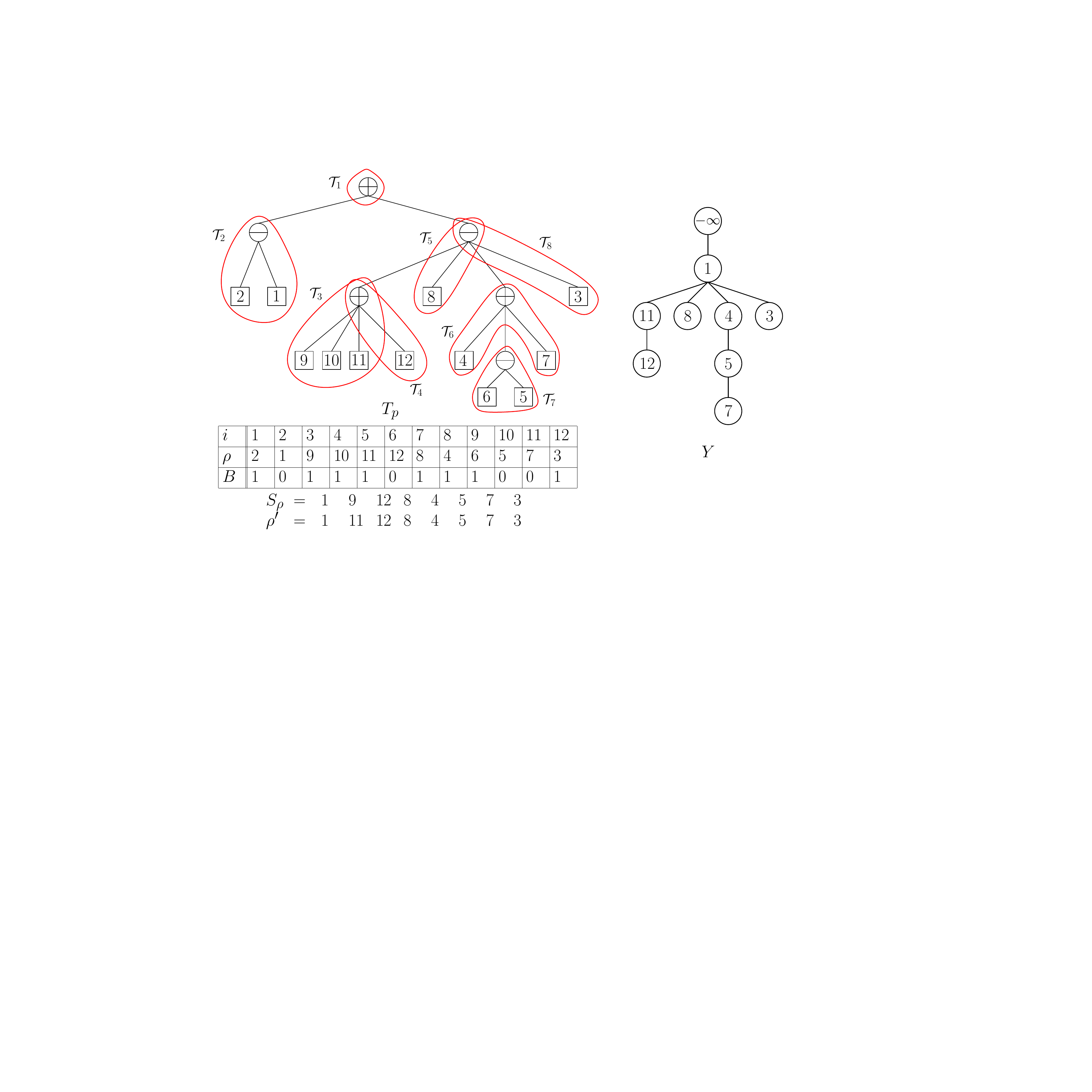}
\caption{An example of the representation of the separable permutation $\rho = (2, 1, 9, 10, 11, 12, 8, 4, 6, 5, 7, 3)$. 
Each tree within the red area represents a mini-tree of $T_{\rho}$ with $\ell_1 = 3$.}
\label{fig:separable}
\end{figure}

\section{Query algorithms on Separable Permutations}~\label{sec:querysep}
In this section, we consider the efficient support for $\rho(i)$, $\rho^{-1}(j)$, $\rminq$, and $\rmaxq$ on $\rho$ in $O(1)$ time, as well as $\psv$, $\plv$, $\nsv$, and $\nlv$ queries in $O(\log \log n)$ time, using the representation from the previous section along with $o(n)$-bit additional auxiliary structures.
We introduce the following proposition, derived directly from the definition of $T_{\rho}$ and the tree-covering algorithm. 

\begin{proposition}\label{prop:vhtree}
Given a separable tree $T_{\rho}$ of the separable permutation $\rho$, the following holds for each mini-tree (or micro-tree) $\tau$ of $T_{\rho}$ unless $\tau$ is either a singleton $\oplus$ or $\ominus$ node:
\begin{enumerate}[label=(\arabic*)]
    \item If $\tau$ has no boundary node, the labels of the leaves in $\tau$ cover all integers within a specified interval $[\min(\tau), \max(\tau)]$.
    
    \item If $\tau$ has a boundary node, the labels of the leaves in $\tau$ cover all integers in at most two disjoint intervals $[\min_l(\tau), \max_l(\tau)]$ and $[\min_r(\tau), \max_r(\tau)]$.
    
    Let $r_{\tau}$ denote the root of the mini-tree (or micro-tree) corresponding to the child of $\tau$ in the tree over mini-trees (or mini-tree over micro-trees) of $T_{\rho}$. Then, any leaf with a label in the interval $[\min_l(\tau), \max_l(\tau)]$ precedes $r_{\tau}$ in the preorder traversal of $T_{\rho}$. Similarly, any leaf with a label in the interval $[\min_r(\tau), \max_r(\tau)]$ follows $r_{\tau}$ in the preorder traversal of $T_{\rho}$.
    
    \item In case (2), when the boundary node of $\tau$ is a $\oplus$ node, $\min_r(\tau) = \max_l(\tau) + bsize(\tau)$, where $bsize(\tau)$ denotes the number of leaves in the rooted subtree at the boundary node of $\tau$. Conversely, if the boundary node of $\tau$ is a $\ominus$ node, $\min_l(\tau)  = \max_r(\tau) + bsize(\tau)$.
\end{enumerate}
\end{proposition}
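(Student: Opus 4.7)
The plan is to reduce all three parts to a single structural lemma about separable trees, which I would establish first: for every internal node $v$ of $T_\rho$, the labels of the leaves in the full subtree rooted at $v$ form a contiguous interval $[\min(v),\max(v)]$ of integers, and in the left-to-right order of $v$'s children the corresponding sub-intervals are consecutive and monotone --- strictly increasing in value if $v$ is $\oplus$ and strictly decreasing if $v$ is $\ominus$. I would prove this by induction on the height of $v$, combining the monotonicity of $\rho_v$ from the definition of $T_\rho$ with the observation that $v$'s subtree occupies a contiguous range of positions in $\rho$ and therefore its label set, being a set of consecutive positions' values from a permutation with no valid intervening $2$-$4$-$1$-$3$ or $3$-$1$-$4$-$2$ pattern to break it, must itself be a contiguous interval of integers.

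With this lemma in hand, part (1) is immediate. When $\tau$ contains no non-root boundary node, the Farzan-Munro decomposition places in $\tau$ the entire subtrees hanging off a contiguous range of children of $\tau$'s root $r$; applying the lemma at $r$, the union of this contiguous range of monotone consecutive sub-intervals is itself a single contiguous interval $[\min(\tau),\max(\tau)]$.

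For part (2), let $b$ be the unique non-root boundary node of $\tau$ and let $r_\tau$ be its child lying outside $\tau$. The leaves of $\tau$ inside $b$'s subtree are obtained from $b$'s full subtree by deleting exactly the leaves of $r_\tau$'s subtree, and by the lemma applied at $b$ this deletion removes one of the monotone consecutive sub-intervals at $b$, leaving at most two contiguous value intervals. Any further leaves of $\tau$ come from the contiguous range of $r$'s other included children, whose contribution by the lemma at $r$ is itself contiguous and lies entirely on one fixed side in value order of the block at $b$; stitching these pieces together yields the two intervals $[\min_l(\tau),\max_l(\tau)]$ and $[\min_r(\tau),\max_r(\tau)]$. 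For the preorder claim, I would use that among $b$'s children exactly those appearing to the left of $r_\tau$ in left-to-right order appear before $r_\tau$ in preorder of $T_\rho$, with a parallel split at $r$; combining this with the $\oplus/\ominus$ sign at $b$ determines which of the two value intervals is made of preorder predecessors of $r_\tau$ and which of successors.

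Part (3) is a quantitative sharpening of part (2). When $b$ is $\oplus$ the lemma identifies $\max_l(\tau)$ as the top of the sub-interval of $b$'s children immediately preceding $r_\tau$'s and $\min_r(\tau)$ as the bottom of the sub-interval immediately following it; since these sub-intervals are consecutive integer intervals, the difference $\min_r(\tau)-\max_l(\tau)$ is completely determined by the size of the excised block at $r_\tau$, yielding the claimed identity $\min_r(\tau)=\max_l(\tau)+bsize(\tau)$. The $\ominus$ case is symmetric with the roles of ``left'' and ``right'' in value order exchanged. The main obstacle I expect is the careful bookkeeping between the four orderings that intertwine throughout the argument --- the left-to-right child order, preorder of $T_\rho$, value order of labels, and the $\oplus/\ominus$ sign at each internal node --- which must be aligned consistently, together with verifying the contiguity-of-children property of the Farzan-Munro decomposition that parts (1) and (2) silently rely on.
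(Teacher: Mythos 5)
The paper does not actually prove this proposition (it is asserted to follow ``directly from the definition of $T_{\rho}$ and the tree-covering algorithm''), so there is no proof of record to compare you against. Your underlying structural lemma --- that every subtree of the separating tree covers a contiguous value interval and that the children's intervals are consecutive and monotone according to the $\oplus/\ominus$ sign --- is the right foundation, and it does yield part (1) and the ``at most two intervals'' half of part (2) essentially as you describe (the leaf set of $\tau$ is one interval minus the sub-interval of $r_\tau$'s subtree).

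The genuine gap is in the preorder claim of part (2), which is exactly the delicate point. You align the value split with the preorder split by considering only the children of $b$ and of $r$, but the signs alternate along the path from $r$ down to $b$, so leaves branching off that path at different depths on the same preorder side of $r_\tau$ land on \emph{opposite} sides in value order. Concretely, take $\rho=(5,2,3,4,1)$: its separable tree has a $\ominus$ root $v_1$ with children (leaf $5$, a $\oplus$ node $v_2$, leaf $1$), and $v_2$ has children (leaf $2$, leaf $3$, leaf $4$). The component $\tau=\{v_1,\,\mbox{leaf }5,\,v_2,\,\mbox{leaf }2,\,\mbox{leaf }4,\,\mbox{leaf }1\}$ with boundary node $v_2$ and excluded child the leaf $3$ satisfies every condition of Theorem~\ref{TC} (one non-root outgoing edge, contiguous children at the root); its leaf labels form the two intervals $[1,2]$ and $[4,5]$, yet the leaves preceding $r_\tau$ in preorder are $\{5,2\}$ and those following are $\{4,1\}$ --- neither value interval lies on one side of $r_\tau$. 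So the preorder alignment cannot be derived from the decomposition properties you invoke; the ``contiguity-of-children'' property you flag as the remaining obstacle is not the missing ingredient, and one would need a substantially finer property of the Farzan--Munro components (roughly, that the included side-branches along the entire $r$-to-$b$ path lie on one prescribed side) that your argument neither states nor establishes. The same unresolved orientation infects part (3): which interval is the ``left'' one is only well-defined once the preorder claim is secured, and the gap $\min_r(\tau)-\max_l(\tau)$ is governed by the number of leaves below the excluded child $r_\tau$ (as your sketch implicitly uses), not below the boundary node $b$ as the statement literally reads, so the bookkeeping there still needs to be pinned down.
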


\noindent\textbf{1. $\rho(i)$ and $\rho^{-1}(j)$ queries: } 
For each mini-tree $\mathcal{T}$ in $T_{\rho}$, we store $\min_l(\mathcal{T})$, $\max_l(\mathcal{T})$, $\min_r(\mathcal{T})$, and $\max_r(\mathcal{T})$ (if they exist). 
Additionally, we store the type ($\oplus$ or $\ominus$) of the root and the boundary node of $\mathcal{T}$.
This information can be stored in $O(\log n)$ bits per mini-tree, $O(n / \log n) = o(n)$ bits in total.
Also, for each micro-tree $\tau$ within the mini-tree $\mathcal{T}$, we maintain the same set of information relative to $\mathcal{T}$. Specifically, instead of storing the absolute value of $\min_l(\tau)$, we store the offset $\min_l(\tau) - \min_l(\mathcal{T})$ if the root of $\tau$ precedes the boundary node of $\mathcal{T}$ in the preorder traversal of $T_{\rho}$ (otherwise, we store the offset $\min_l(\tau) - \min_r(\mathcal{T})$). Similarly, we store $\min_r(\tau)$ as the offset $\min_r(\tau) - \min_l(\mathcal{T})$, if the boundary node of $\tau$ precedes to the boundary node of $\mathcal{T}$ in preorder traversal of $T_{\rho}$ (otherwise, we store the offset $\min_r(\tau) - \min_r(\mathcal{T})$). We also store $\max_l(\tau)$ and $\max_r(\tau)$ as offsets from either $\min_l(\mathcal{T})$ or $\min_r(\mathcal{T})$ in a similar manner. 
This information can be stored using $O(n \log\log n / \log n)  = o(n)$ bits in total. 
To answer the $\rho(i)$ query, we first find the mini-tree $\mathcal{T}$ and the micro-tree $\tau$ containing the leaf node $i$. This can be achieved in $O(1)$ time using $o(n)$-bit auxiliary structures for tree navigational queries~\cite{DBLP:journals/algorithmica/FarzanM14}. Let $i_{\tau}$ denote the leaf rank of the leaf node $i$ within $\tau$, which can be computed in $O(1)$ using the same structures. Additionally, let $\rho_{\tau}$ represent a separable permutation corresponding to $\tau$. We can then compute $\rho_{\tau}(i_{\tau})$ in $O(1)$ time using a precomputed table.
Therefore, by Proposition~\ref{prop:vhtree}, we can compute $\rho(i)$ in $O(1)$ time from either $\min_l(\mathcal{T})$ or $\min_r(\mathcal{T})$ along with the offsets stored at $\tau$ and $\rho_{\tau}(i_{\tau})$. Note that the offsets are used depending on the positions of the boundary nodes of $\mathcal{T}$ and $\tau$, as well as the leaf node $i$.

Next, we consider how to support the $\rho^{-1}(j)$ query in $O(1)$ time. 
Let $B[1, n]$ be a bit string of size $n$, where $B[p]$ is $1$ if and only if there exists a mini-tree $\mathcal{T}$ in $T_{\rho}$ such that $p$ corresponds to either $\min_l(\mathcal{T})$ or $\min_r(\mathcal{T})$. Since $B$ contains $O(n/\log^2 n) = o(n)$ $1$'s, we can represent $B$ using $o(n)$ bits, supporting $\rank$ and $\select$ queries in $O(1)$ time~\cite{raman2007succinct}. For each $1$ in $B$, we additionally store a pointer to the corresponding mini-tree (along with the information that indicates either the left or right part of the boundary node of the mini-tree) 
using $o(n)$ extra bits.
Similarly, for each mini-tree $\mathcal{T}$, we maintain two bit strings, $B^l_{\mathcal{T}}$ and $B^r_{\mathcal{T}}$, both of size $O(\log ^2 n)$, along with analogous pointers. More precisely, $B^l_{\mathcal{T}}$ (resp. $B^r_{\mathcal{T}}$) store the offsets of $\min_l(\tau)$ and $\min_r(\tau)$ of all micro-tree $\tau$ within $\mathcal{T}$ if they are stored as the offsets from $\min_l(\mathcal{T})$ (resp. $\min_r(\mathcal{T})$).
There exist $O(n/\log^2 n)$ such bit strings, and each of them contains $O(\log n) = o(\log^2 n)$ $1$'s since each mini-tree contains at most $O(\log n)$ micro-trees. 
Therefore, we can store these bit strings using $o(n)$ bits in total while supporting $\rank$ and $\select$ queries on them in $O(1)$ time.

To answer the $\rho^{-1}(j)$ query, we first identify the mini-tree that contains the leaf node $j$. From Proposition~\ref{prop:vhtree}, this can be done in $O(1)$ time by computing $\select (1, \rank(1, j))$. Suppose the leaf node $j$ is to the left of the boundary node of mini-tree $\mathcal{T}$. The case where the leaf node $j$ is to the right of the boundary node of $\mathcal{T}$ can be handled similarly. 
Next, we find the micro-tree within $\mathcal{T}$ that contains the leaf node $j$ in $O(1)$ time using $\rank$ and $\select$ queries on $B^l_{\mathcal{T}}$ with the offset $j - \min_l(\mathcal{T})$. Finally, based on the pointer corresponding to the $1$ in $B^l_{\mathcal{T}}$, we compute the leaf rank of the leaf node $j$ in $\tau$ by answering either $\rho_{\tau}^{-1}((j - \min_l(\mathcal{T})-\min_l(\tau)))$ or $\rho_{\tau}^{-1}((j - \min_l(\mathcal{T})-\min_r(\tau)))$. This can be done in $O(1)$ time using the precomputed table, allowing us to report the leaf rank of the leaf node $j$ in $T_{\rho}$ (i.e., $\rho^{-1}(j)$)  in $O(1)$ time.
\\\\
\noindent\textbf{2. $\rminq$ and $\rmaxq$ queries: }
We consider how to answer $\rminq$ queries on $\rho$ in $O(1)$ time, using the representation in Section~\ref{sec:separable_represent} of $T_{\rho}$ along with $o(n)$-bit auxiliary structures. Note that $\rmaxq$ queries can be similarly supported in $O(1)$ time with analogous structures.
To begin, for each separable permutation in the precomputed table, we store all possible answers to $\rminq$ queries on the permutation. Since each permutation in the table has $O(\log^2 n)$ distinct queries, the precomputed table can be maintained using $o(n)$ bits. Additionally, we employ a $O(n/\log^2 n)$-bit structure to answer $\rminq$ queries in $O(1)$ time on the subsequence $S_{\rho}$ of $\rho$ composed of the values $\min_l(\mathcal{T})$ and $\min_r(\mathcal{T})$ for all mini-trees $\mathcal{T}$ of $T_{\rho}$~\cite{DBLP:journals/tcs/Fischer11}. For each mini-tree of $T_{\rho}$ and its corresponding values in $S_{\rho}$, we store the bidirectional pointer between them, using $o(n)$ bits. 
Furthermore, for each mini-tree $\mathcal{T}$, let $S^l_{\mathcal{T}}$ (resp. $S^r_{\mathcal{T}}$) be the subsequences of $\rho$ composed of the values in 
$\min_l(\tau)$ and $\min_r(\tau)$ for all micro-trees within $\mathcal{T}$ whose corresponding leaves appear before (resp. after) the boundary node of $\mathcal{T}$ according to the preorder of $T_{\rho}$. 

To answer $\rminq{}(i, j)$ on $\rho$, we first determine the positions in $S_{\rho}$ corresponding to the mini-trees containing the leaves $i$ and $j$ in $O(1)$ time. Let $t_i$ and $t_j$ be these positions, 
respectively, and suppose $S_{\rho}(t_i) = \min_l(\mathcal{T}_i)$, and $S_{\rho}(t_j) = \min_l(\mathcal{T}_j)$ for some mini-trees $\mathcal{T}_i$ and $\mathcal{T}_j$ of $T_{\rho}$ (the other cases can be handled analogously).
Next, we find the positions in $S^l_{\mathcal{T}_i}$ and $S^l_{\mathcal{T}_j}$ corresponding to the micro-trees containing the leaves $i$ and $j$, denoted as $s_i$ and $s_j$, respectively. Let $s'_i$ and $s'_j$ be the positions in $\rho$ corresponding to the rightmost and leftmost leaves in the micro-trees that contain the leaves $i$ and $j$, respectively.
Then, $\rminq(i, j)$ is the position with the minimum value among the following five values:
(1) the $\rminq{}(i, s'_i)$-th value in $\rho$,
(2) the $\rminq{}(s'_j, j)$-th value in $\rho$,
(3) the $\rminq{}(s_i+1, |S^l_{\mathcal{T}_i}|)$-th value in $S^l_{\mathcal{T}_i}$,
(4) the $\rminq{}(1, s_j-1)$-th value in $S^l_{\mathcal{T}_j}$, and
(5) the $\rminq{}(t_i+1, t_j-1)$-th value in $S_{\rho}$.
Since values (1) and (2) can be computed in $O(1)$ time using the precomputed table with $\rho$ queries, and values (3), (4), and (5) can be computed in $O(1)$ time from the $\rminq$ queries on the sequences along with the pointers, $\rminq{}(i, j)$ can be computed in $O(1)$ time.
\\\\
\noindent\textbf{3. $\psv{}(i)$, $\plv{}(i)$, $\nsv{}(i)$, and $\nlv{}(i)$ queries: }
Here, we only consider how to answer $\psv{}(i)$ query in $O(1)$ time using the representation in Section~\ref{sec:separable_represent} of $T_{\rho}$ along with $o(n)$-bit auxiliary structures. Other queries also can be supported in $O(1)$ time with analogous structures.

First, we store all possible answers to $\psv{}$ queries on all separable permutations in the precomputed table using $o(n)$ bits.
For each mini-tree $\mathcal{T}$, we define two values $l_{\mathcal{T}}$ and $r_{\mathcal{T}}$ as follows: $l_{\mathcal{T}}$ is the label of the rightmost leaf node in $T_{\rho}$ that satisfies the following conditions (a) the leaf $l_{\mathcal{T}}$ is prior to the boundary node of $\mathcal{T}$ according to the preorder of $T_{\rho}$, and (b) $l_{\mathcal{T}}$ is smaller than the smallest label of the leaf node in the rooted subtree of $T_{\rho}$ at the boundary node of $\mathcal{T}$. 
Also, $r_{\mathcal{T}}$ is the label of the rightmost leaf node in $\mathcal{T}$.
Then we construct the subsequence $\rho'$ of $\rho$ composed to the values $l_{\mathcal{T}}$ and $r_{\mathcal{T}}$ for all mini-trees $\mathcal{T}$ within $T_{\rho}$. Now, we define a weighted ordered tree $Y$ as follows:
The root of $Y$ is a dummy node with a weight $-\infty$. 
Next, for each element $s$ in $\rho'$, we construct the node $y_s$ of $Y$ with a weight $s$, and store the pointer between the node $y_s$ and the leaf node in $T_{\rho}$ whose label is $s$. If there exists a value $s'$ in $\rho'$ where the $\psv$ at the position of $s$ in $\rho'$ is the position of $s'$, we define the parent of $y_s$ as $y_{s'}$. Otherwise, the parent of $y_s$ is the dummy node. The siblings in $Y$ are ordered based on the ordering of their corresponding values in $\rho'$. We then construct the data structure for answering \textit{weight-ancestor queries} on $Y$. This query is to find, given a query value and a node $y_p$, the nearest ancestor of $y_p$ in $Y$ whose weight is smaller than $p$. 
Since $Y$ has $O(n/\log^2n)$ nodes, where each node's weight is from a universe of size $n+1$, the weight-ancestor query on $Y$ can be answered in $O(\log \log n)$ time using $o(n)$ additional bits~\cite{DBLP:conf/esa/GawrychowskiLN14} by using van Emde Boas tree~\cite{DBLP:conf/focs/Boas75} for predecessor search.
Subsequently, for each mini-tree $\mathcal{T}$ of $T$, we maintain analogous structures with respect to the separable permutation corresponding to the leaves of $\mathcal{T}$ and the micro-trees within $\mathcal{T}$. In this case, we use $Y_{\mathcal{T}}$ to denote the weighted ordered tree on the permutation. The data structure for weight-ancestor queries on $Y_{\mathcal{T}}$ takes $o(\log^2 n)$ bits per each mini-tree $\mathcal{T}$ as each $Y_{\mathcal{T}}$ has $O(\log n)$ nodes and each node in $Y_{\mathcal{T}}$ has the weight from the universe of size $O(\log^2 n)$. Therefore, the overall space requirement is $o(n)$ bits in total~\cite{DBLP:conf/esa/GawrychowskiLN14, DBLP:conf/focs/Boas75}.

Let $\mathcal{T}$ and $\tau$ denote the mini-tree and the micro-tree containing the leaf $\rho(i)$, respectively. 
To answer $\psv(i)$, we identify the following three leaves as : (1) the rightmost leaf in $\tau$ to the left of the leaf $\rho(i)$ whose label is less than $\rho(i)$, (2) the leaf node $\rho(i')$ in $T_{\rho}$, where $\rho(i')$ is the rightmost leaf node to the left of the leaf $\rho(i)$, which is in $\mathcal{T}$ but not in $\tau$, and $\rho(i') < \rho(i)$, and (3) the leaf node $\rho(i'')$ in $T_{\rho}$, where $\rho(i'')$ is the rightmost leaf node to the left of the leaf $\rho(i)$, which is not in $\mathcal{T}$, and $\rho(i'') < \rho(i)$.
(1) can be computed in $O(1)$ time using the precomputed table, and (2) and (3) can be computed in $O(\log \log n)$ time using weight ancestor queries on $Y_{\mathcal{T}}$ and $Y$, respectively.
From the definition of $T_{\rho}$ and Proposition~\ref{prop:vhtree}, $\psv{}(i)$ corresponds to the leaf rank of one of the leaves among (1), (2), and (3). Therefore, we return the maximum leaf rank in $T_{\rho}$ among the leaves (1), (2), and (3) as the answer for $\psv{}(i)$ in $O(\log \log n)$ time.

\begin{example}
Figure~\ref{fig:separable} shows an example of the representation of the separable permutation $\rho = (2, 1, 9, 10, 11, 12, 8, 4, 6, 5, 7, 3)$. Here, due to space constraints, we only decompose $T_{\rho}$ into one-level mini-trees. This implies that we assume the operations within mini-trees can be answered using precomputed tables.
To answer $\rho(3)$, we first locate the mini-tree containing the leaf node with leaf rank $4$, which is $\mathcal{T}_4$. Since $\min_l(\mathcal{T}_4) = \min_l(\mathcal{T}_4) = 9$, we can answer $\rho(4) = 9 +1 = 10$. Next, to find $\rho^{-1}(6)$, we identify the mini-tree corresponding to $B[\select{}(1, (1, \rank{}(6)))]$, which is $\mathcal{T}_7$. Then, we return the leaf rank of the leaf node whose value is the $(6- \min_l(\mathcal{T}_7))$-th smallest value in $\mathcal{T}_7$, which is $9$ (here, the smallest value corresponds to the $0$-th smallest value).

Now, consider how to compute $\rminq{}(3, 9)$ on $\rho$. The leaves in $T_{\rho}$ with leaf ranks $3$ and $9$ are in $\mathcal{T}_3$ (corresponds to $S_{\rho}[2]$) and $\mathcal{T}_7$ (corresponds to $S_{\rho}[6]$), respectively. Thus, we compare three values: (1) $\rho(\rminq{}(3, 5)) = 9$, (2) $S_{\rho}[\rminq{}(3, 5)] = 4$, and (3) $\rho(\rminq{}(9, 9)) = 6$, and return the leaf rank of the leaf node $4$. Finally, consider how to compute $\psv{}(8)$ on $\rho$. In this case, there exists no leaf node at the left part of $\mathcal{T}_6$ whose value is smaller than $\rho(8) = 4$, and the weighted ancestor of $y_{l_{\mathcal{T}_6}} = y_4$ in $Y$ is $y_1$. Therefore, we return the leaf rank of the leaf node $1$ in $T_{\rho}$, which is $2$.
\end{example}

We summarize the results in the following theorem.
\begin{theorem}\label{thm:separable}
    For a separable permutation $\rho$ of size $n$, there exists a succinct data structure of Theorem~\ref{thm:permutation_query} on $\rho$ can support $\rho(i)$ and $\rho^{-1}(j)$ in $O(1)$ time. The data structure also supports $\rminq$, $\rmaxq$, $\psv$, $\plv$, $\nsv$, and $\nlv$ on $\rho$ in $O(\log \log n)$ time.
\end{theorem}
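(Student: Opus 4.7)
The plan is to build directly on the tree-covering representation of $T_\rho$ from Section~\ref{sec:separable_represent}: decompose $T_\rho$ into mini-trees of size $\ell_1 = \log^2 n$ and further into micro-trees of size $\ell_2 = (\log n)/6$, and encode each micro-tree as an index into a precomputed table of every separable permutation on at most $\ell_2$ leaves, which uses $n \log(3+2\sqrt{2}) + o(n)$ bits. The combinatorial engine powering every query is Proposition~\ref{prop:vhtree}: the leaf labels inside any mini-tree or micro-tree form at most two contiguous intervals, one preceding the boundary node and one following it. Storing the four interval endpoints $\min_l, \max_l, \min_r, \max_r$ together with the boundary-node type ($\oplus$ or $\ominus$) fully describes the label content of a subtree in $O(\log n)$ bits per mini-tree and $O(\log \log n)$ bits per micro-tree (the latter recorded as offsets from the enclosing mini-tree's endpoints), giving $o(n)$ overhead in total.

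For $\rho(i)$, I would locate the mini-tree $\mathcal{T}$ and micro-tree $\tau$ containing the $i$-th leaf together with the local leaf rank $i_\tau$ in $O(1)$ time using the tree-covering navigation of~\cite{Farzan2014}, then read $\rho_\tau(i_\tau)$ from the precomputed table and add the appropriate stored offset (determined by the boundary structure of Proposition~\ref{prop:vhtree}). For $\rho^{-1}(j)$, I would maintain a bit vector $B[1,n]$ marking the interval endpoints of every mini-tree and analogous short bit vectors $B^l_\mathcal{T}, B^r_\mathcal{T}$ inside each mini-tree for its micro-tree endpoints; these carry only $O(n/\log^2 n)$ and $O(\log n)$ one-bits respectively, so they fit in $o(n)$ bits with constant-time $\rank$ and $\select$. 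A $\select(1, \rank(1, j))$ identifies the owning mini-tree, a second rank/select pinpoints the micro-tree, and one precomputed call on the local offset finishes in $O(1)$.

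For $\rminq$ and $\rmaxq$, I would precompute all range-minimum and range-maximum answers for every table permutation, and build a constant-time succinct RMQ of~\cite{DBLP:journals/tcs/Fischer11} on the length-$O(n/\log^2 n)$ sequence $S_\rho$ of mini-tree endpoints, with analogous RMQs on the local endpoint sequences $S^l_\mathcal{T}$ and $S^r_\mathcal{T}$ inside each mini-tree. A query $\rminq(i,j)$ then reduces to the minimum over at most five constant-time pieces: the two micro-tree-local tails containing $i$ and $j$, the two mini-tree-local RMQs over intervening micro-trees, and the global RMQ over intervening mini-trees. Proposition~\ref{prop:vhtree} certifies that every leaf strictly between $i$ and $j$ falls into exactly one of these scopes, giving correctness.

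The main obstacle is $\psv$ (and symmetrically $\plv, \nsv, \nlv$), because the naive piecewise decomposition that works for RMQ need not contain the $\psv$-answer, which can lie arbitrarily far to the left. To handle this I would introduce an auxiliary weighted ordered tree $Y$ whose nodes correspond to entries of a sequence $\rho'$ built from carefully chosen labels $l_\mathcal{T}, r_\mathcal{T}$ of each mini-tree: weights equal those labels, and the parent of the node for $s$ is the node for its $\psv$ in $\rho'$, with bidirectional pointers to the corresponding leaves of $T_\rho$; a local analogue $Y_\mathcal{T}$ is built inside each mini-tree. A $\psv(i)$ query then becomes three candidate lookups, a purely local one inside $\tau$ via the precomputed table, a weighted-ancestor query on $Y_\mathcal{T}$, and a weighted-ancestor query on $Y$, returning the candidate with the largest leaf rank. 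Correctness again hinges on Proposition~\ref{prop:vhtree} forcing the true $\psv$ leaf into one of these three scopes. Since the weighted-ancestor structure of~\cite{DBLP:conf/esa/GawrychowskiLN14} built on van Emde Boas predecessor search~\cite{DBLP:conf/focs/Boas75} answers each query in $O(\log \log n)$ time using $o(n)$ bits, this dominates the running time and yields the claimed bound.
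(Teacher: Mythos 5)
Your proposal is correct and follows essentially the same route as the paper: the same tree covering of $T_\rho$ with table-encoded micro-trees, the same use of Proposition~\ref{prop:vhtree} to reduce label bookkeeping to interval endpoints and offsets, the same rank/select bit-vector indexing for $\rho^{-1}$, the same five-piece decomposition for $\rminq$/$\rmaxq$, and the same weighted-ancestor tree $Y$ (with its per-mini-tree analogues) queried via~\cite{DBLP:conf/esa/GawrychowskiLN14} for $\psv$ and its variants. No substantive differences from the paper's argument.
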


\section{Applications}\label{sec:application}
In this section, we introduce succinct representations for mosaic floorplans and plane bipolar orientations, both of which have a bijection to a Baxter permutation. In addition, we consider a succinct representation for slicing floorplans, a special type of mosaic floorplan that has a bijection to a separable permutation.
For these objects, we show that the queries considered in Theorem~\ref{thm:permutation_query} and \ref{thm:separable} on the corresponding permutations can be used to support specific navigation queries on them efficiently.

\begin{figure}[bt]
\centering
\includegraphics[width=.6\textwidth]{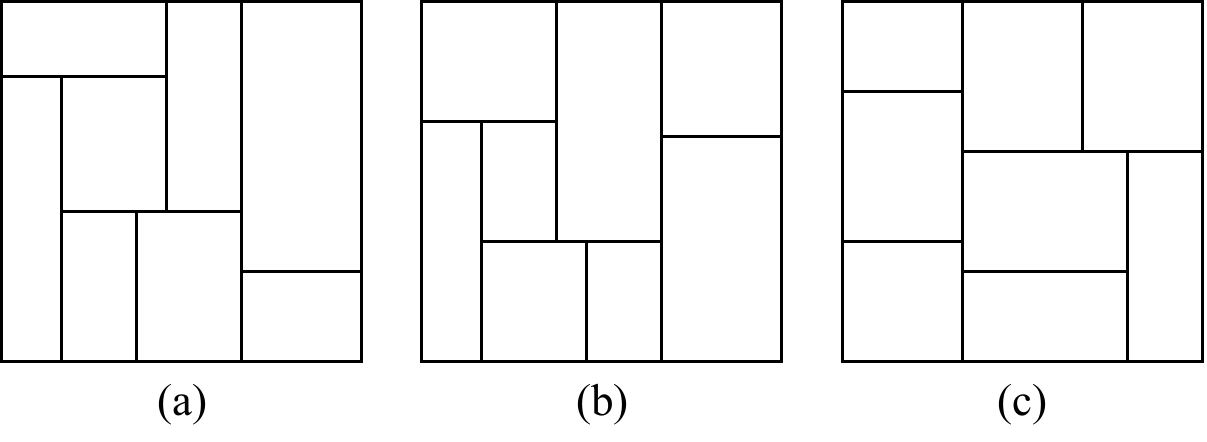}
\caption{(a) and (b) are  equivalent mosaic floorplans that are not slicing floorplans. (c) is a slicing floorplan.}
\label{fig:floorplan}
\end{figure}

\subsection{Mosaic and Slicing Floorplans}\label{sec:mosaic}
A \textit{floorplan} is a rectangular area that is divided by horizontal and vertical line segments. In a floorplan, these line segments can only create T-junctions, denoted as $\top$, $\bot$, $\vdash$, or $\dashv$ (see Figure~\ref{fig:floorplan} for an example). Each subdivided section within a floorplan is referred to as a \textit{block}, and the size of a floorplan is defined as the number of blocks it contains. 
For any two blocks labeled as $i$ and $j$, we define adjacency between them as follows:
(1) Block $i$ is directly above (or below) block $j$ if and only if block $i$ is above (or below) block $j$, and they share a horizontal line segment as a common boundary, and (2) block $i$ is directly left (or right) of block $j$ if and only if $i$ is left (or right) of $j$, and they share a vertical line segment as a common boundary.
\textit{Mosaic floorplans} are floorplans where their equivalence are defined as follows: Two mosaic floorplans, $F_1$ and $F_2$, of the same size are equivalent if and only if there exists a bijection $f$ between the blocks of $F_1$ and $F_2$ such that for any two blocks $i$ and $j$ in $F_1$, the adjacency relationship between $i$ and $j$ is the same as the adjacency relationship between $f(i)$ and $f(j)$ (see (a) and (b) of Figure~\ref{fig:floorplan} for an example). 

Also, a \textit{slicing floorplan} is a special type of mosaic floorplan. In slicing floorplans, the blocks in the floorplan are created by recursively dividing a single rectangle either horizontally or vertically into two smaller rectangles. As a result, slicing floorplans do not have 'pin-wheel' structure found in mosaic floorplans in general~\cite{DBLP:journals/dam/AckermanBP06}. See Figure~\ref{fig:floorplan} (c) for an example.
\newline
\newline
\noindent\textbf{Mosaic floorplans.} For mosaic floorplan $F$, Ackerman et al.~\cite{DBLP:journals/dam/AckermanBP06} defined two distinct orderings among the blocks in $F$: the \textit{top-left order} and the \textit{bottom-left order}. These orderings are derived from the following block-deletion algorithms.
In the top-left order, the first block is the top-left block in $F$. Next, for any $i > 1$, the $i$-th block is the top-left block of the deleting the $(i-1)$-th block from $F$ whose bottom-right corner is a $(\dashv)$ (resp. $(\bot)$)-junction, and shifting its bottom (resp. right) edge upwards (resp. leftwards) until the edge reaches the top (resp. left) boundary of $F$. Similarly, in the bottom-left order, the first block is the bottom-left block in $F$. Next, for any $i > 1$, the $i$-th block is the bottom-left block after deleting the $(i-1)$-th block from $F$ whose top-right corner is a $(\dashv)$ (resp. $(\top)$)-junction, 
and shifting its top (resp. right) edge downwards (resp. leftwards) until the edge reaches the bottom (resp. left) boundary of $F$. 

Ackerman et al.~\cite{DBLP:journals/dam/AckermanBP06} showed that any two equivalent mosaic floorplans have the same top-left and bottom-left orders. 
Furthermore, they showed that there exists a bijection between a mosaic floorplan $F$ of size $n$ and a Baxter permutation $\pi$ of size $n$. Specifically, for any $i \in [n]$, the $i$-th block according to the bottom-left order is the $(\pi(i))$-th block according to the top-left order (see Figure~\ref{fig:fp2bp} for an example).

In this section, we consider a succinct representation of mosaic floorplan $F$ of size $n$ that supports efficiently the following navigational queries. 
Here, each block $i$ is referred to as a $i$-th block of $F$ according to the bottom-left order. 
\begin{itemize}
    \item $\dabove{}(i,j)$: Returns true if and only if the block $i$ is directly above the block $j$.
    \item $\dbelow{}(i,j)$: Returns true if and only if the block $i$ is directly below the block $j$.
    \item $\dleft{}(i,j)$: Returns true if and only if the block $i$ is directly left-of the block $j$.
    \item $\dright{}(i,j)$: Returns true if and only if the block $i$ is directly right-of the block $j$.
    \item $\daboveset{}(i)$: Returns all $j$ where $\dabove{}(i,j)$ is true.
    \item $\dbelowset{}(i)$: Returns all $j$ where $\dbelow{}(i,j)$ is true.
    \item $\dleftset{}(i)$: Returns all $j$ where $\dleft{}(i,j)$ is true.
    \item $\drightset{}(i)$: Returns all $j$ where $\dright{}(i,j)$ is true.
\end{itemize}

\begin{figure}[bt]
\centering
\includegraphics[width=.25\textwidth]{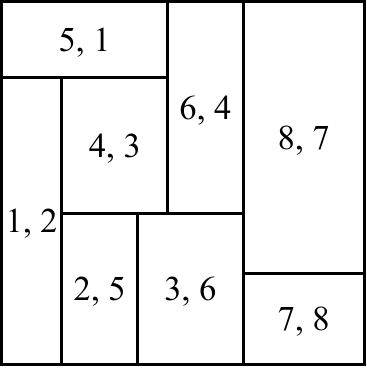}
\caption{A mosaic floorplan corresponds to $\pi=(2, 5, 6, 3, 1, 4, 8, 7)$. The first number of each block is a bottom-left deletion order, while the second number is a top-left deletion order.}
\label{fig:fp2bp}
\end{figure}

Our representation of $F$ simply represents its corresponding Baxter permutation $\pi$ using the data structure of Theorem~\ref{thm:permutation_query}. 
Now we show how to support the navigational queries on $F$ using the queries defined in $\pi$. 
We first introduce the following two lemmas essential for supporting the queries.
In this paper, we only provide the proof for Lemma~\ref{lem:nsvblock}
and note that the proof for Lemma~\ref{lem:nsvblock2} follows the same approach.

\begin{lemma}\label{lem:nsvblock}
Given a mosaic floorplan $F$ and its corresponding Baxter permutation $\pi$. Then for any two blocks $i$ and $j$ where $i$ is directly above $j$, the following holds:
(1) $\plv{}(i) = j'$ if and only if $j'$ is the rightmost block that directly below $i$, and (2) $\nsv{}(j) = i'$ if and only if $i'$ is the leftmost block that directly above $j$.
\end{lemma}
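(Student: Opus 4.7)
The plan is to prove (2); part (1) follows by the symmetric argument obtained by swapping the roles of the top-left and bottom-left deletion orders together with the horizontal direction in the floorplan. Fix $j$ and let $i'$ be the leftmost block directly above $j$; the goal is to show $i'=\nsv(j)$.

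The proof rests on two elementary properties of the deletion algorithms defining the bijection. (F1) If block $A$ lies below block $B$ within a common vertical strip of $F$, then $A$ is bottom-left-deleted strictly before $B$ and $B$ is top-left-deleted strictly before $A$; equivalently, $A<B$ while $\pi(B)<\pi(A)$. Applied to $(A,B)=(j,i')$ this yields $i'>j$ and $\pi(i')<\pi(j)$, so $\nsv(j)\le i'$. (F2) Among the blocks directly above $j$, the spatial left-to-right order coincides with the bottom-left order: if $u$ lies spatially to the left of $u'$ and both are directly above $j$, then after $j$ is bottom-left-deleted $u$ occupies a more bottom-left position than $u'$ in every intermediate floorplan, so $u$ is deleted first. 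Hence $i'$ has the smallest bottom-left index among all blocks directly above $j$.

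It remains to establish $\nsv(j)\ge i'$, i.e., that no $k$ with $j<k<i'$ satisfies $\pi(k)<\pi(j)$. Suppose for contradiction such a $k$ exists. By (F2), $k$ is not itself directly above $j$, and by (F1), $k$ is neither strictly below $j$ nor strictly to the right of $j$ in the same horizontal band. So $k$ lies either strictly above $j$'s top edge without sharing it, or in an upper-diagonal region. In the first case $k$ sits above some block $w$ that is directly above $j$, so (F1) gives $k>w$, and combined with (F2) this yields $k>w\ge i'$, contradicting $k<i'$. In the remaining upper-diagonal sub-cases, one locates a witness block $w$ directly above $j$ whose horizontal range separates $j$'s horizontal extent from $k$'s; the four indices $j,w,k,i'$, appropriately ordered, then induce a forbidden $3\customdash{0.4em}14\customdash{0.4em}2$ pattern in $\pi$, contradicting the Baxter property.

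The main obstacle is the final upper-diagonal case: identifying the witness block $w$ geometrically and verifying that the induced four-index configuration is precisely a forbidden Baxter pattern. The cleanest route is to take $w$ as the block directly above $j$ that is adjacent to the vertical line separating $j$'s column from $k$'s column, then compute the relative order of $j,w,k,i'$ both in bottom-left order and under $\pi$ using (F1) and (F2); the constraints force the forbidden $3\customdash{0.4em}14\customdash{0.4em}2$ pattern. Once this case is discharged, the combination of the steps above gives $\nsv(j)=i'$, completing the proof of (2).
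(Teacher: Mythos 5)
Your easy direction is fine and matches the spirit of the first half of the paper's argument: facts (F1) and (F2) about the two deletion orders give $j < i'$, $\pi(i') < \pi(j)$, and that the leftmost block directly above $j$ has the smallest bottom-left index among such blocks, hence $\nsv(j) \le i'$. The problem is the converse inequality, which is the actual content of the lemma, and your argument for it is not carried out. Two concrete issues. First, the case analysis on where an offending $k$ (with $j < k < i'$ and $\pi(k) < \pi(j)$) can sit is neither well-defined nor exhaustive: ``strictly above $j$'s top edge without sharing it'' does not imply that $k$ lies above some block directly above $j$ in a common vertical strip ($k$ could be horizontally disjoint from $j$), and the catch-all ``upper-diagonal region'' is never delimited, so the sub-cases you defer are precisely the ones that need work. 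Second, even granting a witness $w$, the claim that ``the four indices $j, w, k, i'$, appropriately ordered, induce a forbidden $3\customdash{0.4em}14\customdash{0.4em}2$ pattern'' cannot be concluded from order data alone: the Baxter patterns are vincular, i.e., they require two of the positions involved to be \emph{adjacent} (the $j$ and $j+1$ in the definition), and nothing in your configuration supplies that adjacency. You name this case as ``the main obstacle,'' but that obstacle is the lemma.

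For contrast, the paper's proof of the hard direction does not invoke pattern avoidance at all. It tracks the bottom-left deletion algorithm directly: starting from the lower block, it shows that the next block in bottom-left order is a block directly above it, and that between two consecutive (in left-to-right spatial order) blocks $i_1 < i_2$ directly above it, every intermediate bottom-left index $k \in \{i_1+1,\dots,i_2-1\}$ corresponds to a block stacked over $i_1$ with $\pi(k) < \pi(i_1)$, by examining which T-junction the relevant corner of each deleted block forms. This yields the required sign pattern of $\pi$ on the whole interval of indices between the two blocks without any global pattern argument. To salvage your route you would need either to replace the pattern-avoidance step with such an algorithmic induction, or to exhibit explicitly the adjacent pair of positions that makes your forbidden pattern vincular.
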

\begin{proof}
Here, we provide the proof of (1). Note that (2) can be proved using the same argument as the proof of (1). From the definitions of top-left and bottom-left orders,
it's clear that $\plv{}(i)$ is not a block directly above $j$ since the blocks are removed in a 
left-to-right order during any of two block-deletion algorithms. 
Additionally, $\pi(i) < \pi(j)$ since the bottom-right corner of $i$ (other than the rightmost one) cannot have a $(\dashv)$-junction. 
We now first prove that if $\plv{}(i) = j'$, then $j'$ is the rightmost block directly below $i$. Suppose not, then there exists a block $j'' \neq j'$ that is the rightmost block directly below $i$. This implies that $\pi(j'') > \pi(i)$ based on the top-left order. 
Then since $j'$ is positioned either to the left or below $j''$, it is clear that $j' < j'' < i$ is from the bottom-left order. This contradicts the fact that $\plv{}(i) = j'$.

Next, we prove if $j'$ is the rightmost block that directly below $i$, $\plv{}(i) = j'$.
We make the following claims: (1) the block $j'+1$ is directly above to the block $j'$, and
(2) For any two blocks $i_1$ and $i_2$ with $i_1 < i_2$ directly above to $j'$, $\pi(k) < \pi(i_2)$ for any $i_1 < k < i_2$. Note that the claims (1) and (2) together directly give the complete proof.
    
To prove (1), consider the block-deletion algorithm with the bottom-left order to determine the block $(j'+1)$. 
Since there are no blocks in $F$ to the right of the block $j'$ 
that share the top boundary with $j'$, the top-right corner of the block $j'$ 
forms a $(\dashv)$-junction. Thus, the next block is deleted after removing block $j'$ 
in the algorithm must share the boundary $h$ with $j'$.
Next, to prove (2), consider the block-deletion algorithm with the bottom-left order to determine the block $(i_1+1)$. 
Since $i_2$ is not equal to $i_1+1$, the top-right corner of $i_1$ must form a $(\dashv)$-junction. 
Thus, the block $(i_1+1)$ should be directly above the block $i_1$, implying $\pi(i_1) > \pi(i_1+1)$.
Furthermore, after deleting the block $i_1$, the block $(i_1+1)$ is directly to the left of the block $i_2$. 
Therefore, by iteratively applying the same argument up to block $(i_2-1)$, 
it follows that $\pi(k) < \pi(i_1)$ for any block $k$ in the set $\{(i_1+1), \dots, (i_2-1)\}$.
\end{proof}
\begin{lemma}\label{lem:nsvblock2}
Given a mosaic floorplan $F$ and its corresponding Baxter permutation $\pi$. Then for any two blocks $i$ and $j$ where $i$ is directly left of $j$, the following holds:
(1) $\nlv{}(i) = j'$ if and only if $j'$ is the bottommost block that directly right of $i$, and (2) $\psv{}(j) = i'$ if and only if $i'$ is the topmost block that directly left of $j$.
\end{lemma}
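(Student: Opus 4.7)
The plan is to mirror the three-part structure of the proof of Lemma~\ref{lem:nsvblock} after exchanging the roles of vertical and horizontal adjacency. Since the bottom-left deletion rule examines the \emph{top-right} corner of the current block (deleting via $(\dashv)$ or $(\top)$ junction) and the top-left deletion examines the \emph{bottom-right} corner (via $(\dashv)$ or $(\bot)$), the hypothesis ``$i$ is directly left of $j$'' plays the role that ``$i$ is directly above $j$'' played in Lemma~\ref{lem:nsvblock}, under a $90^\circ$ reflection of the argument. I would thus recycle the same contradiction-plus-two-claims skeleton, only re-deriving each junction case from the rotated deletion rules.

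For part (1), I would first establish the baseline relations: whenever $i$ is directly left of $j$, we have $i<j$ in the bottom-left order and $\pi(i)<\pi(j)$ in the top-left order, by arguing that the top-right (resp.~bottom-right) corner of $i$ cannot be a $(\dashv)$-junction and hence $i$ must be processed before $j$. For the ``only if'' direction, supposing $\nlv(i)=j'$ while the bottommost block directly right of $i$ is some $j''\neq j'$, the bottommost-ness forces $j''$ to be deleted before $j'$ in the bottom-left order, giving $i<j''<j'$ together with $\pi(j'')>\pi(i)$, contradicting $\nlv(i)=j'$. For the ``if'' direction, assuming $j'$ is bottommost directly right of $i$, I would prove the two structural claims: (A) the block $i+1$ is directly right of $i$, because under the hypothesis the top-right corner of $i$ must form a $(\top)$-junction, so the bottom-left algorithm shifts its right edge leftwards; and (B) for any two blocks $j_1<j_2$ both directly right of $i$, every intermediate $k$ with $j_1<k<j_2$ satisfies $\pi(k)<\pi(j_1)$, proved by tracking the bottom-left deletions starting from $j_1$ until the next block directly right of $i$ is reached. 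Iterating (A) and (B) from $i$ through every block directly right of $i$ up to $j'$ shows that no intermediate $\pi$-value exceeds $\pi(i)$, so $\nlv(i)=j'$. Part (2) is then entirely symmetric: run the analogous argument at the other endpoint $j$, now using the top-left deletion order together with the bottom-right corner of $j$, and replace ``bottommost directly right of $i$'' by ``topmost directly left of $j$''.

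The main obstacle will be correctly pinning down the corner-junction case analysis in claim (A) (and its mirror for part (2)). In Lemma~\ref{lem:nsvblock} the crucial move was showing that the bottom-right corner of the block being deleted next in the top-left order must be either $(\dashv)$ or $(\bot)$, and matching the correct case to the hypothesis on $j'$. Here I would need to verify the analogous $(\dashv)$ versus $(\top)$ dichotomy for the top-right corner under bottom-left deletion, and argue that the hypothesis ``$j'$ is bottommost directly right of $i$'' forces the $(\top)$-junction case rather than $(\dashv)$, so that $i+1$ really is directly right of $i$ (instead of directly above $i$, which would break the iteration). Once this geometric case check is in place, the algebraic steps from Lemma~\ref{lem:nsvblock} transfer verbatim.
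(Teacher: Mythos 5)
Your high-level plan coincides with the paper's, since the paper gives no separate proof of this lemma and explicitly defers to ``the same approach'' as Lemma~\ref{lem:nsvblock}; the problem is in how you set up the mirror. You pair part~(1) of this lemma with part~(1) of Lemma~\ref{lem:nsvblock}, but the two queries run in opposite directions along the bottom-left order: for $\plv{}(i)$ the answer $j'$ \emph{precedes} the argument $i$, so the iteration in the paper's proof starts at the answer $j'$ and walks forward through the blocks directly above $j'$ until it reaches $i$; for $\nlv{}(i)$ the answer $j'$ \emph{follows} the argument $i$. The faithful analogue of Lemma~\ref{lem:nsvblock}(1) is therefore part~(2) of the present lemma (the $\psv$ statement), and the analogue of Lemma~\ref{lem:nsvblock}(2) is part~(1); your reflection swaps these, which is what leads you to an incorrect structural claim.

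Concretely, your claim~(A) --- that the hypothesis forces the top-right corner of $i$ to be a $(\top)$-junction, so that $i+1$ is directly right of $i$ --- is false. Consider the floorplan with blocks $P=[0,1]\times[0,1]$, $R=[0,1]\times[1,2]$, $Q=[1,2]\times[0,2]$: here $P$ is directly left of $Q$, yet $P$'s top-right corner is a $(\dashv)$-junction, so the block deleted after $P$ in the bottom-left order is $R$, which lies directly \emph{above} $P$. (The associated permutation is $\pi=(2,1,3)$; the lemma still holds, with $\nlv{}(1)=3$, but your claim~(A) does not.) The correct ``if'' argument for part~(1) starts at $i$ and passes through the blocks directly \emph{above} $i$ --- each of which has a \emph{smaller} $\pi$-value, since blocks above are removed earlier in the top-left order --- together with the blocks interleaved between consecutive such neighbours, and only terminates in the $(\top)$-junction case, at which point the bottommost right-neighbour $j'$ is reached with $\pi(j')>\pi(i)$. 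Symmetrically, it is your part~(2) that should recycle the ``$j'+1$ is directly above $j'$'' step of Lemma~\ref{lem:nsvblock}(1) essentially verbatim, with above/below replaced by left/right. As written, the ``if'' direction of part~(1) breaks at claim~(A) and needs this corrected case analysis.
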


Now, we prove the main lemma, which shows that we can check the adjacency between two blocks using the data structure of Theorem~\ref{thm:permutation_query} on $\pi$.

\begin{lemma}\label{lem:directq}
    Given a mosaic floorplan $F$ and its corresponding Baxter permutation $\pi$, the following holds for any two blocks $i$ and $j$:
    \begin{enumerate}
        \item $\dabove{}(i, j)$ returns true if and only if 
        $\rmaxq{}(\nsv{}(j), i)=i$ and $\pi(i)<\pi(j)$.
        \item $\dbelow{}(i, j)$ returns true if and only if 
        $\rminq{}(i, \plv{}(j))=i$ and $\pi(i)<\pi(j)$.
        \item $\dleft{}(i, j)$ returns true if and only if 
        $\rmaxq{}(i, \psv{}(j))=i$ and $\pi(i)<\pi(j)$.
        \item $\dright{}(i, j)$ returns true if and only if 
        $\rminq{}(\nlv(j), i)=i$ and $\pi(i)>\pi(j)$.
    \end{enumerate}
\end{lemma}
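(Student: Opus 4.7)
The plan is to prove item~1 in detail; items~2--4 then follow by analogous arguments (item~2 is the vertical dual, using $\plv(j)$ from Lemma~\ref{lem:nsvblock}(1) in place of $\nsv(j)$, while items~3--4 are the horizontal analogues relying on Lemma~\ref{lem:nsvblock2} instead of Lemma~\ref{lem:nsvblock}). The argument for item~1 leans on two facts already established inside the proof of Lemma~\ref{lem:nsvblock}: (C1) after the rightmost block $b$ that has any block directly above it, the block $b+1$ is directly above $b$; and (C2) for any two blocks $i_1 < i_2$ that are both directly above the same block, $\pi(k) < \pi(i_2)$ for every $i_1 < k < i_2$.

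For the forward direction I fix $i$ directly above $j$, set $i_1 := \nsv(j)$ (the leftmost block directly above $j$, by Lemma~\ref{lem:nsvblock}(2)), and list all blocks directly above $j$ in bottom-left order as $i_1 < i_2 < \cdots < i_t$, with $i = i_r$ for some $r$. Iterating (C2) over consecutive pairs $(i_{s-1}, i_s)$ yields $\pi(i_r) = \max\{\pi(k) : i_1 \le k \le i_r\}$, hence $\rmaxq(\nsv(j), i) = i$. The inequality $\pi(i) < \pi(j)$ is precisely the observation used at the start of the proof of Lemma~\ref{lem:nsvblock}: the bottom-right corner of a block directly above $j$ cannot be a $(\dashv)$-junction (except in the rightmost column), which forces $i$ to be deleted strictly before $j$ in the top-left order. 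For the converse, I assume the two numerical conditions hold and let $i_s$ be the largest block directly above $j$ with $i_s \le i$; if $i_s = i$ we are done, otherwise I trace the bottom-left deletion algorithm starting just after the removal of $i_s$ and argue that some index $k \in (i_s, i]$ must satisfy $\pi(k) > \pi(i)$, contradicting $\rmaxq(\nsv(j), i) = i$.

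The main obstacle is making this converse case analysis rigorous, especially in the subcase where the top-right corner of $i_s$ is a $(\dashv)$-junction, so that $i_s+1$ sits directly above $i_s$ rather than to its right and hence is not itself directly above $j$. There one must follow the deletion sequence until it exits the vertical strip rooted at $i_s$'s top boundary, and invoke the Baxter property (absence of $2\customdash{0.4em}41\customdash{0.4em}3$ and $3\customdash{0.4em}14\customdash{0.4em}2$) to rule out the ``pin-wheel''-like configurations that would otherwise permit $i$ to be the $\pi$-maximum on $[\nsv(j), i]$ without being directly above $j$. Once this geometric case analysis is in place, items~2--4 are obtained by swapping the roles of above/below and of rows/columns, replacing $\nsv/\rmaxq$ by the appropriate pairs from $\{\psv, \nsv, \plv, \nlv\}$ and $\{\rminq, \rmaxq\}$, and substituting Lemma~\ref{lem:nsvblock2} wherever Lemma~\ref{lem:nsvblock} was used.
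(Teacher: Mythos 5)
Your forward direction is essentially sound: the blocks directly above $j$, read left to right, are successive running maxima of $\pi$ on $[\nsv(j), \cdot]$ with values below $\pi(j)$, and chaining claim (2) from the proof of Lemma~\ref{lem:nsvblock} over consecutive pairs gives $\rmaxq(\nsv(j),i)=i$ (you do also need that $\pi(i_{s-1})<\pi(i_s)$ for consecutive blocks above $j$, which follows from the top-left order but should be said). The genuine gap is the converse, and you have flagged it yourself: you reduce it to "some $k\in(i_s,i]$ must satisfy $\pi(k)>\pi(i)$" and then describe, rather than carry out, the case analysis on the deletion algorithm and the appeal to the Baxter property. As written, nothing rules out a position $i$ that is a running maximum on $[\nsv(j),i]$ with $\pi(i)<\pi(j)$ yet lies strictly to the right of the rightmost block above $j$; this is exactly the configuration your sketch does not exclude.

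The paper closes this with a single claim that handles both directions at once and avoids the deletion-trace case analysis: for any block $i'\ge\nsv(j)$ directly above $j$, the block $\nlv(i')$ is again directly above $j$ if and only if $\pi(\nlv(i'))<\pi(j)$. (The proof uses Lemma~\ref{lem:nsvblock2}: $\nlv(i')$ is the bottommost block directly right of $i'$, so either it shares $i'$'s lower boundary --- and is then above $j$ --- or its lower boundary drops, which happens precisely when the bottom-right corner of $i'$ is a $(\dashv)$-junction, i.e.\ precisely when $\pi(\nlv(i'))>\pi(j)$.) Since the positions $k\ge\nsv(j)$ with $\rmaxq(\nsv(j),k)=k$ are exactly the $\nlv$-chain starting at $\nsv(j)$, and the $\pi$-values along that chain increase, any $i$ satisfying your two numerical conditions is forced by induction along the chain to be directly above $j$. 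I suggest you replace your converse sketch with this argument; your forward direction can then be absorbed into the same chain characterization. The reductions of items 2--4 to item 1 by symmetry are fine and match the paper.
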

\begin{proof}
We will focus on proving the statement for $\dabove{}$, as the other statements can be proved using a similar argument alongside Lemma~\ref{lem:nsvblock} and \ref{lem:nsvblock2}. 
When $\nsv(j) = i$, the statement is directly proved from Lemma~\ref{lem:nsvblock} 
since the block $\nsv{}(j)$ is the leftmost block directly above the block $j$ in this case.

Now, consider the case that $\nsv(j) < i$. We claim that for any block $i' \ge \nsv{}(j)$ positioned directly above $j$, the block $\nlv{}(i')$ is also directly above $j$ if and only if $\pi(j) > \pi(\nlv{}(i'))$. This claim proves the complete statement.
As indicated by Lemma~\ref{lem:nsvblock2}, $\nlv{}(i')$ represents the bottommost block located directly to the right of $i'$. Thus, we can consider only two possible cases:
(1) The block $\nlv{}(i')$ shares the same lower boundary as the block $i'$, indicating that block $\nlv{}(i')$ is directly above block $j$, or (2) The lower boundary of block $\nlv{}(i')$ lies below the lower boundary of block $i'$.
Consequently, our claim holds if we prove (2) occurs if and only if $\pi(\nlv{}(i')) > \pi(j)$. If the case (2) occurs, it implies that the bottom-right corner of $i'$ forms a $(\dashv)$-junction, and this, in turn, implies $\pi(\nlv{}(i')) > \pi(j)$ based on the top-left order. Conversely, if $\pi(\nlv{}(i')) > \pi(j)$, it is clear that $\pi(\nlv{}(i'))$ cannot be positioned above $\pi(j)$ according to the top-left order.
\end{proof}

By utilizing Lemma~\ref{lem:directq}, we can support $\dabove$, $\dbelow$, $\dleft$, and $\dright$ queries in $O(f_1(n))$ time where $f_1(n) = \omega(\log n)$, using $\pi(i)$ queries and  range minimum/maximum and previous/next larger/smaller queries on $\pi$.
Additionally, the proof presented in Lemma~\ref{lem:directq} implies that we can also handle $\daboveset{}(i)$, $\dbelowset{}(i)$, $\dleftset{}(i)$, and $\drightset{}(i)$ queries within $O(f_1(n))$ time per output. This is accomplished by iteratively computing one of the minimum/maximum or previous/next larger/smaller queries on $\pi$ and subsequently checking the value of $\pi$ at the computed position.

\begin{theorem}\label{thm:mosaic}
    Given any mosaic floorplan $F$ of size $n$, there exists a $(3n+o(n))$-bit representation of $F$ that supports
    \begin{itemize}
        \item $\dabove(i, j)$, $\dbelow(i, j)$, $\dleft(i, j)$ and $\dright(i, j)$ in $O(f_1(n))$ time, and
        \item $\daboveset(i)$, $\dbelowset(i)$, $\dleftset(i)$ and $\drightset(i)$ in $O(f_1(n)|output|)$ time,
    \end{itemize}
    for any $f_1(n) = \omega(\log n)$. 
    Here, $output$ counts the number of reported blocks of each query.
\end{theorem}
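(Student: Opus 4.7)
The plan is to invoke the Baxter-permutation data structure from Theorem~\ref{thm:permutation_query} (together with Theorem~\ref{thm:max_cartesian}) on the unique Baxter permutation $\pi$ of size $n$ corresponding to $F$ under the Ackerman--Barequet--Pinter bijection. Storing only $\pi$ already yields a $(3n+o(n))$-bit representation of $F$; no additional floorplan-specific information needs to be kept, because every pairwise adjacency in $F$ is determined by the bottom-left and top-left orders, both of which are encoded through $\pi$ and $\pi^{-1}$.

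For the four single-pair adjacency tests I would simply invoke Lemma~\ref{lem:directq}: each of $\dabove(i,j)$, $\dbelow(i,j)$, $\dleft(i,j)$, $\dright(i,j)$ is reduced there to a constant number of elementary operations on $\pi$, specifically one range-minimum or range-maximum query, one previous/next smaller/larger value query, and two evaluations of $\pi(\cdot)$. By Theorem~\ref{thm:permutation_query} and Theorem~\ref{thm:max_cartesian} each such operation takes $O(f_1(n))$ time for any $f_1(n) = \omega(\log n)$, so the whole test is $O(f_1(n))$.

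For the four set queries I would extract an iteration scheme from the structural arguments in the proofs of Lemmas~\ref{lem:nsvblock}, \ref{lem:nsvblock2}, and \ref{lem:directq}. Using $\daboveset(i)$ as the prototype: Lemma~\ref{lem:nsvblock}(1) identifies $j_1 := \plv(i)$ as the rightmost block directly below $i$, providing the first element of the output. The auxiliary claim proved inside Lemma~\ref{lem:nsvblock}, that consecutive blocks directly above (equivalently, directly below) a fixed block are linked by a single $\nlv$ (respectively $\plv$/$\psv$) step as long as the $\pi$-value comparison witnessing adjacency still holds, gives a walk that reports each adjacent block exactly once and terminates as soon as the comparison fails. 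Each step involves $O(1)$ permutation operations, hence $O(f_1(n))$ time per output element, giving the claimed $O(f_1(n) \cdot |output|)$ bound. The remaining three set queries are treated symmetrically, with the roles of $\psv/\plv$ and $\nsv/\nlv$ interchanged and Lemma~\ref{lem:nsvblock2} used in place of Lemma~\ref{lem:nsvblock} where appropriate.

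The main obstacle is proving correctness of the set enumeration, namely that the iteration visits every directly-adjacent block exactly once and halts right after the last one without skipping any. This is precisely where the Baxter pattern-avoidance property of $\pi$ enters: it forbids a $2\customdash{0.4em}41\customdash{0.4em}3$ or $3\customdash{0.4em}14\customdash{0.4em}2$ configuration that would otherwise insert an intermediate $\pi$-value between two consecutive adjacent blocks and cause a previous/next larger/smaller step to overshoot. The single-pair bound and the space bound are then immediate consequences of Lemma~\ref{lem:directq} and Theorem~\ref{thm:permutation_query}.
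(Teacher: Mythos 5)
Your proposal matches the paper's proof essentially verbatim: the representation is just the Baxter permutation $\pi$ stored via Theorem~\ref{thm:permutation_query}, the pairwise adjacency tests reduce to Lemma~\ref{lem:directq}, and the set queries are answered by iterating previous/next smaller/larger value queries from the first adjacent block and checking the $\pi$-value condition at each step, with termination and completeness guaranteed by the claims established inside the proofs of Lemmas~\ref{lem:nsvblock} and~\ref{lem:directq} (which in turn rest on Baxter pattern-avoidance). No substantive difference from the paper's argument.
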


\noindent\textbf{Slicing floorplans.} Let $F$ be a slicing floorplan with $n$ blocks. Then there exists a unique separable permutation $\rho$ of size $n$ that corresponds to $F$~\cite{DBLP:journals/dam/AckermanBP06}. Furthermore, $\rho$ can be constructed using the same procedure for constructing a mosaic floorplan from its corresponding Baxter permutation. Consequently, all the preceding lemmas in this section also hold for $F$ and $\rho$. Thus, Theorem~\ref{thm:separable} and \ref{thm:mosaic} lead to the following corollary.

\begin{corollary}
    Given any slicing floorplan $F$ of size $n$, there exists a succinct representation of $F$ that supports $\dabove(i, j)$, $\dbelow(i, j)$, $\dleft(i, j)$ and $\dright(i, j)$ in $O(\log \log n)$ time, and
$\daboveset(i)$, $\dbelowset(i)$, $\dleftset(i)$ and $\drightset(i)$ in $O(\log \log n)$ time per output.
\end{corollary}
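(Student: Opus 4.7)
The plan is to combine the structural correspondence between slicing floorplans and separable permutations with the succinct separable-permutation data structure of Theorem~\ref{thm:separable}, reusing verbatim the query reductions established for mosaic floorplans. By the bijection of Ackerman et al.~\cite{DBLP:journals/dam/AckermanBP06}, a slicing floorplan $F$ of size $n$ corresponds to a unique separable permutation $\rho$ of size $n$, and the top-left/bottom-left deletion orders that define the bijection for mosaic floorplans specialize to this case without modification. Consequently, Lemma~\ref{lem:nsvblock}, Lemma~\ref{lem:nsvblock2}, and Lemma~\ref{lem:directq}, whose statements and proofs rely only on properties of the deletion orders rather than on $\pi$ being a general Baxter permutation, remain valid when $\pi$ is replaced by $\rho$.

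The representation is simply the succinct data structure of Theorem~\ref{thm:separable} applied to $\rho$, which occupies $n\log(3+2\sqrt{2})+o(n)$ bits. For the adjacency-checking queries, I would invoke Lemma~\ref{lem:directq}: each of $\dabove$, $\dbelow$, $\dleft$, $\dright$ is evaluated by a constant number of $\rho(i)$ calls together with one of $\rminq$, $\rmaxq$, $\psv$, $\plv$, $\nsv$, or $\nlv$ on $\rho$. By Theorem~\ref{thm:separable}, the $\rho(i)$ calls cost $O(1)$ while the range extremum and previous/next smaller/larger value queries cost $O(\log\log n)$; the overall time is therefore dominated by the latter, giving $O(\log\log n)$ per query.

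For the set queries $\daboveset$, $\dbelowset$, $\dleftset$, $\drightset$, the same reasoning as in the proof of Theorem~\ref{thm:mosaic} applies: one enumerates the candidate neighbors by iteratively walking through successive $\psv$/$\nsv$/$\plv$/$\nlv$ values on $\rho$ and confirming each candidate with a constant number of $\rho$ evaluations, stopping as soon as the confirmation fails. Each iteration costs $O(\log\log n)$ by Theorem~\ref{thm:separable}, so the total time is $O(\log\log n)$ per reported block.

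The only subtle point is verifying that the proofs of Lemma~\ref{lem:nsvblock}, Lemma~\ref{lem:nsvblock2}, and Lemma~\ref{lem:directq} go through unchanged for a separable $\rho$; this is immediate since those proofs use only the T-junction structure of $F$ and the definitions of the two deletion orders, both of which are properties of $F$ as a mosaic floorplan and do not invoke the pinwheel structure that distinguishes non-slicing floorplans. With that observation, Theorem~\ref{thm:separable} together with Lemma~\ref{lem:directq} yields the claimed bounds.
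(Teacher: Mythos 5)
Your proposal is correct and follows essentially the same route as the paper: represent the unique separable permutation $\rho$ corresponding to $F$ via Theorem~\ref{thm:separable}, observe that Lemmas~\ref{lem:nsvblock}, \ref{lem:nsvblock2}, and \ref{lem:directq} carry over verbatim since slicing floorplans are mosaic floorplans and the bijection is the same construction, and then charge each adjacency or enumeration step to $O(1)$-time $\rho(i)$ calls plus $O(\log\log n)$-time range-extremum and previous/next smaller/larger value queries. The extra justification you give for why the mosaic-floorplan lemmas transfer is a sound elaboration of what the paper asserts in one line.
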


\subsection{Plane Bipolar Orientation}
A \textit{planar map} denoted as $M$ consists of vertices, edges, and faces, forming a connected graph embedded in the plane. This graph lacks edge intersections, with the outer face extending infinitely and all other faces bounded. When a map can be disconnected by removing a single vertex, it is termed \textit{separable}. A \textit{plane bipolar orientation}, denoted as $B$, refers to an acyclic orientation of a planar map. 
This orientation has a unique source vertex, denoted as $s$, which has no incoming edges, and a unique sink vertex, denoted as $t$, with no outgoing edges. Both $s$ and $t$ are situated on the outer face of the map. Given a planar map $M$ with its unique source $s$ and sink $t$, it is a known fact that $M$ possesses a bipolar orientation with $s$ as the source and $t$ as the sink if and only if the map formed by adding the edge $(s, t)$ to the outer face of $M$ is non-separable.

A study by Bonichon et al.~\cite{BonichonBF08} revealed the existence of a bijection between Baxter permutations $\pi$ of size $n$ and plane bipolar orientations $B$ with $n$ edges through an explicit construction algorithm. In this section, our goal is to support basic navigational queries on $B$ directly from $\pi$, i.e., without explicitly constructing $B$ from $\pi$. To this end, let us first briefly recall the algorithm to construct $B$ from $\pi$~\cite{BonichonBF08}. Given $\pi$, the algorithm intermediately constructs an embedded directed graph $G(\pi)$ (consisting of black and white vertices and straight edges between them) and finally modifies $G(\pi)$ to obtain $B$. Throughout this discussion, we assume that a Baxter permutation $\pi= (\pi(1), \dots, \pi(n))$ of size $n$ is represented by its plane diagram, i.e., the set of black vertices $b_i=(i, \pi(i))$ in $G(\pi)$. See Figure~\ref{fig:baxtertobipolar} for an example. White vertices (having half-integer coordinates) are added to $G(\pi)$ whenever there is an \textit{ascent} $t$ in $\pi$, i.e., $\pi(t) < \pi(t+1)$ for $1 \leq t \leq (n-1)$. Formally, for every ascent $t$, let $s_t= max\{\pi(i):i \leq t$ \text{and} $\pi(i) < \pi(t+1)\}$, then a white vertex $w_t$ is added at $(t+.5, s_t+.5)$. See Figure~\ref{fig:baxter_structure} for a generic structure of this construction. Observe that no vertices can exist in the light gray areas. We also add two special white vertices $w_0$ and $w_n$ at $(.5,.5)$ and $(n+.5,n+.5)$ respectively such that we can assume $\pi(0)=0$ and $\pi(n+1)=n+1$. Finally, between two vertices $x=(x_1,x_2)$ and $y=(y_1,y_2)$, a directed edge from $x$ to $y$ is added in $G(\pi)$ if and only if (1) $x_i \leq y_i$ for $1 \leq i \leq 2$, and (2) there does not exist any $z$ such that $x < z < y$. All these edges point to the North-East, and this completes the construction of $G(\pi)$. Now, Bonichon et al.~\cite{BonichonBF08} showed the following, 

\begin{theorem}[\cite{BonichonBF08}]\label{thm:baxtertobipolar}
For all Baxter permutations $\pi$, the embedded graph $G(\pi)$ is planar, bicolored (every edge joins a black vertex and a white one), and every black vertex has indegree and outdegree $1$.
\end{theorem}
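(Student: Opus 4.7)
The plan is to verify the three conclusions --- planarity, bicoloring, and the indegree/outdegree-$1$ property on black vertices --- by combining the explicit vertex coordinates with the forbidden-pattern characterisation of Baxter permutations.

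I would start with the bicoloring claim, since this is the step where the Baxter hypothesis is actually invoked. I argue first that no directed edge can join two black vertices. Suppose $b_i$ and $b_j$ satisfy $i < j$ and $\pi(i) < \pi(j)$, so that condition (1) of the edge definition is met; to contradict condition (2) it suffices to exhibit a white vertex lying strictly in the axis-aligned rectangle with corners $b_i$ and $b_j$. Let $t$ be the largest index in $\{i, i+1, \ldots, j-1\}$ with $\pi(t) < \pi(j)$. Then $\pi(t+1) \ge \pi(j)$ by maximality, so $t$ is an ascent and $\pi(i) < \pi(t+1)$, which makes $i$ admissible in the maximum defining $s_t$ and hence $s_t \ge \pi(i)$. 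The one remaining inequality is $s_t < \pi(j)$: if it failed, some $k' \le t$ would satisfy $\pi(j) < \pi(k') < \pi(t+1)$, and the indices $k' < t < t+1 < j$ (with $t+1 < j$ forced by the strict inequalities) would carry the values $3,1,4,2$ and realise the forbidden pattern $3\customdash{0.4em}14\customdash{0.4em}2$. With $s_t \ge \pi(i)$ and $s_t < \pi(j)$, the vertex $w_t$ lies strictly inside the rectangle, contradicting condition (2). A symmetric argument using $2\customdash{0.4em}41\customdash{0.4em}3$ rules out edges between two white vertices.

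Planarity is now essentially immediate. Every edge of $G(\pi)$ is a monotonically NE-pointing segment, so if two such edges $x_1 \to y_1$ and $x_2 \to y_2$ crossed in the plane, one of the endpoints of one edge would lie componentwise strictly between the endpoints of the other, violating condition (2). Hence $G(\pi)$ is drawn without crossings. For the black-vertex degree condition, fix $b_i$; by bicoloring, every neighbour of $b_i$ is white, so I only need to count white NE- and SW-neighbours. For the outgoing edge I split on whether $i$ is an ascent: if so, $w_i$ lies NE of $b_i$ (since $s_i \ge \pi(i)$) and the definition of $s_i$ ensures that no other vertex lies componentwise between $b_i$ and $w_i$; if $i$ is a descent, I would take $t$ to be the smallest ascent with $t \ge i$ and $\pi(t+1) > \pi(i)$ and verify that $w_t$ is the unique immediate NE-neighbour, with the sentinel $w_n$ handling the boundary $\pi(i)=n$. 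A mirror analysis using descents to the left of $i$ and the sentinel $w_0$ yields the unique incoming white neighbour, and uniqueness in each case follows once more from condition (2).

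The main obstacle is the bicoloring step, specifically translating a hypothetical violating black-to-black (or white-to-white) pair into an explicit forbidden Baxter pattern on four indices that includes a consecutive pair arising from an ascent. Once bicoloring is settled, planarity is a one-line geometric observation and the indegree/outdegree-$1$ assertion reduces to a routine case analysis around each $b_i$, organised by the ascent/descent pattern of $\pi$ in a neighbourhood of $i$.
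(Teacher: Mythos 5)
This theorem is imported from Bonichon et al.~\cite{BonichonBF08} as a citation; the paper contains no proof of its own, so your argument can only be judged on its merits. The bicoloring step for black--black pairs is correct and is indeed where the Baxter hypothesis enters: taking $t$ to be the largest index in $\{i,\dots,j-1\}$ with $\pi(t)<\pi(j)$ does force an ascent with $s_t\ge\pi(i)$, and a failure of $s_t<\pi(j)$ does yield the pattern $3\customdash{0.4em}14\customdash{0.4em}2$ on indices $k'<t<t+1<j$. (Be aware that the white--white case is not literally ``symmetric'': white vertices exist only at ascents and are placed via the $s_t$ formula, and the sentinels $w_0,w_n$ need separate treatment, so that half requires its own, if similar, pattern argument.)

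The genuine gap is planarity. The claim that two crossing NE-monotone segments must have an endpoint of one lying componentwise strictly between the endpoints of the other is false: the segments $(0,1)\to(3,2)$ and $(1,0)\to(2,3)$ cross at $(1.5,1.5)$, yet $(1,0)$ and $(2,3)$ lie outside the open rectangle spanned by $(0,1)$ and $(3,2)$, and vice versa. If you run the endpoint analysis carefully, condition (2) applied to the four endpoints eliminates every crossing configuration \emph{except} exactly this ``X'' shape ($u_1$ NW of $u_2$ and $v_2$ NW of $v_1$), and excluding it in $G(\pi)$ requires exhibiting some other vertex of $G(\pi)$ inside one of the two rectangles, or a forbidden Baxter pattern --- this is the substantive portion of the Bonichon et al.\ proof, not a one-line geometric observation. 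Relatedly, the uniqueness half of the indegree/outdegree-$1$ claim does not ``follow once more from condition (2)'': that condition only asserts each candidate neighbour spans an empty rectangle with $b_i$, and you must separately rule out two \emph{incomparable} white vertices both being immediate successors of $b_i$, which again needs the pattern-avoidance hypothesis rather than just the edge definition. So the proposal is sound on bicoloring but incomplete on planarity and on the degree condition.
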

Finally, the algorithm removes all black vertices to obtain a plane bipolar orientation $B$ with source $w_0$ and sink $w_n$. One of the most crucial features of this mapping algorithm is that every element of $\pi$ (or equivalently, every black vertex in $G(\pi)$) creates an edge in $B$, hence, the size of $\pi$ gets mapped to the number of edges of $B$, and vice-versa (Theorem 2.1 in~\cite{BonichonBF08}). Furthermore, the edge corresponding to $\pi(i)$ (henceforth, the edge $\pi(i)$) in $B$ is uniquely determined given any $i$. In light of the above discussion, let us now define the queries formally. Given a Baxter permutation $\pi = (\pi(1), \dots, \pi(n))$ of size $n$, and indices $i,j$ such that $1 \leq i<j \leq n$,
\begin{itemize}
    \item are the directed edges $\pi(i)$ and $\pi(j)$ adjacent in $B$?
    \item enumerate all the directed adjacent neighbors of the edge $\pi(i)$ in $B$?
\end{itemize}

\begin{figure}[bt]
\centering
\includegraphics[width=.4\textwidth]{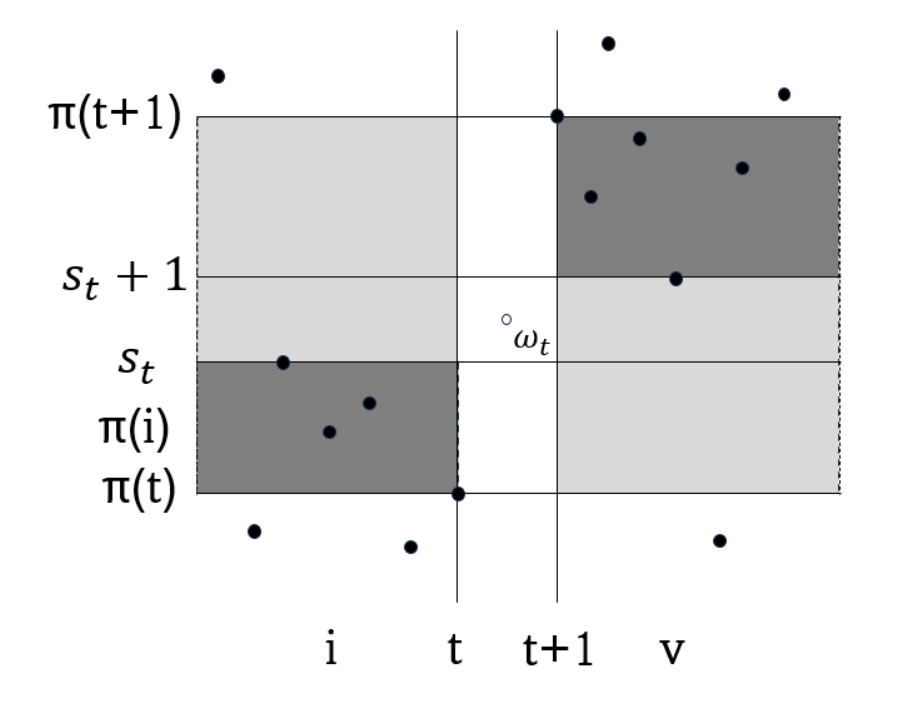}
\caption{Generic structure of the construction algorithm.}
\label{fig:baxter_structure}
\end{figure}

Here, given $i$ and $j$ such that $i<j$, we say that two edges in $B$ are \textit{adjacent} if the white endpoint of the edge $\pi(i)$ is the same as the start point of the edge $\pi(j)$. For example, in Figure~\ref{fig:baxtertobipolar}, $\pi(3)$ and $\pi(6)$ are adjacent whereas the edges $\pi(4)$ and $\pi(6)$ are not. In other words, these can also be thought of as a directed path of length $2$. Generalizing this, given $i$, the \textit{enumeration} query asks to list out all the adjacent edges of the edge $\pi(i)$. For example, the enumeration query on $\pi(3)$ (or equivalently, edge-labeled $2$) reports $4$ and $5$ respectively. As answers to these queries completely decode the entire graph, in graph data structure parlance, supporting them efficiently is of paramount importance. 

Before providing the query algorithms, let us prove some useful properties of any Baxter permutation $\pi$, its embedded graph $G(\pi)$, and their corresponding bipolar orientation $B$.

\begin{lemma}\label{lem:ascent}
    The edges $\pi(t)$ and $\pi(t+1)$ are adjacent in $B$ if and only if $t$ is an ascent.
\end{lemma}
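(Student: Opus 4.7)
The plan is to analyze where, in the construction of $G(\pi)$ and then $B$, the edge corresponding to $\pi(t)$ terminates and the edge corresponding to $\pi(t+1)$ begins, and then verify that these two white endpoints coincide precisely when $t$ is an ascent. Recall that every black vertex $b_i$ has indegree and outdegree exactly one in $G(\pi)$ by Theorem~\ref{thm:baxtertobipolar}, so once the black vertices are contracted away, the edge $\pi(i)$ in $B$ is exactly the concatenation of the unique incoming edge of $b_i$ (from some white vertex $w_{\mathrm{in}}$) and the unique outgoing edge of $b_i$ (to some white vertex $w_{\mathrm{out}}$); thus $\pi(i)$ points from $w_{\mathrm{in}}$ to $w_{\mathrm{out}}$. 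Consequently, $\pi(t)$ and $\pi(t+1)$ are adjacent in $B$ if and only if the white out-neighbor of $b_t$ in $G(\pi)$ equals the white in-neighbor of $b_{t+1}$.

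For the forward direction, assume $t$ is an ascent, so $\pi(t)<\pi(t+1)$ and the white vertex $w_t=(t+\tfrac12,s_t+\tfrac12)$ exists with $s_t\ge\pi(t)$ (since $\pi(t)$ itself is an admissible value in the max defining $s_t$) and $s_t<\pi(t+1)$. Hence $b_t\le w_t\le b_{t+1}$ componentwise (with strict inequalities in at least one coordinate on each segment). I would then argue that no intermediate vertex of $G(\pi)$ lies strictly between $b_t$ and $w_t$, nor between $w_t$ and $b_{t+1}$: any candidate black vertex $b_k$ would need $t\le k\le t+\tfrac12$ (respectively $t+\tfrac12\le k\le t+1$), which forces $k=t$ (respectively $k=t+1$); any candidate white vertex $w_{t'}$ would need $t'+\tfrac12$ in the appropriate half-open half-integer interval, which forces $t'=t$. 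Therefore the edges $b_t\to w_t$ and $w_t\to b_{t+1}$ both exist in $G(\pi)$, so $\pi(t)$ ends at $w_t$ and $\pi(t+1)$ starts at $w_t$, making them adjacent in $B$.

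For the reverse direction, assume $t$ is a descent, so $\pi(t)>\pi(t+1)$. Suppose for contradiction that $\pi(t)$ and $\pi(t+1)$ are adjacent and share a common white vertex $w=(x,y)$ (with $x,y$ half-integers). As the out-neighbor of $b_t$, we must have $b_t\le w$, giving $y\ge \pi(t)$; as the in-neighbor of $b_{t+1}$, we must have $w\le b_{t+1}$, giving $y\le \pi(t+1)$. Combined, $\pi(t)\le y\le \pi(t+1)$, contradicting $\pi(t)>\pi(t+1)$. Hence no such $w$ exists and the two edges are not adjacent in $B$.

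The only step that requires some care is verifying the absence of intermediate vertices on the two segments in the forward direction; this is the main obstacle, but it reduces to a routine case analysis on integer versus half-integer coordinates together with the definitional bound $s_t<\pi(t+1)$. Once this is settled, both directions follow immediately from the contraction description of $B$.
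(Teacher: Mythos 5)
Your proof is correct and follows essentially the same route as the paper's: the forward direction verifies, exactly as the paper does, that the ascent's white vertex $w_t=(t+\tfrac12,s_t+\tfrac12)$ receives an edge from $b_t$ and sends one to $b_{t+1}$. Your converse (the paper only remarks it is ``easy to observe'') is a clean and valid completion via the componentwise monotonicity of edges in $G(\pi)$, which forces the shared white vertex's second coordinate to lie between $\pi(t)$ and $\pi(t+1)$.
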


\begin{proof}
For every ascent $t$ ($1 \leq t \leq (n-1)$) in $\pi$, i.e., $\pi(t) < \pi(t+1)$, the algorithm adds a white vertex $w_t$ at $(t+.5, s_t+.5)$ where $s_t= max\{\pi(i):i \leq t$ \text{and} $\pi(i) < \pi(t+1)\}$. Now, it is easy to see that there will be a directed edge from $(t,\pi(t))$ to $w_t=(t+.5, s_t+.5)$ as (a) $t \leq t+.5$, (b) $\pi(t) \leq s_t+.5$, and (c) there does not exist any vertex between $(t,\pi(t))$ and $(t+.5, s_t+.5)$. We can argue similarly about the existence of an edge from $w_t=(t+.5, s_t+.5)$ to $(t+1,\pi(t+1))$. When, finally, the black vertices $\pi(t)$ and $\pi(t+1)$ are removed from $G(\pi)$, they give rise to two edges that share the white vertex $w_t$ as the endpoint of the edge $\pi(t)$ and the starting point of the edge $\pi(t+1)$, making them adjacent. The other direction is easy to observe from the construction algorithm of $G(\pi)$.
\end{proof}

\begin{lemma}\label{lem:nlv_adjacent}
    The edges $\pi(i)$ and $\pi(\nlv(i))$ are adjacent. 
\end{lemma}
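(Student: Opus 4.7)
The plan is to reduce to Lemma~\ref{lem:ascent} when $i$ is an ascent, and handle the descent case by explicitly exhibiting a shared white vertex in the intermediate graph $G(\pi)$. Specifically, if $\pi(i) < \pi(i+1)$ then $\nlv(i) = i+1$, so $\pi(i)$ and $\pi(\nlv(i))$ are adjacent directly by Lemma~\ref{lem:ascent}. Henceforth assume $\pi(i) > \pi(i+1)$ and set $j = \nlv(i)$, so that $\pi(i+1), \pi(i+2), \ldots, \pi(j-1)$ are all strictly less than $\pi(i)$, while $\pi(j) > \pi(i)$.

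The key observation is that $j-1$ is an ascent: indeed $\pi(j-1) < \pi(i) < \pi(j)$. So the construction places a white vertex $w_{j-1}$ at $(j-\tfrac{1}{2},\, s_{j-1}+\tfrac{1}{2})$, where $s_{j-1} = \max\{\pi(k) : k \le j-1,\ \pi(k) < \pi(j)\} \ge \pi(i)$ (taking $k=i$ as a candidate). I will argue that $w_{j-1}$ is simultaneously the head of the edge leaving $b_i$ and the tail of the edge entering $b_j$; once these two edges are identified in $G(\pi)$, removing the black vertices per the construction fuses them so that the edges labelled $\pi(i)$ and $\pi(j)$ share the endpoint $w_{j-1}$, which is exactly the adjacency we want.

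To produce the edge $b_i \to w_{j-1}$, I verify that $w_{j-1}$ lies weakly north-east of $b_i$ (immediate from $j-\tfrac12 \ge i$ and $s_{j-1}+\tfrac12 \ge \pi(i)$) and that no other vertex lies in the axis-aligned rectangle $[i, j-\tfrac12] \times [\pi(i), s_{j-1}+\tfrac12]$. Black vertices $b_k$ with $i < k \le j-1$ are ruled out because $\pi(k) < \pi(i)$ by the definition of $\nlv(i)$. White vertices $w_t$ with $i \le t \le j-2$ are ruled out as follows: $t = i$ is excluded since $i$ is a descent, and for $i < t < j-1$ both $\pi(t)$ and $\pi(t+1)$ are less than $\pi(i)$, so $s_t < \pi(t+1) < \pi(i)$, placing $w_t$ strictly below the rectangle. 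Symmetrically, to produce $w_{j-1} \to b_j$ I check that $w_{j-1}$ sits south-west of $b_j$ (since $s_{j-1} < \pi(j)$) and that the rectangle $[j-\tfrac12, j] \times [s_{j-1}+\tfrac12, \pi(j)]$ contains no other vertices: only $x$-coordinates $j-\tfrac12$ and $j$ are possible inside, corresponding exactly to $w_{j-1}$ and $b_j$. Because Theorem~\ref{thm:baxtertobipolar} guarantees indegree and outdegree $1$ at every black vertex, the edges I have exhibited are necessarily \emph{the} outgoing edge of $b_i$ and \emph{the} incoming edge of $b_j$.

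I expect the routine parts (the ascent case and the coordinate-geometry checks that $w_{j-1}$ is NE of $b_i$ and SW of $b_j$) to be easy. The main obstacle is the emptiness of the two rectangles; this is where the Baxter hypothesis enters implicitly through the way $\nlv$ interacts with the intermediate values $\pi(i+1), \ldots, \pi(j-1)$, and it is important to cleanly use $s_t < \pi(t+1) < \pi(i)$ for the intermediate ascents $t$ to rule out a spurious white vertex inside the rectangle on the $b_i$ side. With both rectangles verified empty, the conclusion follows immediately from the construction of $B$ from $G(\pi)$.
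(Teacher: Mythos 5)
Your proposal is correct and follows essentially the same route as the paper's proof: both identify the ascent at position $j-1$ (where $j=\nlv(i)$), use the white vertex $w_{j-1}$ from Lemma~\ref{lem:ascent} as the shared endpoint, and then verify that $b_i \to w_{j-1}$ is an edge of $G(\pi)$ because no vertex lies between them. Your write-up is somewhat more explicit than the paper's (separating the ascent/descent cases and spelling out why no black or white vertex can lie in the relevant rectangle, where the paper just asserts that such a vertex would violate $j=\nlv(i)$), but the underlying argument is the same.
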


\begin{proof}
Suppose that $j=\nlv(i)$ for all $1 \leq i \leq (n-1)$, this implies that $\pi(k) < \pi(i)$ for all $ i < k < j$. This also implies that $\pi(j-1)$ and $\pi(j)$ form an ascent. As a consequence, by Lemma~\ref{lem:ascent}, there must exist a white vertex $w_{j-1}$ at $(j-.5, s_{j-1}+.5)$ where $s_{j-1}= max\{\pi(i):i \leq (j-1)$ \text{and} $\pi(i) < \pi(j)\}$. Furthermore, since $\pi(j-1)$ and $\pi(j)$ share $w_{j-1}$, they are also adjacent. All that remains to be shown is that there must exist an edge between $(i,\pi(i))$ and $w_{j-1}$. This is easy to verify as (a) $i \leq (j-.5)$ (as $j=\nlv(i)$) and (b) $\pi(i) \leq s_{j-1}+.5$ (by definition of $s_{j-1}$), and (c) there does not exist any vertex $q$ such that $(i,\pi(i)) < q < w_{j-1}$ as otherwise the condition $j=\nlv(i)$ is violated. 
\end{proof}

\begin{figure}[bt]
\centering
\includegraphics[width=.8\textwidth]{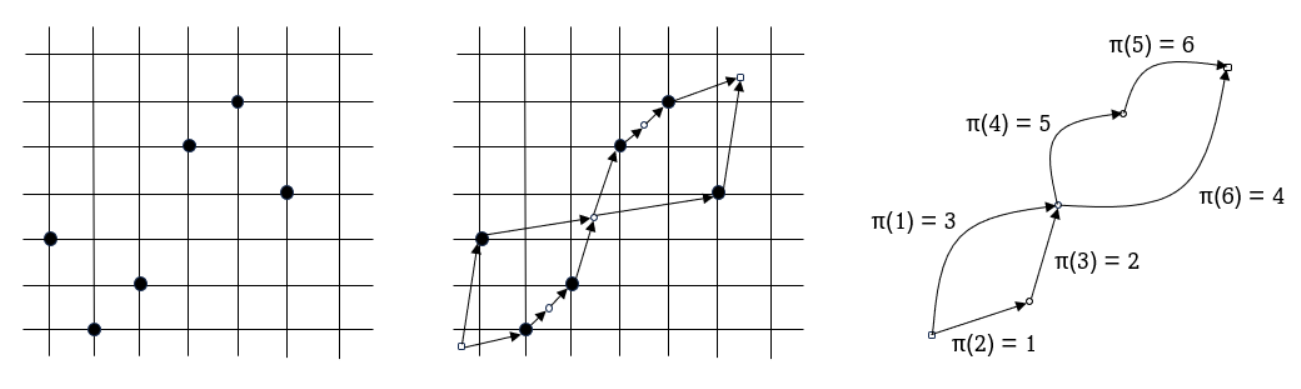}
\caption{The Baxter permutation $\pi= (3,1,2,5,6,4)$, the embedded directed graph $G(\pi)$, and the plane bipolar orientation $B$.}
\label{fig:baxtertobipolar}
\end{figure}

\begin{lemma}\label{lem:main_adjacent}
 The edges $\pi(i)$ and $\pi(k)$ are adjacent if and only if $\rminq{}(\nlv(i),k) = k$ for all $1 \leq i < \nlv(i) \leq k \leq n$ and $\pi(i) < \pi(k)$.   
\end{lemma}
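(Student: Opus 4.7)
The plan is to translate adjacency in $B$ back through the intermediate graph $G(\pi)$. By Theorem~\ref{thm:baxtertobipolar} every black vertex has both indegree and outdegree exactly $1$, so the edges $\pi(i)$ and $\pi(k)$ are adjacent in $B$ precisely when the unique outgoing white vertex of $(i,\pi(i))$ in $G(\pi)$ equals the unique incoming white vertex of $(k,\pi(k))$. Following the argument in the proof of Lemma~\ref{lem:nlv_adjacent}, the outgoing white vertex of $(i,\pi(i))$ is always $w_{\nlv(i)-1}$, placed at $(\nlv(i)-0.5,\, s_{\nlv(i)-1}+0.5)$, where $s_t = \max\{\pi(j') : j'\le t,\, \pi(j')<\pi(t+1)\}$, and $s_{\nlv(i)-1}\ge\pi(i)$ since $\pi(i)$ lies in the maximand. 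Hence the lemma reduces to showing that the directed edge $w_{\nlv(i)-1}\to(k,\pi(k))$ exists in $G(\pi)$ iff the three stated conditions hold, and by the edge rule of $G(\pi)$ this edge exists iff (a) $\nlv(i)\le k$, (b) $s_{\nlv(i)-1}<\pi(k)$, and (c) no vertex of $G(\pi)$ lies strictly inside the open rectangle spanned by $w_{\nlv(i)-1}$ and $(k,\pi(k))$.

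Condition (a) matches $\nlv(i)\le k$ directly, and (b) combined with $s_{\nlv(i)-1}\ge\pi(i)$ immediately yields $\pi(i)<\pi(k)$. The two nontrivial equivalences ``$\pi(i)<\pi(k)\Leftrightarrow$~(b)'' and ``$\rminq(\nlv(i),k)=k\Leftrightarrow$~(c)'' will both be proved by Baxter-forbidden-pattern arguments. For ``$\pi(i)<\pi(k)\Rightarrow$~(b)'', I would assume for contradiction that $s_{\nlv(i)-1}\ge\pi(k)$ and let $j^*\le\nlv(i)-1$ realize $\pi(j^*)=s_{\nlv(i)-1}$. The bounds $\pi(i)<\pi(k)\le\pi(j^*)$ force $j^*<i$, and $\pi(j^*)<\pi(\nlv(i))$ forces $k>\nlv(i)$; then the four distinct positions $j^*<\nlv(i)-1<\nlv(i)<k$ carry values satisfying $\pi(\nlv(i)-1)<\pi(k)<\pi(j^*)<\pi(\nlv(i))$ (using $\pi(\nlv(i)-1)\le\pi(i)$ by the definition of $\nlv$), which is the forbidden pattern $3\customdash{0.4em}14\customdash{0.4em}2$ with consecutive pair $(\nlv(i)-1,\nlv(i))$, contradicting that $\pi$ is Baxter.

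The direction ``$\rminq(\nlv(i),k)=k\Rightarrow$~(c)'' is straightforward: the condition forbids any intermediate black vertex $(j,\pi(j))$ directly, and a white intermediate $w_{t'}$ with $t'\in[\nlv(i),k-1]$ is excluded by a short computation of $s_{t'}$ which, using $\pi(\nlv(i))>\pi(k)$, forces $s_{t'}\le s_{\nlv(i)-1}$ whenever $s_{t'}<\pi(k)$. For the converse, I would take the smallest $j\in[\nlv(i),k-1]$ with $\pi(j)<\pi(k)$, if any exists. If $\pi(j)>s_{\nlv(i)-1}$ then $(j,\pi(j))$ sits strictly inside the rectangle, violating (c); otherwise $\pi(j)\le s_{\nlv(i)-1}$, and since $\pi(\nlv(i))>s_{\nlv(i)-1}$ forces $j>\nlv(i)$, we have $\pi(j-1)>\pi(k)$ by the minimality of $j$, so the four distinct positions $j^*<j-1<j<k$ carry values satisfying $\pi(j)<\pi(j^*)<\pi(k)<\pi(j-1)$, exhibiting the forbidden pattern $2\customdash{0.4em}41\customdash{0.4em}3$ with consecutive pair $(j-1,j)$.

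The main obstacle will be organizing the pattern-chasing: in both directions one must identify the correct witness $j^*$ that realizes $s_{\nlv(i)-1}$ and pair it with the right consecutive descent to instantiate either $3\customdash{0.4em}14\customdash{0.4em}2$ or $2\customdash{0.4em}41\customdash{0.4em}3$ on four distinct positions. The valley structure that $\nlv(i)$ imposes on the interval $(i,\nlv(i))$ is exactly what guarantees the needed consecutive descent is available for the forbidden-pattern argument to go through.
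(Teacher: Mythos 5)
Your proposal follows essentially the same route as the paper: adjacency is reduced, via Theorem~\ref{thm:baxtertobipolar} and the argument of Lemma~\ref{lem:nlv_adjacent}, to the existence of the edge $w_{\nlv(i)-1}\to(k,\pi(k))$ in $G(\pi)$, which is then matched against the edge rule of $G(\pi)$; your version is in fact more explicit than the paper's (which asserts the key equivalences with ``it can be seen that''), and the forbidden-pattern instantiations you give check out. One sub-step needs repair: to exclude an intermediate white vertex $w_{t'}$ with $t'\in[\nlv(i),k-1]$, your criterion ``$s_{t'}\le s_{\nlv(i)-1}$ whenever $s_{t'}<\pi(k)$'' is not sufficient, because $s_{t'}=s_{\nlv(i)-1}$ would still place $w_{t'}$ strictly between $w_{\nlv(i)-1}$ and $(k,\pi(k))$ in the componentwise order and block the edge. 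The correct observation is simpler: $\rminq(\nlv(i),k)=k$ forces $\pi(t')>\pi(k)$ for every $t'\in[\nlv(i),k-1]$, a white vertex exists only at an ascent $t'$, and then $s_{t'}\ge\pi(t')>\pi(k)$, so $w_{t'}$ lies strictly above the rectangle and never interferes.
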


\begin{proof}
    Suppose that $\pi(i)$ and $\pi(k)$ are adjacent. Then, by Lemma~\ref{lem:nlv_adjacent}, $\pi(i)$ and $\pi(\nlv(i))$ are adjacent. Suppose that $j=\nlv(i)$, then using the same argument as in Lemma~\ref{lem:nlv_adjacent} above, we know that there must exist an edge between $(i,\pi(i))$ and $w_{j-1}$. As $\pi(i)$ and $\pi(k)$ are adjacent, there must exist an edge from $w_{j-1}$ to $(k,\pi(k))$. This implies that (a) $ (j-.5) \leq k$, (b) $ s_{j-1}+.5 \leq \pi(k)$, and (c) there does not exist any vertex $q'$ such that $w_{j-1} < q' < (k,\pi(k))$. Therefore, $\pi(i) < \pi(k)$ and $\pi(k)$ admit the minimum value between (and including) $\pi(j)$ and $\pi(k)$, thus, $\rminq{}(\nlv(i),k) = k$.
    In the other direction, assume now that $\rminq{}(\nlv(i),k) = k$, $\pi(i) < \pi(k)$, and suppose that $j=\nlv(i)$, then from the construction of the embedded graph $G(\pi)$, it can be seen that both $(j,\pi(j))$ and $(k,\pi(k))$ share the same white vertex (as starting vertex), say $w$, that is created due to the ascent of $\pi(j-1) < \pi(j)$. Now, using similar arguments as in Lemma~\ref{lem:nlv_adjacent}, it is easy to see that there exists an edge between $(i,\pi(i))$ and $w$. Combining these two edges proves that $\pi(i)$ and $\pi(k)$ are adjacent.
\end{proof}

Note that both the observations stated in Lemma~\ref{lem:ascent} and Lemma~\ref{lem:nlv_adjacent} are special cases of Lemma~\ref{lem:main_adjacent}. Next, given $i$ and a Baxter permutation $\pi$ of size $n$, we consider the enumeration query which asks to list all the directed adjacent neighbors of the edge $\pi(i)$ in $B$. From the construction of the embedded graph $G(\pi)$ and the lemma~\ref{lem:nlv_adjacent}, it can be observed that the first neighbor (in left-to-right order in $\pi$) of $\pi(i)$ is the edge $\pi(\nlv(i))$. Let $t+1:=\nlv(i)$, then, we continue $\nsv$ queries (starting with $\nsv(t+1)$) and report as query answer $\pi(\nsv)$ values as long as $\pi(\nsv(r)) > \pi(i)$ and $\pi(\nsv(r+1)) < \pi(i)$ for some $t+1<r<n$. We claim that this procedure correctly reports all the directed neighbors of $\pi(i)$. See Figure~\ref{fig:baxter_structure} for a visual description of the proof. Note that for $\pi(i)$, the first ascent to its right is formed by $\pi(t)$ and $\pi(t+1)$ which in turn creates the white vertex $w_t$. From the construction algorithm, $(i, \pi(i))$ has an outgoing edge to $w_t$. As we start from $\pi(\nlv(i))=\pi(t+1)$, and continue to query smaller values successively, note that we are continuously moving down in the first quadrant (marked with dark grey color) of Figure~\ref{fig:baxter_structure}. All these vertices (call them $\pi(j)$s) must have an incoming edge from $w_t$ (hence, directed neighbors of $\pi(i)$) as long as $w_t \leq \pi(j)$ and there does not exist any $w_t < z < \pi(j)$. And, precisely, these conditions break down when we arrive at some $r$ such that $\pi(\nsv(r)) > \pi(i)$ and $\pi(\nsv(r+1)) < \pi(i)$ for some $t+1<r<n$. Furthermore, we can stop our algorithm here as we do not have to potentially look for any more neighbors of $\pi(i)$ as had such a neighbor existed, that would result in a $3\customdash{0.4em}14\customdash{0.4em}2$ pattern in $\pi$, which in turn violates the assumption that $\pi$ is Baxter. This concludes the description of our algorithms. Thus, the algorithm uses $\nlv$ and $\nsv$ queries for a combined $O(|neighbor|)$ time (here $|neighbor|$ counts the number of directed adjacent neighbors of $\pi(i)$, i.e., size of the output) along with using $\pi(i)$ queries each time to enumerate all the answers. Now, using Theorem~\ref{thm:permutation_query}, and Theorem~\ref{thm:max_cartesian}, 
we obtain the proof of the following theorem:
\begin{theorem}\label{thm:bipolar_main}
    Given a plane bipolar orientation $B$ with $n$ edges, there exists a $(3n+o(n))$-bit representation for $B$ that
    \begin{itemize}
        \item checks whether the directed edges $\pi(i)$ and $\pi(j)$ are adjacent in $B$ in $O(f_1(n))$ time, and
        \item enumerates the directed adjacent neighbors of the edge $\pi(i)$ in $B$ in $O(f_1(n))$~time per neighbor,
    \end{itemize}
    for any $f_1(n) = \omega(\log n)$.
\end{theorem}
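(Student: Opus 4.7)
The plan is to assemble the theorem from three ingredients already established in the excerpt: the $(3n+o(n))$-bit representation of a Baxter permutation $\pi$ of size $n$ from Theorem~\ref{thm:representation}; the query support from Theorems~\ref{thm:permutation_query} and~\ref{thm:max_cartesian}, which gives $\pi(i)$, $\rminq$, $\rmaxq$, $\psv$, $\plv$, $\nsv$, and $\nlv$ in $O(f_1(n))$ time for any $f_1(n) = \omega(\log n)$; and the structural lemmas on the bijection of Bonichon et al., namely Lemmas~\ref{lem:ascent}, \ref{lem:nlv_adjacent}, and~\ref{lem:main_adjacent}. Since $B$ is in bijection with a unique Baxter permutation $\pi$ of size $n$, storing $\pi$ with the representation of Theorem~\ref{thm:representation} immediately gives a $(3n+o(n))$-bit encoding of $B$, and what remains is to implement the two queries within the stated time bounds.

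For the adjacency query on edges $\pi(i)$ and $\pi(j)$ with $i<j$, I would directly appeal to Lemma~\ref{lem:main_adjacent}: compute $t := \nlv(i)$, then $r := \rminq(t, j)$, and return true if and only if $r = j$ and $\pi(i) < \pi(j)$. Corner cases (for example $j < t$, or $\nlv(i)$ undefined because $i = n$, or $\pi(i) > \pi(j)$) are filtered out by a constant number of direct checks against $\pi(i)$ and $\pi(j)$. Each of $\nlv$, $\rminq$, and $\pi(\cdot)$ runs in $O(f_1(n))$ time by Theorems~\ref{thm:permutation_query} and~\ref{thm:max_cartesian}, so the overall query takes $O(f_1(n))$ time.

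For the enumeration query on $\pi(i)$, I would follow the algorithm sketched in the paragraph preceding the theorem statement. Specifically, initialize $r := \nlv(i)$ and report $\pi(r)$ as the first neighbor (its existence and correctness as the leftmost neighbor follow from Lemma~\ref{lem:nlv_adjacent}); then repeatedly set $r := \nsv(r)$ and, while $\pi(r) > \pi(i)$, output $\pi(r)$ as the next neighbor; terminate at the first $r$ with $\pi(r) < \pi(i)$. Correctness and exhaustiveness follow from the geometric analysis given in the text around Figure~\ref{fig:baxter_structure}: every white vertex reachable as the target of the outgoing edge from $(i, \pi(i))$ is accounted for by an $\nsv$ step, and extending past the termination condition would introduce a $3\customdash{0.4em}14\customdash{0.4em}2$ pattern, contradicting the Baxter property. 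Each iteration performs one $\nsv$ and one $\pi(\cdot)$ query, each in $O(f_1(n))$ time, giving $O(f_1(n))$ time per reported neighbor.

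The main technical obstacle is not a new structural insight but careful bookkeeping around boundary cases: handling $i \in \{1,n\}$ where $w_0$ or $w_n$ acts as the source or sink endpoint of the edge, ensuring that the first neighbor (given by $\nlv$) is not double-reported by the subsequent $\nsv$ loop, and stating the termination condition precisely so that the final $\nsv$ value below $\pi(i)$ is not emitted. Once these cases are dispatched with a constant number of $\pi(\cdot)$ comparisons, the space bound is inherited verbatim from Theorem~\ref{thm:representation} together with its $o(n)$-bit auxiliary structures, and the time bounds follow from Theorems~\ref{thm:permutation_query} and~\ref{thm:max_cartesian}, completing the proof.
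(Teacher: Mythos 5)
Your proposal is correct and follows essentially the same route as the paper: the space bound is inherited from the representation of $\pi$ via Theorem~\ref{thm:permutation_query}, adjacency is decided by the characterization of Lemma~\ref{lem:main_adjacent} using one $\nlv$, one $\rminq$, and two $\pi(\cdot)$ evaluations, and enumeration is the $\nlv$-then-repeated-$\nsv$ walk whose correctness rests on Lemma~\ref{lem:nlv_adjacent} and the $3\customdash{0.4em}14\customdash{0.4em}2$-avoidance argument. Your extra attention to boundary cases ($i\in\{1,n\}$, avoiding double-reporting the first neighbor) is a harmless refinement of what the paper leaves implicit.
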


\section{Future Work}\label{sec:future}
We conclude with the following concrete problems for possible further work in the future:
(1) Can we improve the query times of $\pi$ and $\pi^{-1}$ for Baxter permutations? (2) can we show any time/space trade-off lower bound for Baxter permutation similar to that of general permutation~\cite{golynski2009cell}? and (3) are there any succinct data structures for other pattern-avoiding permutations?

\bibliography{ref}

\begin{thebibliography}{10}

\bibitem{DBLP:journals/dam/AckermanBP06}
Eyal Ackerman, Gill Barequet, and Ron~Y. Pinter.
\newblock A bijection between permutations and floorplans, and its
  applications.
\newblock {\em Discret. Appl. Math.}, 154(12):1674--1684, 2006.

\bibitem{Baxter:original}
Glen Baxter.
\newblock On fixed points of the composite of commuting functions.
\newblock {\em Proceedings of the American Mathematical Society},
  15(6):851--855, 1964.

\bibitem{belazzougui2015optimal}
Djamal Belazzougui and Gonzalo Navarro.
\newblock Optimal lower and upper bounds for representing sequences.
\newblock {\em ACM Transactions on Algorithms (TALG)}, 11(4):1--21, 2015.

\bibitem{DBLP:conf/cpm/BlellochF10}
Guy~E. Blelloch and Arash Farzan.
\newblock Succinct representations of separable graphs.
\newblock In {\em {CPM}}, volume 6129 of {\em Lecture Notes in Computer
  Science}, pages 138--150. Springer, 2010.

\bibitem{bona}
Mikl{\'{o}}s B{\'{o}}na.
\newblock {\em Combinatorics of Permutations, Second Edition}.
\newblock Discrete mathematics and its applications. {CRC} Press, 2012.

\bibitem{BonichonBF08}
Nicolas Bonichon, Mireille Bousquet{-}M{\'{e}}lou, and {\'{E}}ric Fusy.
\newblock Baxter permutations and plane bipolar orientations.
\newblock {\em Electron. Notes Discret. Math.}, 31:69--74, 2008.
\newblock \href {https://doi.org/10.1016/j.endm.2008.06.011}
  {\path{doi:10.1016/j.endm.2008.06.011}}.

\bibitem{DBLP:journals/ipl/BoseBL98}
Prosenjit Bose, Jonathan~F. Buss, and Anna Lubiw.
\newblock Pattern matching for permutations.
\newblock {\em Inf. Process. Lett.}, 65(5):277--283, 1998.

\bibitem{DBLP:conf/dcc/ChakrabortyJSS21}
Sankardeep Chakraborty, Seungbum Jo, Kunihiko Sadakane, and Srinivasa~Rao
  Satti.
\newblock Succinct data structures for small clique-width graphs.
\newblock In {\em {DCC}}, pages 133--142. {IEEE}, 2021.

\bibitem{chuang1998compact}
Richie Chih-Nan Chuang, Ashim Garg, Xin He, Ming-Yang Kao, and Hsueh-I Lu.
\newblock Compact encodings of planar graphs via canonical orderings and
  multiple parentheses.
\newblock In {\em Automata, Languages and Programming: 25th International
  Colloquium, ICALP'98 Aalborg, Denmark, July 13--17, 1998 Proceedings 25},
  pages 118--129. Springer, 1998.

\bibitem{cori1986shuffle}
Robert Cori, Serge Dulucq, and G{\'e}rard Viennot.
\newblock Shuffle of parenthesis systems and baxter permutations.
\newblock {\em Journal of Combinatorial Theory, Series A}, 43(1):1--22, 1986.

\bibitem{dulucq1996stack}
Serge Dulucq and Olivier Guibert.
\newblock Stack words, standard tableaux and baxter permutations.
\newblock {\em Discrete Mathematics}, 157(1-3):91--106, 1996.

\bibitem{dulucq1998baxter}
Serge Dulucq and Olivier Guibert.
\newblock Baxter permutations.
\newblock {\em Discrete Mathematics}, 180(1-3):143--156, 1998.

\bibitem{Farzan2014}
A.~Farzan and J.~I. Munro.
\newblock A uniform paradigm to succinctly encode various families of trees.
\newblock {\em Algorithmica}, 68(1):16--40, Jan 2014.

\bibitem{DBLP:journals/algorithmica/FarzanM14}
Arash Farzan and J.~Ian Munro.
\newblock A uniform paradigm to succinctly encode various families of trees.
\newblock {\em Algorithmica}, 68(1):16--40, 2014.

\bibitem{DBLP:journals/jct/FelsnerFNO11}
Stefan Felsner, {\'{E}}ric Fusy, Marc Noy, and David Orden.
\newblock Bijections for baxter families and related objects.
\newblock {\em J. Comb. Theory, Ser. {A}}, 118(3):993--1020, 2011.

\bibitem{DBLP:journals/tcs/Fischer11}
Johannes Fischer.
\newblock Combined data structure for previous- and next-smaller-values.
\newblock {\em Theor. Comput. Sci.}, 412(22):2451--2456, 2011.

\bibitem{Fusy06}
{\'{E}}ric Fusy.
\newblock Straight-line drawing of quadrangulations.
\newblock In Michael Kaufmann and Dorothea Wagner, editors, {\em Graph Drawing,
  14th International Symposium, {GD} 2006, Karlsruhe, Germany, September 18-20,
  2006. Revised Papers}, volume 4372 of {\em Lecture Notes in Computer
  Science}, pages 234--239. Springer, 2006.
\newblock \href {https://doi.org/10.1007/978-3-540-70904-6\_23}
  {\path{doi:10.1007/978-3-540-70904-6\_23}}.

\bibitem{DBLP:conf/esa/GawrychowskiLN14}
Pawel Gawrychowski, Moshe Lewenstein, and Patrick~K. Nicholson.
\newblock Weighted ancestors in suffix trees.
\newblock In {\em {ESA}}, volume 8737 of {\em Lecture Notes in Computer
  Science}, pages 455--466. Springer, 2014.

\bibitem{gawrychowski2015optimal}
Pawe{\l} Gawrychowski and Patrick~K Nicholson.
\newblock Optimal encodings for range top-k k, selection, and min-max.
\newblock In {\em Automata, Languages, and Programming: 42nd International
  Colloquium, ICALP 2015, Kyoto, Japan, July 6-10, 2015, Proceedings, Part I
  42}, pages 593--604. Springer, 2015.

\bibitem{GEARY2006231}
Richard~F. Geary, Naila Rahman, Rajeev Raman, and Venkatesh Raman.
\newblock A simple optimal representation for balanced parentheses.
\newblock {\em Theoretical Computer Science}, 368(3):231--246, 2006.
\newblock Combinatorial Pattern Matching.

\bibitem{golynski2009cell}
Alexander Golynski.
\newblock Cell probe lower bounds for succinct data structures.
\newblock In {\em Proceedings of the twentieth annual ACM-SIAM symposium on
  Discrete algorithms}, pages 625--634. SIAM, 2009.

\bibitem{DBLP:journals/siamcomp/GrossiV05}
Roberto Grossi and Jeffrey~Scott Vitter.
\newblock Compressed suffix arrays and suffix trees with applications to text
  indexing and string matching.
\newblock {\em {SIAM} J. Comput.}, 35(2):378--407, 2005.

\bibitem{he2014simple}
Bryan~Dawei He.
\newblock A simple optimal binary representation of mosaic floorplans and
  baxter permutations.
\newblock {\em Theoretical Computer Science}, 532:40--50, 2014.

\bibitem{hong2000corner}
Xianlong Hong, Gang Huang, Yici Cai, Jiangchun Gu, Sheqin Dong, Chung-Kuan
  Cheng, and Jun Gu.
\newblock Corner block list: An effective and efficient topological
  representation of non-slicing floorplan.
\newblock In {\em IEEE/ACM International Conference on Computer Aided Design.
  ICCAD-2000. IEEE/ACM Digest of Technical Papers (Cat. No. 00CH37140)}, pages
  8--12. IEEE, 2000.

\bibitem{DBLP:conf/latin/JoK22}
Seungbum Jo and Geunho Kim.
\newblock Space-efficient data structure for next/previous larger/smaller value
  queries.
\newblock In {\em {LATIN}}, volume 13568 of {\em Lecture Notes in Computer
  Science}, pages 71--87. Springer, 2022.

\bibitem{lengauer2012combinatorial}
Thomas Lengauer.
\newblock {\em Combinatorial algorithms for integrated circuit layout}.
\newblock Springer Science \& Business Media, 2012.

\bibitem{MunroR01}
J.~I. Munro and V.~Raman.
\newblock Succinct representation of balanced parentheses and static trees.
\newblock {\em {SIAM} J. Comput.}, 31(3):762--776, 2001.

\bibitem{munro2012succinct}
J~Ian Munro, Rajeev Raman, Venkatesh Raman, et~al.
\newblock Succinct representations of permutations and functions.
\newblock {\em Theoretical Computer Science}, 438:74--88, 2012.

\bibitem{munro2001succinct}
J~Ian Munro and Venkatesh Raman.
\newblock Succinct representation of balanced parentheses and static trees.
\newblock {\em SIAM Journal on Computing}, 31(3):762--776, 2001.

\bibitem{nakano2020baxter}
Masahiro Nakano, Akisato Kimura, Takeshi Yamada, and Naonori Ueda.
\newblock Baxter permutation process.
\newblock {\em Advances in Neural Information Processing Systems},
  33:8648--8659, 2020.

\bibitem{navarro2014fully}
Gonzalo Navarro and Kunihiko Sadakane.
\newblock Fully functional static and dynamic succinct trees.
\newblock {\em ACM Transactions on Algorithms (TALG)}, 10(3):1--39, 2014.

\bibitem{raman2007succinct}
Rajeev Raman, Venkatesh Raman, and Srinivasa~Rao Satti.
\newblock Succinct indexable dictionaries with applications to encoding k-ary
  trees, prefix sums and multisets.
\newblock {\em ACM Transactions on Algorithms (TALG)}, 3(4):43--es, 2007.

\bibitem{shen2003bounds}
Zion~Cien Shen and Chris~CN Chu.
\newblock Bounds on the number of slicing, mosaic, and general floorplans.
\newblock {\em IEEE Transactions on Computer-Aided Design of Integrated
  Circuits and Systems}, 22(10):1354--1361, 2003.

\bibitem{DBLP:conf/dac/SzepieniecO80}
Antoni~A. Szepieniec and Ralph H. J.~M. Otten.
\newblock The genealogical approach to the layout problem.
\newblock In {\em {DAC}}, pages 535--542. {ACM/IEEE}, 1980.

\bibitem{TamassiaT86}
Roberto Tamassia and Ioannis~G. Tollis.
\newblock A unified approach a visibility representation of planar graphs.
\newblock {\em Discret. Comput. Geom.}, 1:321--341, 1986.
\newblock \href {https://doi.org/10.1007/BF02187705}
  {\path{doi:10.1007/BF02187705}}.

\bibitem{DBLP:conf/focs/Boas75}
Peter van Emde~Boas.
\newblock Preserving order in a forest in less than logarithmic time.
\newblock In {\em {FOCS}}, pages 75--84. {IEEE} Computer Society, 1975.

\bibitem{vuillemin1980unifying}
Jean Vuillemin.
\newblock A unifying look at data structures.
\newblock {\em Communications of the ACM}, 23(4):229--239, 1980.

\bibitem{YaoCCG03}
Bo~Yao, Hongyu Chen, Chung{-}Kuan Cheng, and Ronald~L. Graham.
\newblock Floorplan representations: Complexity and connections.
\newblock {\em {ACM} Trans. Design Autom. Electr. Syst.}, 8(1):55--80, 2003.
\newblock \href {https://doi.org/10.1145/606603.606607}
  {\path{doi:10.1145/606603.606607}}.

\bibitem{floorplanchip}
Xiaoke Zhu, Changwen Zhuang, and Y.~Kajitani.
\newblock A general packing algorithm based on single-sequence.
\newblock volume~2, pages 1257 -- 1261 Vol.2, 07 2004.
\newblock \href {https://doi.org/10.1109/ICCCAS.2004.1346402}
  {\path{doi:10.1109/ICCCAS.2004.1346402}}.

\bibitem{1346395}
Changwen Zhuang, Xiaoke Zhu, Y.~Takashima, S.~Nakatake, and Y.~Kajitani.
\newblock An algorithm for checking slicing floorplan based on hpg and its
  application.
\newblock In {\em 2004 International Conference on Communications, Circuits and
  Systems (IEEE Cat. No.04EX914)}, volume~2, pages 1223--1227 Vol.2, 2004.
\newblock \href {https://doi.org/10.1109/ICCCAS.2004.1346395}
  {\path{doi:10.1109/ICCCAS.2004.1346395}}.

\end{thebibliography}
\end{document}